\algnewcommand\Input{\item[\textbf{Input:}]} 
\newtheorem{theorem}{Theorem}
\newtheorem{lemma}{Lemma}
\def\E {\text{E}}
\def\vech {\text{vech}}
\def\tr {\text{tr}}
\def\cov {\text{cov}}
\DeclareMathOperator{\logit}{logit}
\def\e{{\rm e}}                        
\def\tr{\text{\rm tr}}
\def\vec{\text{\rm vec}}
\def\cov{\text{\rm cov}}
\def\Var{\text{\rm Var}}
\def\logit{\text{\rm logit}}
\def\KL{\text{\rm KL}}
\def\F{\text{F}}
\def\S{\text{S}}
\def\E{\text{\mathbb E}}
\def\tr{\text{\rm tr}}
\def\diag{\text{\rm diag}}
\def\dg{\text{\rm dg}}
\def\vech{\text{\rm vech}}
\def\L{{\mathcal{L}}}
\def\N{{\text{N}}}
\def\IG{{\text{IG}}}
\def\E{{\text{E}}}
\def\SN{{\text{SN}}}
\def\bfone{{\bf 1}}
\begin{document}

\begin{frontmatter}
%%%%%%%%%%%%%%%%%%%%%%%%%%%%%%%%%%%%%%%%%%%%%%
%%                                          %%
%% Enter the title of your article here     %%
%%                                          %%
%%%%%%%%%%%%%%%%%%%%%%%%%%%%%%%%%%%%%%%%%%%%%%
\title{Weighted Fisher divergence for 
high-dimensional Gaussian \\
variational inference}
%\title{A sample article title with some additional note\thanksref{T1}}
\runtitle{Weighted Fisher divergence for 
high-dimensional Gaussian
variational inference}
%\thankstext{T1}{A sample of additional note to the title.}

\begin{aug}
%%%%%%%%%%%%%%%%%%%%%%%%%%%%%%%%%%%%%%%%%%%%%%%
%% ORCID can be inserted by command:         %%
%% \orcid{0000-0000-0000-0000}               %%
%%%%%%%%%%%%%%%%%%%%%%%%%%%%%%%%%%%%%%%%%%%%%%%
\author[A]{\fnms{Aoxiang}~\snm{Chen}\ead[label=e1]{e0572388@u.nus.edu}},
\author[B]{\fnms{David J.}~\snm{Nott}\ead[label=e2]{standj@nus.edu.sg}}
\and
\author[C]{\fnms{Linda S. L.}~\snm{Tan}\ead[label=e3]{statsll@nus.edu.sg}}

%%%%%%%%%%%%%%%%%%%%%%%%%%%%%%%%%%%%%%%%%%%%%%
%% Addresses                                %%
%%%%%%%%%%%%%%%%%%%%%%%%%%%%%%%%%%%%%%%%%%%%%%
\address[A]{Department of Statistics and Data Science, National University of Singapore\printead[presep={\ }]{e1}.}
\address[B]{Department of Statistics and Data Science, National University of Singapore\printead[presep={\ }]{e2}.}
\address[C]{Department of Statistics and Data Science, National University of Singapore\printead[presep={\ }]{e3}.}
\end{aug}

\begin{abstract}
Bayesian inference has many advantages for complex models, but standard Monte Carlo methods for summarizing the posterior  can be computationally demanding, and it is attractive to consider optimization-based variational methods. Our work considers Gaussian approximations with sparse precision matrices which are tractable to optimize in high-dimensions. The optimal Gaussian approximation is usually defined as being closest to the posterior in Kullback-Leibler divergence, but it is useful to consider other divergences when the Gaussian assumption is crude, to capture important posterior features for given applications. Our work studies the weighted Fisher divergence, which focuses on gradient differences between the target posterior and its approximation, with the Fisher and score-based divergences as special cases. We make three main contributions. First, we compare approximations for weighted Fisher divergences under mean-field assumptions for Gaussian and non-Gaussian targets with Kullback-Leibler approximations. Second, we go beyond mean-field and consider approximations with sparse precision matrices reflecting posterior conditional independence structure for hierarchical models. Using stochastic gradient descent to enforce sparsity, we develop two approaches to minimize the Fisher and score-based divergences, based on the reparametrization trick and a batch approximation of the objective. Finally, we study the performances of our methods using logistic regression, generalized linear mixed models and stochastic volatility models. 
\end{abstract}

\begin{keyword}
\kwd{Fisher divergence}
\kwd{Score-based divergence}
\kwd{Stochastic gradient descent}
\kwd{Gaussian variational approximation}
\end{keyword}
\end{frontmatter}

\section{Introduction}
Bayesian inference is a powerful tool for quantifying uncertainty, but it is demanding to implement for two reasons. First, specifying a full probabilistic model for all unknowns and observables requires careful thought, and components of the model need to be checked against the data. Second, Bayesian computations are difficult, requiring approximation of high-dimensional integrals. For many Bayesian models, exact posterior inference is infeasible, and a variety of numerical methods for summarizing the posterior are used in practice, such as Markov chain Monte Carlo (MCMC) and variational inference (VI). MCMC is often asymptotically unbiased, in that we can estimate posterior quantities as precisely as we wish with a large enough number of iterations, although certain variants (e.g. non-reversible methods) may incur a small bias. While MCMC is often treated as the gold standard for posterior estimation, its computational cost can be prohibitively high for large datasets or complex models \citep{Robert1999, Maclaurin2015}. On the other hand, VI reformulates posterior approximation into an optimization problem by minimizing a divergence between the true posterior and a simpler variational distribution. This enables faster and more scalable inference, leveraging advances in optimization algorithms \citep{Blei2017}. As a result, VI is increasingly popular for its computational efficiency in large-scale problems.

The performance of VI is largely determined by the family of variational approximations chosen, optimization technique, and divergence characterizing discrepancy between the true posterior and variational density. Much of the VI literature has focused on improving expressiveness of the variational family and enhancing optimization methods, often using Kullback-Leibler divergence (KLD) as a measure of approximation quality. To better capture dependence structure among variables, which can be especially strong in hierarchical models, partially factorized VI \citep{Goplerud2025} or structured variational approximations that mimic the true dependency structure \citep{Hoffman2015, Tan2018, Durante2019, Tan2021} can be employed. More recently, flow-based methods which transform an initial simple distribution into more flexible forms through a series of invertible transformations have been introduced \citep{Rezende2014, Dinh2017, Agrawal2024}. These approaches allow VI to capture highly complex posterior distributions, significantly enhancing the flexibility of the inference.

Despite the popularity of KLD, studying alternatives is important, particularly when using simple variational families which may be employed for tractability in high-dimensional problems. These approximations may not be capable of matching the posterior closely, and choosing an appropriate divergence can help to capture the most important features of the posterior for a given application. A family of divergences including KLD as a special case is the R\'{e}nyi's $\alpha$ family \citep{Li2016}, where $\alpha$ can be adjusted to give Hellinger distance ($\alpha =0.5$), $\chi^2$-divergence ($\alpha =2$) and KLD ($\alpha = 1$). While $\alpha$ can help to balance between mode-seeking and mass-covering behavior, the most practical methods for optimizing the variational R\'{e}nyi bound use biased stochastic gradients when $\alpha\neq 1$. Stein divergence has also emerged as a powerful objective for VI. \cite{Ranganath2016} introduced operator variational inference, a minimax approach that optimizes Stein discrepancies by constructing variational objectives based on Stein operators. \cite{Liu2016stein} developed Stein variational gradient descent, which uses kernelized Stein discrepancies to iteratively transform particles toward the posterior. In this article, we explore use of the weighted Fisher divergence in Gaussian VI, focusing on the Fisher and score-based divergences as special cases. The definitions and motivations for studying these divergences are presented below.

\subsection{Weighted Fisher divergence} \label{sec_weighted_FD}
Let $p(y|\theta)$ be the likelihood of observed data $y$, where $\theta \in \mathbb{R}^d$ is an unknown model parameter. Consider Bayesian inference with a prior density $p(\theta)$. In classical variational inference \citep{Ormerod2010, Blei2017}, the true posterior $p(\theta|y) = p(y|\theta) p(\theta)/p(y)$ is approximated with a more tractable density $q(\theta)$ by minimizing the KLD between them, where
\[
\KL(q\|p) = \int q(\theta) \log \frac{q(\theta)}{p(\theta|y)} d\theta.
\]
Let $\E_q$ denote expectation with respect to $q(\theta)$. As $\log p(y) = \KL(q\|p) + \mathcal{L}$, where $\mathcal{L} = \E_q \{  \log p(y, \theta) - \log q(\theta) \}$,  minimizing 
the KLD is equivalent to maximizing an evidence lower bound $\mathcal{L}$ on $\log p(y)$, which does not depend on normalizing constant of the true posterior.

Score matching \citep{Hyvarinen2005} focuses instead on closeness between gradients of the log densities with respect to the variable $\theta$, although the score function refers conventionally to gradient of the log-likelihood with respect to the parameter. A form of such discrepancy is the weighted Fisher divergence \citep{Barp2022}, defined as 
\begin{equation*}
\begin{aligned}
S_M(q\|p) &= \int q(\theta) \bigg \| \nabla_\theta \log \frac{q(\theta)}{p(\theta|y) } \bigg\|_M^2  d\theta,
\end{aligned}
\end{equation*}
where $\| \cdot \|_M$ is the $M$-weighted vector norm defined as $\| z \|_M=\sqrt{z^{\top}M z}$ and $M$ is a positive semi-definite matrix. Like KLD, $S_M(q\|p)$ is asymmetric, non-negative, and vanishes when $q(\theta) = p(\theta|y)$. Let $h(\theta) = p(y|\theta)p(\theta)$. Then $ \nabla_\theta \log p(\theta|y) = \nabla_\theta \log h(\theta)$, which is independent of the unknown normalizing constant $p(y)$.  Similarly, if $q(\theta)$ contains an unknown normalizing constant, this is not required to evaluate the weighted Fisher divergence. Unlike the evidence lower bound, the weighted Fisher divergence provides a direct measure of the distance between the true posterior and variational density.

When $M$ is the identity matrix $I$, $S_{I}(q\|p)$ is known as {\em Fisher divergence} \citep[FD, ][]{Hyvarinen2005}, denoted hereafter as $F(q\|p)$. When $q(\theta)$ is $\N(\mu, \Sigma)$ and $M$ is its covariance matrix $\Sigma$, $S_{\Sigma}(q\|p)$ is known as {\em score-based divergence} (SD) in \citet{Cai2024a}, denoted as $S(q\|p)$ henceforth. \citet{Cai2024a} derived closed-form updates for Gaussian variational parameters in a batch and match (BaM) algorithm based on the SD, and showed that $S(q\|p)$ is affine invariant while $F(q\|p)$ is not. This means that  $S(\Tilde{q} \| \Tilde{p})=S(q\|p)$ if $\Tilde{p}$ and $\Tilde{q}$ denote the densities of $p$ and $q$ respectively after an affine transformation of $\theta$.

In sliced score matching \citep{Song2020}, the scores are projected onto randomly generated vectors $v$ before comparison for dimension reduction, and the weight matrix $M = \E(vv^\top)$. \cite{Liu2022} applied the weighted Fisher divergence in estimating the parameters of truncated densities, whose normalizing constants are intractable, and the weight function is the shortest distance between a data point and the boundary of the domain. The weighted Fisher divergence is also widely used in training score-based generative models \citep{Song2021}, where a forward diffusion and reverse-time process are defined through stochastic differential equations (SDEs). The scores are estimated via neural networks and trained using a time integrated weighted Fisher divergence, where the weight matrix depends on a function of time specified in the SDE \citep{Huang2021, Lu2022}. The above choices of $M$ are not directly applicable or lack intrinsic motivation in our setting, and hence we focus primarily on the FD and SD, as they represent natural and widely studied choices in VI. However, our results in Section \ref{sec:ord_div_Gaussian} also consider general constant weight matrices $M$, besides the FD and SD.

In recent years, there is increasing interest in use of the weighted Fisher divergence in VI. \cite{Huggins2018} showed that the Fisher divergence defined in terms  of the generalized $\ell_p$ norm is an upper bound to the $p$-Wasserstein distance, and its optimization ensures closeness of the variational density to the true posterior in terms of important point estimates and uncertainties. \cite{Yang2019} derived an iteratively reweighted least squares algorithm for minimizing the FD in exponential family based variational approximations, while \citet{Elkhalil2021} employed the factorizable polynomial exponential family as variational approximation in their Fisher autoencoder framework. \cite{Modi2023} developed Gaussian score matching variational inference with closed form updates, by minimizing the KLD between a target and Gaussian variational density subject to a matching score function constraint. For implicit variational families structured hierarchically, \citet{Yu2023} used the FD to reformulate the optimization objective into a minimax problem. \cite{Cai2024b} proposed a variational family built on orthogonal function expansions, and transformed the optimization objective into a minimum eigenvalue problem using the FD.

Our contributions in this article are fourfold. First, we study behavior of the weighted Fisher divergence in mean-field Gaussian VI for Gaussian and non-Gaussian targets, showing its tendency to underestimate the posterior variance more severely than KLD. Second, we develop Gaussian VI for high-dimensional hierarchical models for which posterior conditional independence structure is captured via a sparse precision matrix. Sparsity is enforced by using stochastic gradient descent (SGD), and two distinct approaches are proposed for minimizing the FD and SD. Algorithms based on unbiased gradients computed using the reparameterization trick \citep{Kingma2014} are denoted as FDr and SDr (``r'' for reparameterization trick), while algorithms that rely on a batch approximation of the objective at each iteration are denoted by FDb and SDb (``b" for batch approximation). Third, we study the variance of unbiased gradient estimates computed using the reparametrization trick, and limiting behavior of the batch approximated FD and SD under mean-field. Finally, we present extensive experiments demonstrating that methods based on the reparameterization trick (FDr and SDr) suffer from high variations in gradients and perform poorly relative to baselines such as KLD and BaM. In contrast, methods based on the batch approximation (FDb and SDb) converge more rapidly and scale more efficiently to high-dimensional models.

This article is organized as follows. We study the quality of posterior mean, mode and variance approximations for Gaussian and non-Gaussian targets in Sections \ref{sec:ord_div_Gaussian} and \ref{sec:ord_div_non_Gaussian} respectively, when using the weighted Fisher divergence in VI. Section \ref{sec_sparse_GVA} introduces Gaussian VI for hierarchical models by capturing posterior conditional independence via a sparse precision matrix. Two SGD approaches for minimizing the weighted Fisher divergence are proposed in Sections \ref{sec_sgd_reparametrization trick} and \ref{sec_sgd_batch_approximation}, based respectively on the reparametrization trick and batch approximation. Experimental results are discussed in Section \ref{sec_applications} with applications to logistic regression, generalized linear mixed models (GLMMs) and stochastic volatility models. Section \ref{sec_conclusion} concludes the paper with a discussion.

\section{Ordering of divergences for Gaussian target}  \label{sec:ord_div_Gaussian}
Accurate estimation of the posterior variance is important in VI, as it affects uncertainty quantification in Bayesian inference. Here, we establish an ordering of the weighted Fisher and KL divergences according to the estimated posterior variance when the target $p(\theta|y)$ is $\N(\nu, \Lambda^{-1})$ with a precision matrix $\Lambda$. All divergences considered can recover the true mean $\nu$ and precision matrix $\Lambda$ when the variational family is also Gaussian with a full covariance matrix. However, the computation cost of optimizing a full-rank Gaussian variational approximation can be prohibitive for high-dimensional models. A widely used alternative is the mean-field Gaussian variational approximation, $q(\theta) = \N(\mu, \Sigma)$, with a diagonal covariance matrix $\Sigma$. The mean-field assumption simplifies the optimization but tends to underestimate the true posterior variance under KLD \citep{Blei2017, Tan2018, Giordano2018}. Here, we examine the severity of posterior variance underestimation under the weighted Fisher divergence compared to KLD.

Our results in this section generalize similar results in \cite{Margossian2024} from SD to the general class of weighted Fisher divergences. For KLD under the mean-field assumption, \cite{Margossian2024} showed that the posterior mean can be recovered ($\hat{\mu}=\nu$) and the optimal variance parameter is 
\begin{equation*}
\hat{\Sigma}^{\text{KL}}_{ii} = 1/\Lambda_{ii} \quad \text{for} \quad i=1, \dots, d.
\end{equation*}
Thus, the precision is matched by the variational density, but the variance is underestimated. Lemma \ref{Lem - weighted Fisher div betw Gaussians} presents the weighted Fisher divergence for a general  weight matrix $M$, which is $I_d$ in FD and $\Sigma$ in SD.

\begin{lemma} \label{Lem - weighted Fisher div betw Gaussians}
The $M$-weighted Fisher divergence between a Gaussian target $p(\theta|y) = \N(\theta|\nu, \Lambda^{-1})$ and Gaussian variational approximation $q(\theta) = \N(\theta|\mu, \Sigma)$ is 
\begin{align*}
S_M(q \| p) &= \tr(\Sigma^{-1} M) + \tr(\Lambda M \Lambda \Sigma) - 2\tr(M\Lambda)\\ &\quad + (\mu - \nu)^{\top}\Lambda M \Lambda (\mu - \nu).
\end{align*}
If $\Sigma$ is a diagonal matrix, then 
\begin{align*}
S_M(q \| p) &= \sum_{i=1}^d \{\Sigma_{ii}^{-1} M_{ii} + (\Lambda M \Lambda)_{ii} \Sigma_{ii} \} - 2 \tr(M\Lambda) \\
&\quad + (\mu - \nu)^{\top}\Lambda M \Lambda (\mu - \nu).
\end{align*}
\end{lemma}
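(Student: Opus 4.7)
The plan is to directly compute the expectation in the definition of $S_M(q\|p)$ by exploiting the fact that, for Gaussian $q$ and $p$, the score difference is affine in $\theta$, so the integrand is quadratic and its $q$-expectation reduces to traces against $\Sigma$ and a mean-shift term.

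First, I would write the score functions explicitly:
\begin{align*}
\nabla_\theta \log q(\theta) &= -\Sigma^{-1}(\theta-\mu),\\
\nabla_\theta \log p(\theta|y) &= -\Lambda(\theta-\nu),
\end{align*}
so that the score difference is $g(\theta) := \nabla_\theta\log q(\theta) - \nabla_\theta\log p(\theta|y) = (\Lambda-\Sigma^{-1})\theta + \Sigma^{-1}\mu - \Lambda\nu$.

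Next I would apply the standard identity $\E_q[g^\top M g] = \tr(M\,\cov_q(g)) + \E_q(g)^\top M\, \E_q(g)$. Under $\theta \sim \N(\mu,\Sigma)$, the mean is $\E_q(g) = \Lambda(\mu-\nu)$, giving the mean-shift contribution $(\mu-\nu)^\top \Lambda M \Lambda (\mu-\nu)$. For the covariance, a direct expansion yields $\cov_q(g) = (\Lambda-\Sigma^{-1})\Sigma(\Lambda-\Sigma^{-1}) = \Lambda\Sigma\Lambda - 2\Lambda + \Sigma^{-1}$, and then the cyclic property of the trace gives $\tr(M\cov_q(g)) = \tr(\Sigma^{-1}M) + \tr(\Lambda M\Lambda\Sigma) - 2\tr(M\Lambda)$. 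Assembling these pieces yields the first displayed identity.

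For the second (diagonal $\Sigma$) claim, I would observe that $\Sigma^{-1}$ is then also diagonal with entries $1/\Sigma_{ii}$, so $\tr(\Sigma^{-1}M) = \sum_{i=1}^d M_{ii}/\Sigma_{ii}$, and likewise $\tr(\Lambda M \Lambda \Sigma) = \sum_{i=1}^d (\Lambda M \Lambda)_{ii}\Sigma_{ii}$, while the remaining two terms are unchanged; substituting gives the stated expression.

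I do not anticipate any real obstacle: the argument is a direct moment computation. The only place to be careful is bookkeeping around $(\Lambda-\Sigma^{-1})\Sigma(\Lambda-\Sigma^{-1})$, where the cross terms must simplify as $\Lambda\Sigma\Sigma^{-1} = \Lambda$ and $\Sigma^{-1}\Sigma\Sigma^{-1} = \Sigma^{-1}$, and to use cyclicity of the trace (plus symmetry of $M$ and $\Lambda$) to put the answer in the form stated in the lemma.
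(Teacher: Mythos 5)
Your proposal is correct and follows essentially the same route as the paper: both reduce the score difference to an affine function of $\theta$ (the paper centers it as $(\Sigma^{-1}-\Lambda)(\theta-\mu)-\Lambda(\mu-\nu)$, which is just your mean/covariance decomposition written out), then evaluate the Gaussian quadratic-form expectation via the trace identity and simplify with cyclicity. The diagonal case is handled identically in both.
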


From Lemma \ref{Lem - weighted Fisher div betw Gaussians}, $\nabla_\mu S_M(q \| p) = 2\Lambda M \Lambda (\mu - \nu)$. Thus, $\nabla_\mu S_M(q \| p) = 0$ implies $\hat{\mu} = \nu$, and the true posterior mean is recovered for any $M$-weighted Fisher divergence where $M$ is independent of $\mu$. Under the mean-field assumption, at this optimal value $\hat{\mu}$,
\[
S_M(q \| p) = \sum_{i=1}^d \{\Sigma_{ii}^{-1} M_{ii} + (\Lambda M \Lambda)_{ii} \Sigma_{ii} \} - 2 \tr(M\Lambda).
\]
If the weight $M$ is independent of $\Sigma$, then
$\nabla_{\Sigma_{ii}} S_M(q \| p) = (\Lambda M \Lambda)_{ii} - M_{ii} / \Sigma_{ii}^2 = 0$ implies 
\[
\hat{\Sigma}_{ii} = \sqrt{M_{ii} / (\Lambda M \Lambda)_{ii}} \quad \text{for} \quad i=1, \dots, d.
\]
Thus a closed form solution exists for any $M$ independent of $\Sigma$. Moreover, if $M$ is a diagonal matrix, then 
\begin{equation} \label{cov-M}
\hat{\Sigma}_{ii} = \sqrt{\frac{M_{ii}}{\sum_{j=1}^d M_{jj} \Lambda_{ij}^2}} \quad \text{for} \quad i=1, \dots, d.
\end{equation}
When $M_{ii}=1$ $\forall$ $i$, we recover the FD for which the optimal variance parameters are
\begin{equation}\label{cov-Fisher}
\hat{\Sigma}^{\text{F}}_{ii} = \frac{1}{ \sqrt{\sum_{j=1}^d \Lambda_{ij}^2 } }  \quad \text{for} \quad i=1, \dots, d.
\end{equation}

Optimal variational parameters for SD under the mean-field assumption have been presented in \cite{Margossian2024}, and a discussion is included here for completeness. Plugging $M=\Sigma$ in Lemma \ref{Lem - weighted Fisher div betw Gaussians},  
\begin{equation*}
\begin{aligned}
S(q\|p) &=d + \sum_{i=1}^d \sum_{j=1}^d \Sigma_{ii} \Sigma_{jj} \Lambda_{ij}^2 - 2 \sum_{i=1}^d \Sigma_{ii} \Lambda_{ii} ,
\end{aligned}
\end{equation*}
at the optimal value $\hat{\mu}$. Let $\mathbf{s} = (s_1, \dots, s_d)^\top $ such that $s_i = \Sigma_{ii}\Lambda_{ii} \geq 0$, and $H$ be a $d \times d$ symmetric matrix with $H_{ij} = \Lambda_{ij}^2/(\Lambda_{ii}\Lambda_{jj})$. Then $S(q\|p) = d + 2 F(\mathbf{s})$, where 
\begin{equation}\label{cov-score}
F(\mathbf{s}) = \frac{1}{2} \mathbf{s}^{\top} H \mathbf{s} - \mathbf{1}^{\top} \mathbf{s}.
\end{equation}
Thus the optimal  $\hat{\Sigma}^{\text{S}}_{ii}$ that minimizes $S(q \| p)$ can be obtained by solving a non-negative quadratic program (NQP) for $\mathbf{s}$. NQP is the problem of minimizing the quadratic objective function in \eqref{cov-score} subject to the constraint $s_i \geq 0$ $\forall$ $i$. 
Since $\Lambda$ is positive definite, 
\[
x^\top H x = \sum_{i=1}^d \sum_{j=1}^d (x_i/\Lambda_{ii}) \Lambda_{ij}^2 (x_j/\Lambda_{jj}) = y^\top \Lambda y > 0
\]
for any $x = (x_1, \dots, x_d)^\top \in \mathbb{R}^d$ and $y = (x_1/\Lambda_{11}, \dots, x_d/\Lambda_{dd})^\top$. Thus $H$ is symmetric positive definite, which implies that $F(\mathbf{s})$ is bounded below and its optimization is convex. However, there is no analytic solution for the global minimum due to the non-negativity constraints and iterative solutions are required \citep{Sha2002}. The Karush-Kuhn-Tucker (KKT) conditions are first derivative tests that can be used to check whether a solution returned by an iterative solver is indeed a local optimum. For the NQP in \eqref{cov-score}, the KKT conditions state that $\forall$ $i=1, \dots, d$, either (a) $s_i = 0$ and $(Hs)_i > 1$ or (b) $s_i > 0$ and $(Hs)_i = 1$. Note that $\nabla_{ \mathbf{s}} F( \mathbf{s}) = H \mathbf{s} - \bfone$. These conditions correspond to cases where the constraint is active or inactive at the optimum. Case (a) implies $\hat{\Sigma}_{ii}^{\text{S}} = 0$, meaning that the variational density collapses to a point estimate in the $i$th dimension. Note that KLD and FD do not face this issue of ``variational collapse". Case (b) implies 
\begin{equation} \label{Hsi=1}
\begin{aligned}
(H\mathbf{s})_i =& \sum_{j=1}^d H_{ij} s_j = \sum_{j=1}^d \frac{\Lambda_{ij}^2}{\Lambda_{ii} \Lambda_{jj}} \Sigma_{jj} \Lambda_{jj} = 1\\
&\implies  \sum_{j=1}^d \Lambda_{ij}^2 \hat{\Sigma}_{jj}^{\text{S}} = \Lambda_{ii}.
\end{aligned}
\end{equation}

\begin{figure*}[htb!]
\centering
\includegraphics[width=6.597in]{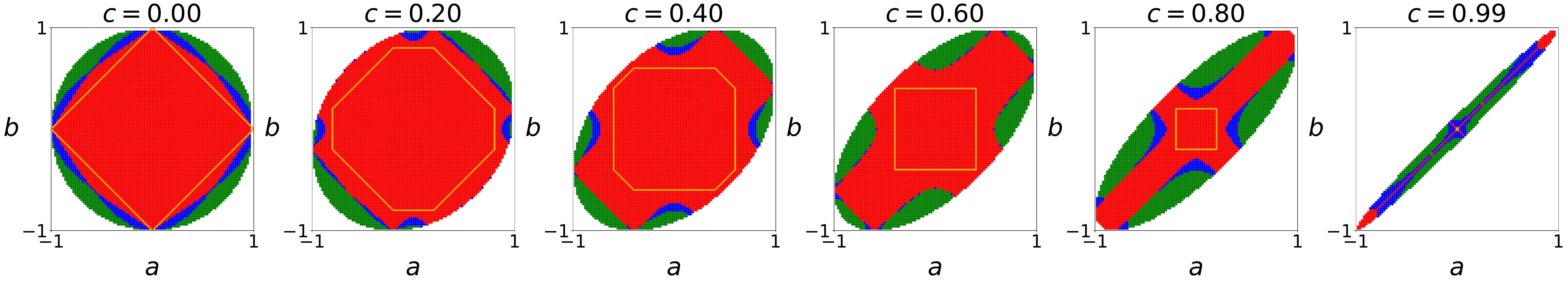}
\caption{Variance parameter comparisons for FD and SD. The red, blue and green regions indicate where $\Sigma_{ii}^{S} \leq \Sigma_{ii}^{F}$ for all cases, only two cases and only one case respectively. The orange-bordered region indicates where $\Lambda$ is diagonally dominant.}\label{com_fisher_score}
\end{figure*}

Next, we investigate how the variance parameters $\{\Sigma_{ii}\}$ obtained by minimizing the weighted Fisher divergence compare to those obtained by minimizing the KLD.

\begin{theorem} \label{thm1}
Suppose the target is a multivariate Gaussian with  precision matrix $\Lambda$, and the variational family is Gaussian with diagonal covariance matrix $\Sigma$. Let $\hat{\Sigma}^{\text{KL}}_{ii}$, $\hat{\Sigma}^{\text{M}}_{ii}$ and $\hat{\Sigma}^{\text{S}}_{ii}$ denote the optimal value of the $i$th diagonal element of $\Sigma$ obtained by minimizing the KL, $M$-weighted Fisher and score-based divergences respectively, where $M$ is a positive definite diagonal matrix independent of $\Sigma$. Then
\begin{equation*}
\begin{aligned}
\hat{\Sigma}^{\text{M}}_{ii} \leq\hat{\Sigma}^{\text{KL}}_{ii} \quad \text{and} \quad 
\hat{\Sigma}^{\text{S}}_{ii} \leq\hat{\Sigma}^{\text{KL}}_{ii} \quad \text{for} \quad i=1, \dots, d,
\end{aligned}
\end{equation*}
and $\exists \, i \in \{1, \dots, d\}$ such that $\hat{\Sigma}_{ii}^{M}<\hat{\Sigma}_{ii}^{\text{KL}}$ and $\hat{\Sigma}_{ii}^{S}<\hat{\Sigma}_{ii}^{\text{KL}}$.
\end{theorem}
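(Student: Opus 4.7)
The plan is to invoke the closed-form optima derived immediately before the theorem statement: $\hat{\Sigma}^{\text{KL}}_{ii}=1/\Lambda_{ii}$, the expression $\hat{\Sigma}^{\text{M}}_{ii}=\sqrt{M_{ii}/\sum_{j}M_{jj}\Lambda_{ij}^{2}}$ from \eqref{cov-M}, and the KKT dichotomy for the score-based NQP, under which at each index either case (a) $\hat{\Sigma}^{\text{S}}_{ii}=0$ holds or case (b) holds with \eqref{Hsi=1}. I would establish the two weak inequalities in turn, then produce a single index $i$ at which both strict inequalities hold.

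For the $M$-weighted bound, square $\hat{\Sigma}^{\text{M}}_{ii}\leq 1/\Lambda_{ii}$ and clear denominators; it reduces to $M_{ii}\Lambda_{ii}^{2}\leq \sum_{j=1}^{d}M_{jj}\Lambda_{ij}^{2}$. The $j=i$ summand on the right equals the left-hand side, so the assertion is $\sum_{j\neq i}M_{jj}\Lambda_{ij}^{2}\geq 0$, which is immediate since $M$ is positive definite with non-negative diagonal. The same bookkeeping reveals that the inequality at $i$ is strict precisely when some off-diagonal entry $\Lambda_{ij}$ with $j\neq i$ is non-zero.

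For the score-based bound, case (a) gives $\hat{\Sigma}^{\text{S}}_{ii}=0<1/\Lambda_{ii}$ trivially. In case (b), isolating the $j=i$ term in \eqref{Hsi=1} yields
\[
\hat{\Sigma}^{\text{S}}_{ii} \;=\; \frac{1}{\Lambda_{ii}} \;-\; \frac{1}{\Lambda_{ii}^{2}}\sum_{j\neq i}\Lambda_{ij}^{2}\,\hat{\Sigma}^{\text{S}}_{jj},
\]
and the non-negativity constraints $\hat{\Sigma}^{\text{S}}_{jj}\geq 0$ built into the NQP immediately give $\hat{\Sigma}^{\text{S}}_{ii}\leq 1/\Lambda_{ii}$, with strict inequality exactly when some summand in the correction term is positive.

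The main obstacle is exhibiting a common index at which \emph{both} strict inequalities hold, which is only meaningful when $\Lambda$ is not diagonal (otherwise the three divergences coincide and equality holds at every index). Fix $i_0\neq j_0$ with $\Lambda_{i_0 j_0}\neq 0$, and split on $\hat{\Sigma}^{\text{S}}_{j_0 j_0}$. If $\hat{\Sigma}^{\text{S}}_{j_0 j_0}=0$, take $i=j_0$: score-based strictness follows from case (a), and $M$-weighted strictness follows from $\Lambda_{j_0 i_0}\neq 0$ via the row-wise criterion above. Otherwise $\hat{\Sigma}^{\text{S}}_{j_0 j_0}>0$, take $i=i_0$: either case (a) holds at $i_0$ and strictness is immediate, or case (b) holds and the $j=j_0$ term in the displayed identity is strictly positive; in both sub-cases $\Lambda_{i_0 j_0}\neq 0$ delivers $M$-weighted strictness at $i_0$. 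The delicate point is the possibility of variational collapse in SD, which has no analogue for KLD or FD and is what forces the KKT case split and the matching argument above.
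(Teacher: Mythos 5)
Your proof is correct and follows essentially the same route as the paper: the closed form \eqref{cov-M} compared termwise against $1/\Lambda_{ii}$ for the $M$-weighted divergence, and the KKT case split with \eqref{Hsi=1} for the score-based one. The only substantive difference is that you go further than the paper in exhibiting a single common index at which both strict inequalities hold (the paper establishes the two strictness claims at possibly different indices), so your matching argument via the pair $(i_0,j_0)$ with $\Lambda_{i_0 j_0}\neq 0$ is a welcome refinement rather than a deviation.
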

\begin{proof}
We first prove $\hat{\Sigma}^{\text{M}}_{ii}\leq\hat{\Sigma}^{\text{KL}}_{ii}$ $\forall \, i$. From \eqref{cov-M}, 
\begin{equation}\label{deri-F}
\begin{aligned}
\hat{\Sigma}^{\text{M}}_{ii} = \sqrt{\frac{M_{ii}}{\sum_{j=1}^d M_{jj} \Lambda_{ij}^2}}
\leq \sqrt{\frac{M_{ii}}{M_{ii} \Lambda_{ii}^2}} = \frac{1}{\Lambda_{ii}} = \hat{\Sigma}_{ii}^{\KL}.
\end{aligned}
\end{equation}
Since $\Lambda$ has at least one nonzero off-diagonal entry, $\exists \, i \in \{1, \dots, d\}$  such that the inequality in \eqref{deri-F} is strict. The proof for $\hat{\Sigma}^{\text{S}}_{ii} \leq \hat{\Sigma}^{\text{KL}}_{ii}$ is given in \cite{Margossian2024} and we include it here for entirety. From the KKT conditions discussed earlier, if case (a) applies, then $\hat{\Sigma}^{\text{S}}_{ii} = 0 < \hat{\Sigma}^{\text{KL}}_{ii}$. Otherwise, case (b) applies and \eqref{Hsi=1} implies that 
\begin{equation} \label{SB<KL}
\Lambda_{ii}^2 \hat{\Sigma}^{\text{S}}_{ii}  \leq  \sum_j \Lambda_{ij}^2 \hat{\Sigma}^{\text{S}}_{jj}  = \Lambda_{ii} \implies \hat{\Sigma}^{\text{S}}_{ii} \leq \frac{1}{\Lambda_{ii}} =\hat{\Sigma}^{\text{KL}}_{ii}.
\end{equation}
To obtain the strict inequality, note that if case (a) applies for at least one $i$, then $\hat{\Sigma}_{ii}^{S}<\hat{\Sigma}_{ii}^{KL}$ for such an $i$. Otherwise, case (b) applies $\forall$ $i$. Since $\Lambda$ has at least one nonzero off-diagonal entry, $\exists \, i \in \{1, \dots, d\}$ such that the first inequality in \eqref{SB<KL} is strict.
\end{proof}

From Theorem \ref{thm1}, both the weighted Fisher and score-based divergences tend to underestimate the posterior variance more severely than KLD under mean-field, but the ordering between FD and SD is more nuanced. If case (a) of the KKT conditions apply, then $ \hat{\Sigma}^{\text{S}}_{ii} = 0 < \hat{\Sigma}^{\text{F}}_{ii}$. If case (b) applies, then from  \eqref{cov-Fisher} and \eqref{Hsi=1},
\[
\begin{aligned}
\Lambda_{ii}^2 \hat{\Sigma}^{\text{S}}_{ii} &\leq \sum_{j=1}^d \Lambda_{ij}^2 \hat{\Sigma}^{\text{S}}_{jj} = \Lambda_{ii} \hat{\Sigma}^{\text{F}}_{ii} \sqrt{\sum\nolimits_{j=1}^d \Lambda_{ij}^2 } \\
&\implies \hat{\Sigma}^{\text{S}}_{ii} \leq \hat{\Sigma}^{\text{F}}_{ii} \frac{\sqrt{\sum_{j=1}^d \Lambda_{ij}^2 }}{\Lambda_{ii} }. 
\end{aligned}
\]
Moreover, if $\Lambda$ is a {\em diagonally dominant} matrix such that $\sum_{j\neq i} |\Lambda_{ij}| \leq |\Lambda_{ii}| \; \forall\; i$, then
\begin{equation*}
\begin{aligned}
\hat{\Sigma}^{\text{S}}_{ii} &\leq \hat{\Sigma}^{\text{F}}_{ii} \frac{\sqrt{\Lambda_{ii}^2 + \sum_{j\neq i} \Lambda_{ij}^2 }}{\Lambda_{ii} } 
\leq \hat{\Sigma}^{\text{F}}_{ii} \frac{\sqrt{\Lambda_{ii}^2 + (\sum_{j\neq i} |\Lambda_{ij}|)^2 }}{\Lambda_{ii} }\\ 
&\leq \hat{\Sigma}^{\text{F}}_{ii} \frac{\sqrt{\Lambda_{ii}^2 + \Lambda_{ii}^2 }}{\Lambda_{ii} } 
= \sqrt{2} \hat{\Sigma}^{\text{F}}_{ii}.
\end{aligned}
\end{equation*}  
Thus the ratio of $\hat{\Sigma}^{\text{S}}_{ii} / \hat{\Sigma}^{\text{F}}_{ii}$ is bounded by $\sqrt{2}$ $\forall\; i$ if $\Lambda$ is diagonally dominant.

For a more concrete comparison of posterior variance approximation based on FD and SD, consider a three-dimensional Gaussian target with precision matrix,
\begin{equation*}
\Lambda = 
\begin{bmatrix}
1 & a & b \\
a & 1 & c \\
b & c & 1
\end{bmatrix}.
\end{equation*}
For FD, $\hat{\Sigma}_{ii}^{\text{F}}$ can be obtained from \eqref{cov-Fisher}, while the splitting conic solver \citep[SCS,][]{Odonoghue2016} in the CVXPY Python package is used to solve the NQP in \eqref{cov-score} for SD. SCS is designed for convex optimization problems characterized by conic constraints, such as non-negativity. It decomposes the optimization into subproblems solved iteratively by operator-splitting techniques.

Fig \ref{com_fisher_score} illustrates how variance parameters obtained from FD and SD compare by varying the conditional correlations $a$, $b$ and $c$. Each plot represents a value of $c$. The colored regions represent configurations for which $\Lambda$ is positive definite, and there is no region where $\hat{\Sigma}_{ii}^{S} > \hat{\Sigma}_{ii}^{F}$ $\forall$ $i$. Variance estimates based on SD are more likely to exceed those based on FD when $a$, $b$ or $c$ has a large magnitude. In this example, $\hat{\Sigma}^{\text{S}}_{ii} /\hat{\Sigma}^{\text{F}}_{ii}$ can be bounded more tightly by 1 instead of only $\sqrt{2}$ when $\Lambda$ is diagonally dominant.

\section{Ordering of divergences for non-Gaussian target} \label{sec:ord_div_non_Gaussian}

\begin{figure*}[tb!]
\centering
\includegraphics[width=6.597in]{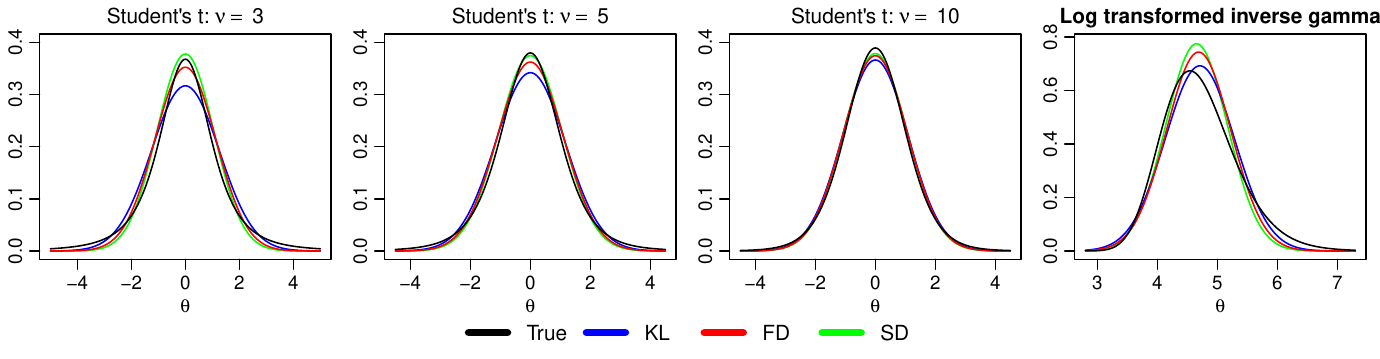}
\caption{Gaussian variational approximations for Student's $t$ and log transformed inverse gamma.}\label{studentt_invgam}
\end{figure*}

Next, we study the ordering of FD, SD and KLD in posterior mean, mode and variance estimation when the target distribution is non-Gaussian while the variational approximation is Gaussian. Theoretical analysis in this setting is complex and numerical methods are often required. We show that the true posterior mean is recoverable across all divergences for the multivariate Student's $t$, and an ordering of the mean, mode and variance estimation is established for the log transformed inverse gamma density. Otherwise, empirical comparisons are made by considering $p(\theta \mid y)$ as some univariate non-Gaussian density, and the variational density $q(\theta)$ as $\N(\mu, \sigma^2)$. Results in this section indicate that KLD estimates the mean most accurately and has the highest accuracy (as defined below) when the target is skewed, but the lowest accuracy when the target is symmetric and has heavy tails. SD captures the mode most accurately if the target density is skewed, but underestimates the posterior variance most severely.

Let $\mu_{*}$, $m_{*}$ and $\sigma^2_{*}$ denote the mean, mode and variance of the target density. To evaluate the performance of different divergences, we use the normalized absolute difference in mean and mode: $|\mu - \mu_{*}|/\sigma_{*}$ and $|\mu - m_{*}|/\sigma_{*}$, variance ratio: $\sigma^2 / \sigma^2_{*}$, and integrated absolute error : $\text{IAE}(q) = \int |q(\theta) - p(\theta|y)| d\theta \in [0, 2]$, which is invariant under monotone transformations of $\theta$. We define $\text{accuracy}(q) = 1 - \text{IAE}(q) / 2$, where a higher value indicates a more accurate approximation of the target. In the examples below, the VI objective function is tractable or can be computed numerically, and variational approximations are optimized using L-BFGS via {\tt optim} in {\tt R}.

\subsection{Student's $t$}
First, consider the multivariate Student's $t$ distribution, $t_\nu(m,S)$ for $\theta \in \mathbb{R}^d$ as the target, where
\[
p(\theta \mid y)=\frac{\Gamma\left(\frac{\nu+d}{2}\right) \left(1 + (\theta-m)^\top S^{-1}(\theta-m) / \nu \right)^{-\frac{\nu+d}{2}} }
{\Gamma\left(\frac{\nu}{2}\right)\,(\nu\pi)^{d/2}\,|S|^{1/2}}.
\]
The Student's $t$ is symmetric but has heavier tails than the Gaussian, and $\nu$, $m$ and $S$ denote the degrees of freedom, location parameter and scale matrix respectively. Theorem \ref{thm-t} shows that the true posterior mean or mode of the Student's $t$ is recoverable by a Gaussian variational approximation under all three divergences.

\begin{theorem}\label{thm-t}
Let $q(\theta)=\N(\mu,\Sigma)$ and $p(\theta \mid y)=t_\nu(m,S)$. Then $\mu = m$ is a stationary point of   the KLD, FD and SD between $q(\theta)$ and $p(\theta \mid y)$. 
\end{theorem}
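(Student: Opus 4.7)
The plan is to exploit the symmetry of both densities: the Student's $t$ target $p(\theta|y) = t_\nu(\theta|m,S)$ is symmetric about $m$, and at $\mu = m$ the Gaussian $q(\theta) = \N(\mu,\Sigma)$ is also symmetric about $m$. After rewriting each divergence as an expectation using the reparametrization $\theta = \mu + \epsilon$ with $\epsilon \sim \N(0,\Sigma)$, I would differentiate under the expectation with respect to $\mu$ and show that the resulting integrand, evaluated at $\mu = m$, is an odd function of $\epsilon$. Since $\epsilon \stackrel{d}{=} -\epsilon$, every such expectation vanishes, and hence $\mu = m$ is stationary.

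For the KLD, I would write $\KL(q\|p) = \E_\epsilon[\log q(\mu+\epsilon) - \log p(\mu+\epsilon|y)]$. Because $\log q(\mu+\epsilon) = -\tfrac{1}{2}\epsilon^\top\Sigma^{-1}\epsilon - \tfrac{1}{2}\log|2\pi\Sigma|$ is free of $\mu$, this collapses to $\nabla_\mu \KL(q\|p) = -\E_\epsilon[\nabla_\theta \log p(\theta|y)|_{\theta=\mu+\epsilon}]$. At $\mu = m$, the symmetry identity $\log p(m+\epsilon|y) = \log p(m-\epsilon|y)$ forces the score $\nabla_\theta \log p(m+\epsilon|y)$ to be odd in $\epsilon$, so the expectation vanishes.

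For the FD and SD, I would treat both divergences uniformly through $S_W(q\|p) = \E_\epsilon[(s_q(\epsilon) - s_p(\mu+\epsilon))^\top W (s_q(\epsilon) - s_p(\mu+\epsilon))]$ with $W = I_d$ for FD and $W = \Sigma$ for SD, where $s_q(\epsilon) := \nabla_\theta \log q(\theta)|_{\theta=\mu+\epsilon} = -\Sigma^{-1}\epsilon$ is $\mu$-free, and $s_p(\mu+\epsilon) := \nabla_\theta \log p(\theta|y)|_{\theta=\mu+\epsilon}$. Applying $\nabla_\mu$ inside the expectation and using $\nabla_\mu s_p(\mu+\epsilon) = H_p(\mu+\epsilon)$, where $H_p = \nabla_\theta^2 \log p(\cdot|y)$ is symmetric, yields
\[
\nabla_\mu S_W(q\|p) = -2\,\E_\epsilon\bigl[H_p(\mu+\epsilon)\, W\,(s_q(\epsilon) - s_p(\mu+\epsilon))\bigr].
\]
At $\mu = m$, symmetry of $p$ about $m$ gives $H_p(m+\epsilon) = H_p(m-\epsilon)$ (even in $\epsilon$), while $s_p(m+\epsilon)$ and $s_q(\epsilon) = -\Sigma^{-1}\epsilon$ are both odd in $\epsilon$. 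Since $W$ depends on neither $\mu$ nor $\epsilon$, the full integrand is odd in $\epsilon$ and the expectation is zero.

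The only real care points are the chain-rule calculation of $\nabla_\mu (s_q - s_p)$ and the interchange of $\nabla_\mu$ with the expectation; the latter is routine via dominated convergence given the polynomial-in-$\epsilon$ envelope of the Student's $t$ score and Hessian against the Gaussian weight, so no serious obstacle is anticipated. An alternative route that bypasses reparametrization is to write each divergence as $\int f(\theta-\mu)\, g(\theta)\, d\theta$ with $g$ a function symmetric about $m$ (the Student's $t$ density or one of its score/Hessian products) and conclude by a change of variables $\theta \mapsto 2m - \theta$ that the $\mu$-gradient integrates an odd integrand about $m$.
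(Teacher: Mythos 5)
Your proposal is correct and follows essentially the same route as the paper: reparametrize (or equivalently apply Lemma S\ref{lem_symmetry}), write each $\mu$-gradient as an expectation of a product of the Hessian of $\log p$ (even about $m$) with the score difference (odd about $m$), and conclude that the integrand is odd so the expectation vanishes at $\mu=m$. The only cosmetic difference is that you flag the dominated-convergence justification for differentiating under the integral, which the paper leaves implicit.
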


Next, we consider the univariate Student's $t$ as the target to compare the performances of different divergences in capturing the variance. For $\theta \sim t(\nu)$, $p(\theta|y) = \left( 1+ \theta^2/\nu \right)^{-(\nu+1)/2} \Gamma(\frac{\nu+1}{2})/(\sqrt{\pi\nu} \, \Gamma(\frac{\nu}{2}))$,
where $\nu \in \{3, 5, 10\}$ is the degrees of freedom. All divergences successfully capture mode of the target at 0, verifying Theorem \ref{thm-t}. From Table \ref{table-stu}, SD exhibits the most severe posterior variance underestimation, followed by FD and then KLD. In terms of the IAE, both FD and SD yield approximations with higher accuracy than KLD. Fig \ref{studentt_invgam} (first 3 plots) compares optimal variational densities with the target, and showing that KLD tends to underestimate the mass around the mode more severely than FD and SD.

\begin{table}[tb!]
\centering
\begin{tabular}{c|ccc|ccc}
\hline
& \multicolumn{3}{c}{$\sigma^2 / \sigma^2_{*}$} & \multicolumn{3}{|c}{accuracy} \\ 
$\nu$ & KLD & FD & SD & KLD & FD & SD 
\\ \hline
3 & \textbf{0.529} & 0.428 & 0.372 & 92.18 & \textbf{93.66} & 92.62 
\\ 
5 & \textbf{0.818} & 0.728 & 0.681 & 94.72 & 95.82 & \textbf{95.97} 
\\ 
10 & \textbf{0.950} & 0.909 & 0.889 & 97.01 & 97.55 & \textbf{97.73} 
\\  \hline
\end{tabular}
\caption{Results for Student's $t$ (best values highlighted in bold).}\label{table-stu}
\end{table}

\subsection{Log transformed inverse gamma}
Consider the normal sample model in \citet{Tan2024b}, where $y_i \mid \theta \sim \N(0, \exp(\theta))$ for $i=1, \dots, n$, with prior, $\exp(\theta) \sim \IG (a_0, b_0)$, and $a_0 = b_0 = 0.01$. The true posterior of $\exp(-\theta)$ is $\text{G}(a_1, b_1)$, where $a_1 = a_0 + n/2$ and $b_1 = b_0 + \sum_{i=1}^n y_i^2/2$. The true posterior mode, mean and variance of $\theta$ are $m_* = \log (b_1/a_1)$, $\mu_* = \log b_1 - \psi(a_1)$ and $\sigma_*^2 = \psi_1(a_1)$, where $\psi(\cdot)$ and $\psi_1(\cdot)$ denote the digamma and trigamma functions respectively.

This is a rare example where the FD, SD and evidence lower bound for the KLD can be derived in closed form. Moreover, the optimal variational parameters for all three divergences are available analytically, as given in Table \ref{table_ig_VA_par}. Note that $W_0(\cdot)$ denotes the principal branch of the Lambert W function \citep{Corless1996}. Theorem~\ref{thm-loggamma} shows that SD underestimates the variance most severely, followed by FD and then KLD. Moreover, SD yields the best estimate of the mode, while KLD estimates the mean most accurately, with FD lying in between.

\begin{table}[tb!]
\centering
\begin{tabular}{ccc}
\hline
KLD & $\hat{\sigma}_{\KL}^2 = \frac{1}{a_1}$ 
& $\hat{\mu}_{\KL} = \log \frac{b_1}{a_1} + \frac{1}{2a_1}$ 
\\[2mm]
FD & $\hat{\sigma}^2_{\F}=-2W_0\left(-\frac{1}{2(a_1+1)}\right)$ 
& $\hat{\mu}_{\F} =\log \frac{ b_1}{a_1 + 1} + \frac{3\hat{\sigma}^2_{\F}}{2}$
\\ [2mm]
SD &  
$\hat{\sigma}^2_{\S}=1 - W_0\left(\frac{\e a_1^2}{(a_1+1)^2}\right)$  & 
$\hat{\mu}_{\S} =\log \frac{ b_1}{a_1 + 1} + \frac{3\hat{\sigma}^2_{\S}}{2}$\\ 
\hline
\end{tabular}
\caption{Optimal variational parameters for log transformed inverse gamma.}\label{table_ig_VA_par}
\end{table}

\begin{theorem}\label{thm-loggamma}
Let $\hat{\mu}_{\KL}$, $\hat{\mu}_{\F}$, $\hat{\mu}_{\S}$, $\hat{\sigma}^2_{\KL}$, $\hat{\sigma}^2_{\F}$ and $\hat{\sigma}^2_{\S}$ denote the optimal mean and variance parameters that minimize the KLD, FD and SD respectively, when the target is a log transformed inverse gamma density and the variational approximation is Gaussian. Then 
\begin{gather*}
\hat{\sigma}^2_{\S} < \hat{\sigma}^2_{\F} < \hat{\sigma}^2_{\KL}< \sigma_*^2, 
\\
m_* < \hat\mu_{\S} < \hat\mu_{\F} < \hat\mu_{\KL} < \mu_*,
\end{gather*}
where $\mu_{*}$, $m_{*}$ and $\sigma^2_{*}$ denote the mean, mode and variance of the target.
\end{theorem}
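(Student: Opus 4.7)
The plan is to prove the variance chain $\hat\sigma^2_{\S}<\hat\sigma^2_{\F}<\hat\sigma^2_{\KL}<\sigma_*^2$ first, and then deduce the mean chain from it using the closed-form expressions in Table \ref{table_ig_VA_par} combined with classical bounds on $\log$ and $\psi$. The rightmost variance inequality $\hat\sigma^2_{\KL}=1/a_1<\psi_1(a_1)=\sigma_*^2$ is the standard trigamma bound, provable from $\psi_1(a)=\int_0^\infty te^{-at}/(1-e^{-t})\,dt$ and $t/(1-e^{-t})>1$. For $\hat\sigma^2_{\F}<\hat\sigma^2_{\KL}=1/a_1$, I would use that $x:=\hat\sigma^2_{\F}$ is the unique root in $(0,2)$ of $xe^{-x/2}=1/(a_1+1)$, with $g(t)=te^{-t/2}$ strictly increasing on $(0,2)$. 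When $a_1\ge 1/2$ this reduces to $g(1/a_1)>1/(a_1+1)$, equivalently $1+1/a_1>e^{1/(2a_1)}$, verified by Taylor expansion; when $a_1\in(e/2-1,1/2)$ the bound is trivial since $\hat\sigma^2_{\F}<2<1/a_1$.

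For the crux inequality $\hat\sigma^2_{\S}<\hat\sigma^2_{\F}$, I would set $\delta=1/(a_1+1)\in(0,2/e]$ and regard $v(\delta):=\hat\sigma^2_{\S}$ and $x(\delta):=\hat\sigma^2_{\F}$ as implicit functions of $\delta$, satisfying $(1-v)e^{-v}=(1-\delta)^2$ and $xe^{-x/2}=\delta$. Implicit differentiation together with the defining equations yields the clean expressions $x'(\delta)=x/[\delta(1-x/2)]$ and $v'(\delta)=2(1-v)/[(2-v)(1-\delta)]$, with $x(0)=v(0)=0$ and $x'(0)=v'(0)=1$. A Taylor expansion gives $x(\delta)-v(\delta)=\delta^2/4+O(\delta^3)>0$ for small $\delta>0$. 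A contradiction argument then extends this: if $x=v$ at some smallest $\delta_0>0$, substituting $x=v$ into the two derivative formulas reduces $x'(\delta_0)-v'(\delta_0)$ to $2(x-\delta_0)/[\delta_0(1-\delta_0)(2-x)]$, which is strictly positive because $xe^{-x/2}=\delta<x$ forces $x(\delta_0)>\delta_0$. Hence $(x-v)'(\delta_0)>0$, contradicting the minimality of $\delta_0$, so $v(\delta)<x(\delta)$ on $(0,2/e]$.

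For the mean chain, $\hat\mu_{\S}<\hat\mu_{\F}$ is immediate since $\hat\mu_{\S}$ and $\hat\mu_{\F}$ share the intercept $\log(b_1/(a_1+1))$ and inherit their ordering from $\hat\sigma^2_{\S}<\hat\sigma^2_{\F}$. The inequality $\hat\mu_{\KL}<\mu_*$ rearranges to the classical digamma bound $\log a_1-\psi(a_1)>1/(2a_1)$, provable via Binet's integral representation $\log a-\psi(a)=1/(2a)+2\int_0^\infty t/[(t^2+a^2)(e^{2\pi t}-1)]\,dt$ with the integral strictly positive. The remaining two, $\hat\mu_{\F}<\hat\mu_{\KL}$ and $m_*<\hat\mu_{\S}$, each reduce after substitution of the closed forms to a scalar comparison between $3\hat\sigma^2_{\F}/2$ (respectively $3\hat\sigma^2_{\S}/2$) and a combination of $\log(1+1/a_1)$ with $1/(2a_1)$; these can be settled by combining the defining Lambert-$W$ equation for the relevant variance with the elementary bounds $y/(1+y)<\log(1+y)<y$ for $y>0$.

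The main obstacle is clearly the central inequality $\hat\sigma^2_{\S}<\hat\sigma^2_{\F}$: both variances are defined implicitly through different Lambert-$W$ arguments, so a direct comparison is impossible. The ODE-based contradiction argument above sidesteps this by reducing the entire issue to a single clean identity evaluated at a hypothetical coincidence point. All remaining steps are either classical special-function inequalities or follow from the already-established variance ordering via elementary algebra.
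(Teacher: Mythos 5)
Your treatment of the variance chain is correct but takes a genuinely different route from the paper for the central inequality $\hat\sigma^2_{\S}<\hat\sigma^2_{\F}$. The paper exploits the univariate identity $S(\sigma^2)=\sigma^2 F(\sigma^2)$ to get, at the FD minimizer, $S'(\hat\sigma^2_{\F})=F(\hat\sigma^2_{\F})+\hat\sigma^2_{\F}F'(\hat\sigma^2_{\F})=F(\hat\sigma^2_{\F})>0$, the strict positivity coming from the fact that a symmetric Gaussian cannot equal the right-skewed target; since $S$ has a unique global minimum, this settles $\hat\sigma^2_{\S}<\hat\sigma^2_{\F}$ in one line. Your implicit-function comparison in $\delta=1/(a_1+1)$ is also valid: I checked that the defining equations $(1-v)e^{-v}=(1-\delta)^2$ and $xe^{-x/2}=\delta$, the derivative formulas $x'=x/[\delta(1-x/2)]$ and $v'=2(1-v)/[(2-v)(1-\delta)]$, the expansion $x-v=\delta^2/4+O(\delta^3)$, and the reduction of $x'-v'$ at a hypothetical first coincidence point to $2(x-\delta_0)/[\delta_0(1-\delta_0)(2-x)]>0$ are all correct (using $x\in(0,2)$ and $v\in(0,1)$ to keep the implicit functions smooth). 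It is considerably more work than the paper's observation, but it is self-contained and would survive in settings where the $S=\sigma^2F$ identity is unavailable.

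The gap is in the mean chain. The reductions $\hat\mu_{\S}<\hat\mu_{\F}$ (shared intercept plus the variance ordering) and $\hat\mu_{\KL}<\mu_*$ (digamma bound) are fine, but for $m_*<\hat\mu_{\S}$ and $\hat\mu_{\F}<\hat\mu_{\KL}$ the elementary bounds $y/(1+y)<\log(1+y)<y$ you invoke are quantitatively insufficient. For $m_*<\hat\mu_{\S}$ one must show $\hat\sigma^2_{\S}>\tfrac{2}{3}\log(1+1/a_1)$; replacing the logarithm by its upper bound $1/a_1$ would require $\hat\sigma^2_{\S}>\tfrac{2}{3a_1}$, which is false (at $a_1=0.51$ one has $\hat\sigma^2_{\S}\approx 0.757$ while $\tfrac{2}{3a_1}\approx 1.31$; the true threshold $\tfrac{2}{3}\log(1+1/a_1)\approx 0.724$ is cleared by only about $4\%$). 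Similarly, $\hat\mu_{\F}<\hat\mu_{\KL}$ requires $\hat\sigma^2_{\F}<\tfrac{2}{3}\log(1+1/a_1)+\tfrac{1}{3a_1}$, and the lower bound $\log(1+y)>y/(1+y)$ weakens the right side to $\tfrac{2}{3(a_1+1)}+\tfrac{1}{3a_1}\approx 1.10$ at $a_1=0.51$, while $\hat\sigma^2_{\F}\approx 1.22$, so that route fails outright. The paper instead evaluates $S'$ and $F'$ directly at the comparison points, showing $S'(c)<0$ for $c=\tfrac{2}{3}\log(1+1/a_1)$ via an analysis of $\log(1-c)+2c$ over the full range $c\in(0,\tfrac{2}{3}\log 3)$, and $F'(b)>0$ for a carefully chosen intermediate $b$ combined with the sharper bound $\log(1+x)>x/(1+x/2)$; both steps are delicate near $a_1=1/2$. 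You would need arguments of comparable sharpness to close these two inequalities.
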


To verify Theorem \ref{thm-loggamma}, we simulate $n=6$ observations by setting $\exp(\theta) = 225$. Table \ref{table_ig} shows that the ordering in mean, mode and variance estimation is consistent with Theorem \ref{thm-loggamma}. Overall, KLD has the highest accuracy followed by FD and then SD. A visualization is given in Fig \ref{studentt_invgam} (last plot).
\begin{table}[htb!]
\centering
\begin{tabular}{cccc}
\hline
& KLD & FD & SD \\ \hline
$|\mu - \mu_{*}|/\sigma_{*}$ & \textbf{0.015} & 0.048 & 0.102 \\ 
$|\mu - m_{*}|/\sigma_{*}$ & 0.265 & 0.231 & \textbf{0.177} \\
$\sigma^2 / \sigma^2_{*}$ & \textbf{0.845} & 0.732 & 0.674 \\
accuracy & \textbf{92.67} & 91.91 & 91.53 \\ 
\hline
\end{tabular}
\caption{Results for log transformed inverse gamma (best values highlighted in bold).}\label{table_ig}
\end{table}

\begin{table*}[htb!]
\centering
\begin{tabular}{c|ccc|ccc|ccc|ccc}
\hline
& \multicolumn{3}{c}{$|\mu - \mu_{*}|/\sigma_{*}$} & \multicolumn{3}{|c|}{$|\mu - m_{*}|/\sigma_{*}$} & \multicolumn{3}{c|}{$\sigma^2 / \sigma^2_{*}$} & \multicolumn{3}{c}{accuracy} \\ 
$(t, \lambda)$ & KLD & FD & SD & KLD & FD & SD & KLD & FD & SD & KLD & FD & SD \\ 
\hline
(1, 1) & \textbf{0.001} & 0.003 & 0.004 & 0.070 & 0.067 & \textbf{0.066} & \textbf{0.992} & 0.984 & 0.979 & 98.27 & 98.31 & \textbf{98.32}\\ 
(1, 2) & \textbf{0.006} & 0.031 & 0.064 & 0.255 & 0.230 & \textbf{0.197}  & \textbf{0.919} & 0.851 & 0.803 & 93.77 & \textbf{93.81} & 93.79 \\   
(1, 5) & \textbf{0.004} & 0.251 & 0.586 & 0.657 & 0.912 & \textbf{0.075} & \textbf{0.677} & 0.642 & 0.248 & \textbf{83.93} & 76.44 & 68.50 \\ 
(5, 1) & \textbf{0.004} & 0.251 & 0.586 & 0.657 & 0.912 & \textbf{0.075} & \textbf{0.677} & 0.642 & 0.248 & \textbf{83.92} & 76.42 & 68.49 \\ 
(5, 2) & \textbf{0.024} & 1.285 & 1.011 & 0.939 & 2.200 & \textbf{0.097} & 0.504 & \textbf{0.757} & 0.054 & \textbf{76.50} & 45.38 & 37.06 \\ 
(5, 5) & \textbf{0.077} & 1.819 & 1.209 & 1.201 & 2.942 & \textbf{0.086} & 0.352 & \textbf{0.644} & 0.008 & \textbf{68.00} & 30.35 & 16.35 \\ 
\hline
\end{tabular}
\caption{Results for skew normal (best values highlighted in bold).}\label{table_skew}
\end{table*}

\begin{figure*}[htb!]
\centering
\includegraphics[width=6.597in]{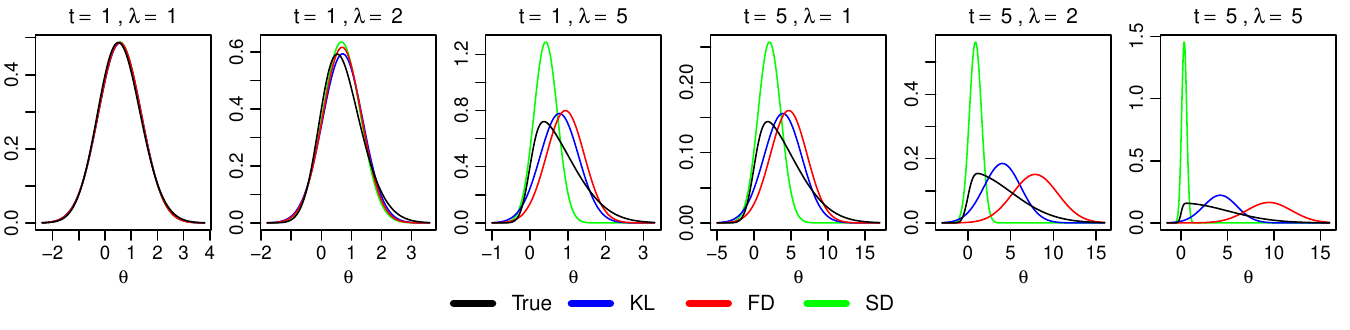}
\caption{Gaussian variational approximations for skew normal.}\label{skewnormal}
\end{figure*}

\subsection{Skew normal}
Finally, let the target be a univariate skew normal, $\theta \sim \SN(m, t, \lambda)$. Then $p(\theta|y) = 2 \, \phi(\theta|m, t^2) \, \Phi \{\lambda(\theta - m) \}$, where $m \in \mathbb{R}$, $t >0$ and $\lambda \in \mathbb{R}$ are the location, scale and skewness parameters respectively, and $\Phi(\cdot)$ is cumulative distribution function of the standard normal.

We set $m = 0$ and let $t\in \{1,5\}$ and $\lambda\in \{1,2,5\}$. From Table \ref{table_skew}, KLD estimates the mean most accurately, while SD captures the mode most accurately.  For FD, estimation of the mode is very poor when both scale and skewness are large. SD underestimates the variance most severely, with the variance estimate collapsing to zero as $t$ and $\lambda$ increase. KLD has higher accuracy than both FD and SD as skewness and scale increase. From Fig \ref{skewnormal}, SD is good at identifying the mode, whereas FD and KLD estimate the variance more accurately. We note that multiple local minimums were detected for SD in this context.

\section{Sparse Gaussian variational approximations} \label{sec_sparse_GVA}
Next, we consider Gaussian VI for hierarchical models and compare performances of the FD, SD and KLD. Given observed data $y = (y_1,\dots, y_n)^\top$, the variable $\theta = (\theta_L^\top,\theta_G^\top)^\top \in \mathbb{R}^d$ of a two-tier hierarchical model can be partitioned into a {\em global} variable $\theta_G$ that is shared among all observations and {\em local} variables $\theta_L =(b_1^\top,\dots, b_n^\top)^\top$, where $b_i$ is specific to the observation $y_i$ for $i=1, \dots, n$. Let the joint likelihood of the model be  
\begin{multline} \label{joint-model}
p(y, \theta) = p(\theta_G)p(b_1,\dots, b_\ell|\theta_G) \times 
\\
\prod_{k=\ell+1}^n p(b_k|b_{k-1},\dots, b_{k-\ell},\theta_G)
\prod_{i=1}^n p(y_i|\theta_G,b_i),
\end{multline}
where $\{y_i\}$ are conditionally independent given $\theta$, and $\{b_i\}$ follow an $\ell$th order Markov model given $\theta_G$. Thus $\{b_i\}$ are conditionally independent of each other a posteriori given the $\ell$ neighboring values and $\theta_G$. In a random effects model, $\{b_i\}$ are the random effects with $\ell = 0$, while for a state space model, $\{b_i\}$ are the latent states with $\ell =1$.

Let $q_\lambda(\theta)$ be $\N(\mu,\Sigma)$, a Gaussian variational approximation of the posterior with mean $\mu \in \mathbb{R}^d$ and covariance matrix $\Sigma\in \mathbb{R}^{d \times d}$. Consider a Cholesky decomposition of the precision matrix $\Omega = \Sigma^{-1} = TT^\top$ where $T$ is a lower triangular matrix, and denote the variational parameters as $\lambda=(\mu^\top,\text{vech}(T)^\top)^\top$, where $\text{vech}(\cdot)$ is an operator that stacks lower triangular elements of a matrix columnwise from left to right into a vector.

In a multivariate Gaussian, conditional independence implies sparse structure in the precision matrix, with $\Omega_{ij}=0$ if $\theta_i$ and $\theta_j$ are conditionally independent given the remaining variables. By Proposition 1 of \cite{Rothman2010}, the Cholesky factor $T$ has the same row-banded structure as $\Omega$. Suppose $T$ is block partitioned according to  $(b_1^\top,\dots, b_n^\top, \theta_G^\top)^\top$, with corresponding blocks $T_{ij}$ for $i,j=1,\dots, n+1$. First, $T_{ij}=0$ if $j>i$ as $T$ is lower-triangular. If we further constrain $T_{ij}=0$ for $1\leq j\leq i-l$, then $\Omega$ reflects the conditional independence structure of the joint likelihood in (7). For instance, for GLMMs with $\ell=0$, $T$ has the sparse block structure,
\begin{equation*}
T =
\begin{bmatrix}
T_{11} & 0 & \ldots & \ldots & 0 \\
0 & T_{22} & \ldots & \ldots & 0 \\
\vdots & \vdots & \ddots & \vdots & \vdots \\
0 & 0 & \dots & T_{n n} & 0 \\
T_{G1} & T_{G2} & \ldots & T_{Gn} & T_{GG} \\
\end{bmatrix}.
\end{equation*}
When $\theta$ is high-dimensional, exploiting the conditional independence structure in the model is essential to making Gaussian VI feasible, as the number of parameters to be optimized in $T$ grows quadratically with $n$. However, after imposing sparsity on $T$, the number of parameters only grows linearly with $n$. Predetermined sparsity in $T$ can be enforced in SGD by updating only the elements in $T$ that are not constrained to zero.

\section{SGD based on reparameterization trick} \label{sec_sgd_reparametrization trick}
In this section, we develop SGD algorithms to minimize the FD and SD based on unbiased gradient estimates derived using the reparameterization trick \citep{Kingma2014}, named FDr and SDr respectively. In this approach, the gradients involve Hessians of the log joint density, which are sparse matrices that can be computed efficiently. However, we demonstrate later in Section \ref{sec variance gradients} that these gradients have much higher variance than corresponding algorithms based on KLD, resulting in slow convergence and suboptimal variational approximations. An alternative approach is thus proposed in Section \ref{sec_sgd_batch_approximation}.

Let $g(\lambda, \theta) = \nabla_\theta \log h(\theta)-\nabla_\theta \log q_\lambda(\theta)$ where $h(\theta)=p(\theta)p(y|\theta)$ is as defined previously. The FD and SD between $q_\lambda(\theta)$ and $p(\theta|y)$ can be written as
\begin{align*}
F(\lambda) &= \E_q[g(\lambda, \theta)^\top g(\lambda, \theta)], \\
S(\lambda) &=\E_q[f(\lambda, \theta)^\top f(\lambda, \theta)],
\end{align*}
respectively, where $f(\lambda, \theta)=T^{-1}g(\lambda, \theta)$. The gradients for minimizing the FD and SD via SGD can be derived by applying the reparametrization trick. Instead of simulating $\theta$ directly from $q_\lambda(\theta)$, we generate $z \sim \N(0, I_d)$ and compute $\theta=\mu+T^{-\top}z$. Thus
\begin{align*}
F(\lambda) &= \E_\phi \left\{ g(\lambda, \mu+T^{-\top}z)^\top  g(\lambda, \mu+T^{-\top}z) \right\}, \\
S(\lambda) &= \E_\phi \left\{ f(\lambda, \mu+T^{-\top}z)^\top  f(\lambda, \mu+T^{-\top}z) \right\}, 
\end{align*}
where $\E_\phi(\cdot)$ denotes expectation with respect to $\phi(z)$, the density function of $\N(0, I_d)$. Note that
\[
\begin{aligned}
g(\lambda, \theta) &= \nabla_\theta \log h(\theta) + TT^\top (\theta-\mu), \\
f(\lambda, \theta) &= T^{-1}\nabla_\theta \log h(\theta) + T^\top (\theta-\mu),
\end{aligned}
\]
both of which depends on $\lambda$ directly as well as through $\theta$. Applying the chain rule, 
\begin{align*}
\nabla_\mu F(\lambda) &= 2 \E_\phi  \{\nabla_\theta^2 \log h(\theta) g(\lambda, \theta)\}, \\
\nabla_\mu S(\lambda) & = 2 \E_\phi  \{\nabla_\theta^2 \log h(\theta) \Sigma g(\lambda, \theta)\}, \\
\nabla_{\vech(T)} F(\lambda) 
& =  2  \E_\phi \vech \left\{ g(\lambda, \theta) z^\top \right. \\
&\left. \quad- T^{-\top} z g(\lambda, \theta)^\top \nabla_\theta^2 \log h(\theta)  T^{-\top} \right\}, \\
\nabla_{\vech(T)} S(\lambda) 
& = - 2  \E_\phi \vech \left\{ \Sigma g(\lambda, \theta) \nabla_\theta \log h(\theta)^{\top}T^{-\top} \right. \\
&\left. \quad + T^{-\top} z g(\lambda, \theta)^\top \Sigma \nabla_\theta^2 \log h(\theta)  T^{-\top} \right\}.
\end{align*}

Unbiased gradient estimates can be obtained by sampling from $\phi(z)$.  All gradient computations can be done efficiently even in high-dimensions, as they only involve sparse matrix multiplications and solutions of sparse triangular linear systems. The Hessian $\nabla_\theta^2 \log h(\theta)$ has the same block sparse structure as $\Omega$, as $b_i$ and $b_j$ only occur in the same factor of (\ref{joint-model}) if $b_j$ is one of the $\ell$ neighboring values of $b_i$. For $\nabla_{\vech(T)} F(\lambda)$ and $\nabla_{\vech(T)} S(\lambda)$, we only need to compute elements corresponding to those in $\vech(T)$ that are not fixed by sparsity. For instance, to compute the second term in $\nabla_{\vech(T)} F(\lambda)$, we just find $u = T^{-\top} z$ and $v = T^{-1} \nabla_\theta^2 h(\theta) g(\lambda, \theta)$, and then form $u_iv_j$ for nonzero elements $(i,j)$ of $T$.

The update for $T$ in SGD does not ensure that its diagonal entries remain positive. Hence, we introduce $T^*$ such that $T^*_{ii}=\log (T_{ii})$ for $i=1,\dots,n,$ and $T^*_{ij}=T_{ij}\;for\;i\neq j$. Let $J$ be a $d\times d$ matrix with diagonal equal to $\diag(T)$ and all off-diagonal entries being 1, and $D$ be a diagonal matrix with the diagonal given by $\vech(J)$. Then $\nabla_{\vech (T^*)}F(\lambda)=D \nabla_{\vech (T)}F(\lambda)$ and updates for $T^*$ are unconstrained.

\begin{algorithm}[tb!]
\caption{SGD based on reparametrization trick}
\label{Algs}
\begin{algorithmic}[1]
\Input Initial $\mu\in\mathbb{R}^d$, initial $T^*\in\mathbb{R}^{d\times d}$, stepsize schedule $\{\rho_t\}$
\Function{Map}{$T^*$}
    \State Construct $T$: $T_{ii}\gets \exp(T^*_{ii})$, $T_{ij}\gets T^*_{ij}$ for $i\neq j$
    \State \Return $T$
\EndFunction
\Function{BuildD}{$T$}
    \State $J \gets \mathbf{1}\mathbf{1}^\top$, set $\mathrm{diag}(J)\gets \mathrm{diag}(T)$
    \State $D \gets \mathrm{diag}(\vech(J))$
    \State \Return $D$
\EndFunction
\State $t \gets 1$
\While{not converged}
    \State $T \gets \Call{Map}{T^*}$,\quad $D \gets \Call{BuildD}{T}$
    \State Sample $z \sim \mathcal{N}(0,I_d)$,\quad$u \gets T^{-\top} z$,\quad $\theta \gets \mu + u$
    \State $g \gets \nabla_\theta \log h(\theta) + Tz$
    \If{\textsc{KLD}}
        \State $\mu \gets \mu + \rho_t\, g$,\quad $v \gets T^{-1} g$,\quad $g_T \gets -\,u\,v^\top$
        \State $T^* \gets T^* + \rho_t\, D\, g_T$
    \ElsIf{\textsc{FD}r \textbf{ or } \textsc{SD}r}
        \If{\textsc{SD}r}
            \State $g \gets T^{-1} g$,\quad $z \gets z - g$,\quad $g \gets T^{-\top} g$
        \EndIf
        \State $w \gets \nabla_\theta^2 \log h(\theta)\, g$,\quad $v \gets T^{-1} w$,\quad$\mu \gets \mu - 2\,\rho_t\, w$
        \State $g_T \gets g\,z^\top - u\,v^\top$,\quad $T^* \gets T^* - 2\,\rho_t\, D\, g_T$
    \EndIf
    \State $t \gets t+1$
\EndWhile
\end{algorithmic}
\end{algorithm}

Algorithm \ref{Algs} outlines the SGD algorithms for updating $(\mu, T)$ by minimizing the FD, SD or KLD \citep[derived in][]{Tan2018}. The stepsize $\rho_t$ is computed elementwise adaptively using Adadelta \citep{Zeiler2012}. All three algorithms compute $g(\lambda, \theta)$, but the KLD based algorithm uses $g(\lambda, \theta)$ to update $\mu$ and $T$ directly, while FDr and SDr premultiply $g(\lambda, \theta)$ by the Hessian $ \nabla_\theta^2 \log h(\theta)$ and are hence more  computationally intensive.

\subsection{Analysis of variance of gradient estimates} \label{sec variance gradients}
Here, we study the variance of unbiased gradient estimates derived by applying the reparametrization trick on the KLD, FD and SD. The variance of these gradients plays a crucial role in stability of the optimization, as large variance can cause a zigzag phenomenon, making convergence difficult. For a closed form analysis, we assume the target $p(\theta|y)$ is $\N(\nu, \Lambda^{-1})$. Then $\nabla_{\theta}\log h(\theta)=-\Lambda(\theta-\nu)$ and $\nabla_{\theta}^2\log h(\theta)=-\Lambda$.

\begin{figure*}[htb!]
\centering
\includegraphics[width=6.597in]{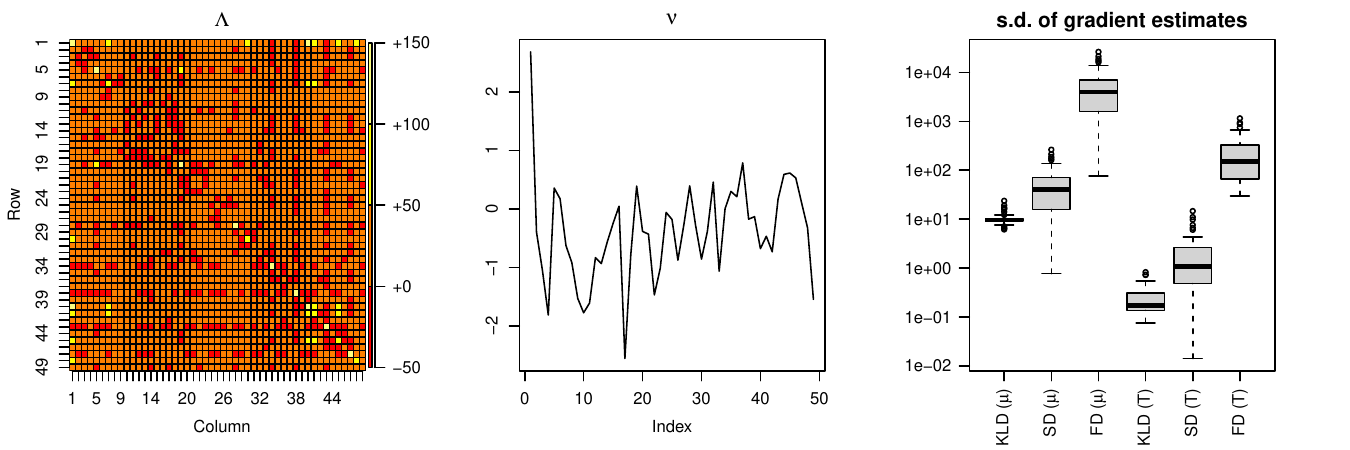}
\caption{First two plots show the true precision matrix $\Lambda$ and mean $\nu$, and the third plot contains boxplots of the standard deviation (s.d.) in gradient estimates for $\{\mu_i\}$ and $\{T_{ii}\}$.}
\label{grad_compare}
\end{figure*}

From Algorithm \ref{Algs}, gradient estimates with respect to $\mu$ for the KLD, FD and SD based on a single sample are 
\[
g_{\mu}^{\KL} = A z - \Lambda(\mu-\nu), \quad
g_{\mu}^{\text{F}} = 2\Lambda g_{\mu}^{\KL},\quad
g_{\mu}^{\text{S}} = 2 \Lambda \Sigma g_{\mu}^{\KL},
\]
where $A = T-\Lambda T^{-\top}$. The stochasticity stems from drawing $z \sim \N(0, I_d)$ and $\Var(g_{\mu}^{\KL}) = A A^\top$, while
\[
\begin{aligned}
\Var(g_{\mu}^{\text{F}})= 4\Lambda \Var(g_{\mu}^{\KL}) \Lambda, \;\;
\Var(g_{\mu}^{\text{S}}) = 4\Lambda \Sigma \Var(g_{\mu}^{\KL}) \Sigma\Lambda.
\end{aligned}
\]
Similarly, from Algorithm \ref{Algs}, the gradient estimates with respect to $T$ are
\begin{align*}
g_T^{\KL} &= T^{-\top} z (\mu -\nu)^\top \Lambda T^{-\top} -  T^{-\top} zz^\top A^\top T^{-\top} ,
\\
g_T^{\text{F}} &= 2\{ \Lambda  (\mu -\nu) z^\top + T^{-\top} z (\mu -\nu)^\top \Lambda^2 T^{-\top}\\
&\quad - A zz^\top - T^{-\top} zz^\top A^\top \Lambda T^{-\top} \},
\\
g_T^{\text{S}} 
&= 2[  \Sigma \Lambda (\mu -\nu) \{z^\top T^{-1} +  (\mu -\nu) ^\top\}  \Lambda T^{-\top} \\
&\quad-\Sigma  A \{z z^\top T^{-1}  + z   (\mu -\nu)^\top\} \Lambda T^{-\top}  \\
& \quad + T^{-\top} \{z  (\mu -\nu)^\top \Lambda - zz^\top A^\top \} \Sigma \Lambda T^{-\top} ].
\end{align*}

The variance of these estimates depends on the mean and precision of the true target, which is fixed, and that of the variational approximation, which changes during SGD. Suppose $\Lambda$ and $T$ are both diagonal matrices, then
\begin{align*}
\Var(g_{\mu_i}^{\KL}) &= T_{ii}^2 - 2 \Lambda_{ii} + \Lambda_{ii}^2 T_{ii}^{-2}, \\
\Var(g_{\mu_i}^{\text{F}})&= 4\Lambda_{ii}^2 \Var(g_{\mu_i}^{\KL}), \\
\Var(g_{\mu_i}^{\text{S}}) &= (4\Lambda_{ii}^2/T_{ii}^4) \,\Var(g_{\mu_i}^{\KL}), \\
\Var(g_{T_{ii}}^{\KL}) &= T_{ii}^{-4} \left\{ \Lambda_{ii}^2 (\mu_i - \nu_i)^2 + 2 \left( T_{ii} - \Lambda_{ii}/T_{ii} \right)^2 \right\}, \\
\Var(g_{T_{ii}}^{\text{F}}) &= 4 (T_{ii}^2 + \Lambda_{ii})^2\Var(g_{T_{ii}}^{\KL}). \\
\Var(g_{T_{ii}}^{\text{S}})
&= 4\Lambda_{ii}^2 T_{ii}^{-8}  \left\{ \left( 3\Lambda_{ii}  - T_{ii}^2 \right)^2  (\mu_i - \nu_i)^2\right. \\
&\quad + \left.8 \left( T_{ii} - \Lambda_{ii}/T_{ii} \right)^2 \right\}.
\end{align*}
It can be verified that these variances are zero at convergence, when $\mu_i=\nu_i$ and $T_{ii}^2 = \Lambda_{ii}$ $\forall$ $i$. The variance of gradients with respect to $\mu$ of FD and SD are larger than that of KLD if $\Lambda_{ii} > 0.5$ and $\Lambda_{ii} / T_{ii}^2 > 0.5 $ respectively. Assuming $\mu_i = \nu_i$ for the SD, the variance of gradients with respect to $T$ of FD and SD are larger than that of KLD if $T_{ii}^2 + \Lambda_{ii} > 0.5$ and $\Lambda_{ii}/T_{ii}^2 > 0.25$ respectively.

In summary, the variance of gradient estimates based on FD is larger than that of KLD once $\Lambda_{ii}>0.5$, regardless of the values of the variational parameters, and variance inflation is larger for $T$ than $\mu$. For SD, the inflation factor involves the ratio $\Lambda_{ii}/T_{ii}^2$, so variance inflation relative to KLD can be reduced if $T_{ii}^2 > \Lambda_{ii}$.

Next, we investigate the variance of gradient estimates for a multivariate Gaussian target with $d=49$ in a real setting. The true precision matrix $\Lambda$ and mean $\nu$, visualized in the first two plots of Fig \ref{grad_compare}, are derived from MCMC samples obtained by fitting a logistic regression model to the German credit data in Section \ref{sec_logreg}. The diagonal entries of $\Lambda$ range from 0.52 to 148.82 with a mean of 45.68. We set $T = 10 I_d$ and $\mu = 0$ to represent an uninformative initialization. The stochastic gradients with respect to $\mu$ and $T$ are computed for each divergence by generating $z \sim \N(0, I_d)$ for 1000 iterations. The standard deviation of these gradient estimates are calculated for $\mu_i$ and $T_{ii}$ for $i=1, \dots, d$, and summarized using boxplots in Fig \ref{grad_compare}. The $y$-axis of the boxplots has a log scale. KLD has the smallest standard deviation, followed by SD, while the standard deviation of FD is much larger than SD and KLD for both $\mu$ and $T$. Although $\Lambda$ is not a diagonal matrix, these findings are consistent with our earlier analysis. This example highlights the difficulty in using SGD to minimize the FD and SD due to the much larger variance in gradient estimates relative to KLD, which motivates an alternative optimization procedure described next.

\section{SGD based on batch approximation}
\label{sec_sgd_batch_approximation}
SGD based on the reparametrization trick faces multiple issues such as increased computational and storage costs due to the Hessian, and high variance in gradient estimates. To address these challenges, we propose  alternative algorithms (named FDb and SDb) in this section, which minimize estimates of the FD and SD computed using a batch of samples randomly simulated from the current variational approximation at each iteration. In this approach, the gradients are biased but they no longer depend on the Hessian, leading to reduced computation costs and improved convergence. In Section \ref{sec_related}, we discuss how this approach iteratively refines the variational approximation by emulating gradients of the target posterior evaluated on each batch of samples. This approach is also more scalable and stable compared to BaM \citep{Cai2024a} for high-dimensional models with conditional independence structure that can be exploited. Section \ref{sec_batch_meanfield} analyzes the behavior of FDb and SDb under the mean-field assumption in the limit of an infinite batchsize.

\begin{algorithm}[tb!]
\caption{SGD based on batch approximation}
\label{BatchAlgs}
\begin{algorithmic}[1]
\Input Initial $\mu\in\mathbb{R}^d$, initial $T^*$, batchsize $B$, stepsize schedule $\{\rho_t\}$
\While{not converged}
    \State $T \gets \Call{Map}{T^*}$,\quad $D \gets \Call{BuildD}{T}$
    \State Sample $z_i \sim \mathcal{N}(0, I_d)$ 
    \State $\theta_i \gets \mu + T^{-\top} z_i$ for $i=1,\dots,B$
    \State Compute $g_h(\theta_i)$ for $i=1,\dots,B$
    \State Compute summary statistics:
        \begin{align*}
        \overline{\theta} &\gets \tfrac{1}{B}\sum_{i=1}^B \theta_i,\quad 
        C_{\theta g} \gets \tfrac{1}{B}\sum_{i=1}^B (\theta_i-\overline{\theta})(g_h(\theta_i)-\overline{g}_h)^\top, \\
        \overline{g}_h &\gets \tfrac{1}{B}\sum_{i=1}^B g_h(\theta_i),\quad
        C_\theta \gets \tfrac{1}{B}\sum_{i=1}^B (\theta_i-\overline{\theta})(\theta_i-\overline{\theta})^\top, \\
        C_g &\gets \tfrac{1}{B}\sum_{i=1}^B (g_h(\theta_i)-\overline{g}_h)(g_h(\theta_i)-\overline{g}_h)^\top
        \end{align*}
    \State $U \gets C_\theta + (\mu - \overline{\theta})(\mu - \overline{\theta})^\top$,\quad $g_\mu \gets 2TT^\top(\mu-\overline{\theta}) - 2\overline{g}_h$
    \If{\textsc{FD}b}
        \State $\mu \gets \mu - \rho_t TT^\top g_\mu$,\quad $W \gets C_{\theta g} - (\mu - \overline{\theta}) \overline{g}_h^\top$
        \State $g_T \gets 2\big(W + W^\top + TT^\top U + U TT^\top\big)T$
    \ElsIf{\textsc{SD}b}
        \State $\mu \gets \mu - \rho_t g_\mu$,\quad $V \gets C_g + \overline{g}_h \overline{g}_h^\top$
        \State $g_T \gets 2\big(UT - T^{-\top} T^{-1} V T^{-\top}\big)$
    \EndIf
    \State $T^* \gets T^* - \rho_t D g_T$
    \State $t \gets t+1$
\EndWhile
\end{algorithmic}
\end{algorithm}

The SD and FD can be written respectively as
\begin{align*}
S_{q_\lambda}(\lambda) &= \E_{q_\lambda}  \| g_h(\theta) + \Sigma^{-1} (\theta - \mu) \|^2_\Sigma \\
&=  \E_{q_\lambda} \left\{ g_h(\theta)^\top \Sigma g_h(\theta) + 2 g_h(\theta)^\top (\theta - \mu) \right.\\
& \left. \quad + (\theta - \mu) ^\top \Sigma^{-1} (\theta - \mu) \right\},\\
F_{q_\lambda}(\lambda)  &= \E_{q_\lambda}  \| g_h(\theta) + \Sigma^{-1} (\theta - \mu) \|^2\\
&= \E_{q_\lambda} \left\{ g_h(\theta)^\top g_h(\theta) + 2 g_h(\theta)^\top\Sigma^{-1} (\theta - \mu) \right. \\
& \left. \quad + (\theta - \mu) ^\top \Sigma^{-2} (\theta - \mu) \right\},
\end{align*}
where $g_h(\theta) = \nabla_\theta \log h(\theta)$ and the subscript $q_\lambda$ emphasizes that expectation is with respect to $q_\lambda(\theta)$. To estimate SD and FD at the $t$-iteration, we generate $B$ samples  $\{\theta_1, \dots, \theta_B\}$ from the current estimate of the variational density $q_t(\theta) = \N(\theta|\mu^{(t)}, \Sigma^{(t)})$. This can be done by generating $z_i \sim \N(0, I_d)$ and computing $\theta_i = \mu^{(t)} +T^{(t)-\top} z_i$ for $i=1, \dots, B$, where $\Sigma^{(t)} = T^{(t)-\top}T^{(t)-1}$. By using the summary statistics computed in step 6 of Algorithm \ref{BatchAlgs}, estimates of SD and FD at iteration $t$ are 
\begin{equation} \label{batch_est}
\begin{aligned}
\hat{S}_{q_t}(\lambda) &= \frac{1}{B} \sum_{i=1}^B \left\{ g_h(\theta_i)^\top \Sigma g_h(\theta_i) + 2 g_h(\theta_i)^\top (\theta_i - \mu) \right. \\
& \left. \quad + (\theta_i - \mu) ^\top \Sigma^{-1} (\theta_i - \mu) \right\} \\
&= \tr(V \Sigma) + \tr(U \Sigma^{-1}) +  2 \tr(W), \\
\hat{F}_{q_t}(\lambda) 
&= \frac{1}{B} \sum_{i=1}^B \left\{ g_h(\theta_i)^\top g_h(\theta_i) + 2 g_h(\theta_i)^\top \Sigma^{-1}(\theta_i - \mu) \right. \\
&\left. \quad + (\theta_i - \mu) ^\top \Sigma^{-2} (\theta_i - \mu) \right\} \\
&= \tr(V) + \tr(U \Sigma^{-2}) + 2\tr(W\Sigma^{-1}),
\end{aligned}
\end{equation}
where  $U = C_\theta + (\mu - \overline{\theta}) (\mu - \overline{\theta})^\top$, $V= C_g + \overline{g}_h \overline{g}_h^\top$, $W= C_{\theta g}  -(\mu - \overline{\theta})\overline{g}_h^\top$ and the subscript $q_t$ indicates that samples are drawn from $q_t$. Differentiating with respect to $\mu$ and $T$, 
\begin{align}
\nabla_\mu \hat{S}_{q_t}(\lambda) &= 2\Sigma^{-1} (\mu - \overline{\theta}) - 2 \overline{g}_h, \label{gradbatch} \\ 
\nabla_\mu \hat{F}_{q_t}(\lambda) &= \Sigma^{-1} \nabla_\mu \hat{S}_{q_t}(\lambda), \nonumber\\
\nabla_{\vech(T)} \hat{S}_{q_t}(\lambda) &= 2\vech( U T  - \Sigma V T^{-\top} ),\nonumber \\
\nabla_{\vech(T)} \hat{F}_{q_t}(\lambda) &= 2\vech \{( W + W^\top  + \Sigma^{-1} U  \nonumber\\
& \quad +U \Sigma^{-1}) T\}. \nonumber
\end{align}

These gradient estimates of SD and FD are biased because the $\theta$'s are replaced by samples $\{\theta_1, \dots, \theta_B\}$ generated from $q_t(\theta) = \N(\theta|\mu^{(t)}, \Sigma^{(t)})$, and are no longer functions of $(\mu, \Sigma)$ when we derive the gradients. On the other hand, the reparametrization trick in Section \ref{sec_sgd_reparametrization trick} produces unbiased estimates because the $\theta$'s are regarded as samples from $q(\theta) = \N(\theta|\mu, \Sigma)$, and remain functions of $(\mu, \Sigma)$ when the chain rule is applied to find the gradients.

With the batch approximation, all gradients are independent of the Hessian, which reduces computation costs significantly and enhances stability during optimization. As before, we only update elements of $\vech(T)$ not fixed by sparsity, and ensure positivity of diagonal entries in $T$ by applying a transformation. SGD algorithms for updating $(\mu, T)$ based on minimizing the batch approximated FD and SD are outlined in Algorithm \ref{BatchAlgs}.

\subsection{Interpretation and related methods} \label{sec_related}
Previously, \cite{Elkhalil2021} designed autoencoders based on minimizing a batch approximation of the Fisher divergence using SGD. \cite{Cai2024a} also proposed a BaM algorithm that derived closed form updates of $(\mu, \Sigma)$ by minimizing the objective,  
\begin{equation*}\label{BAM}
\hat{S}_{q_t}(\lambda) + (2 / \rho_t) \KL(q_t\|q_\lambda),
\end{equation*}
with respect to $\lambda$ at the $t$th iteration, where $\rho_t = Bd/t$ is the learning rate. BaM can be interpreted as a proximal point method that produces a sequence of variational densities $q_0, q_1, \dots$ such that $q_{t+1}$ matches the scores $g_h(\theta)$ at $\{\theta_1, \dots, \theta_B\}$ on average better than $q_t$, while the KLD based penalty ensures stability by preventing $q_{t+1}$ from deviating too much from $q_t$. Similarly, Algorithm \ref{BatchAlgs} can be interpreted as minimizing \begin{equation*}
\hat{S}_{q_t}(\lambda) + (1 / 2\rho_t) \left\|\lambda-\lambda_t \right\|^2
\end{equation*}
with respect to $\lambda$, where an $\ell_2$ penalty is used instead, and a linear approximation of $\hat{S}_{q_t}(\lambda)$ at $\lambda_t$ is considered. Then 
\begin{equation*}
\begin{aligned}
\lambda_{t+1} 
& = \arg\min_{\lambda} \left\{ \hat{S}_{q_t}(\lambda_t) +  \nabla_\lambda \hat{S}_{q_t}(\lambda_t)^\top (\lambda-\lambda_t) \right.\\
&\quad \left.+ \ (1 / 2\rho_t)  \left\| \lambda - \lambda_t \right\|^2 \right\}
= \lambda_t - \rho_t \nabla_\lambda \hat{S}_{q_t}(\lambda_t),
\end{aligned}
\end{equation*}
which corresponds to the SGD update with stepsize $\rho_t$ employed in Algorithm \ref{BatchAlgs}. This discussion extends similarly to the FD.

\begin{figure*}
\centering
\includegraphics[width=4.7645in]{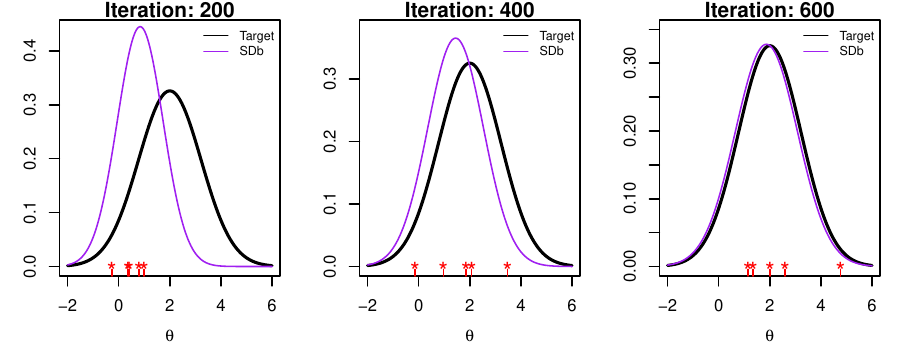}
\caption{Progression of SDb for a Gaussian target where the red $*$'s mark the randomly chosen samples.}\label{SDb_plot}
\end{figure*}

Instead of viewing \eqref{gradbatch} as biased estimates of the gradients of SD and FD, we can consider $\hat{S}_{q_t}(\lambda)$ and $\hat{F}_{q_t}(\lambda)$ as new objectives, which measure the divergence between $q_\lambda(\theta)$ and $p(\theta|y)$ based on their gradients evaluated at randomly selected samples at each iteration. Indeed, $\hat{S}_{q_t}(\lambda)$ and $\hat{F}_{q_t}(\lambda)$ reduce to zero when $q_\lambda(\theta) = p(\theta|y)$, which can be seen from \eqref{batch_est} by plugging in $g_h(\theta) = -\Sigma^{-1}(\theta - \mu)$, as each term in the sums is equal to zero. This supports their use as optimization objectives. At each iteration $t$, $q_{t+1}$ updates $q_t$ so as to reduce the difference in gradients between $q_t$ and the true posterior when evaluated on the randomly selected batch of samples. As $q_t$ converges to $p(\theta|y)$, the samples also shift towards the region where the true posterior has high probability mass. Figure \ref{SDb_plot} illustrates the progression of SDb with batchsize $B=5$ using $\N(2, 1.5)$ as target. As optimization proceeds, SDb gradually refines the variational density by emulating the gradients of the target on the batch of samples and converges steadily toward the target.

Despite the preceding discussion and empirical evidence in Section \ref{sec_applications}, it is important to establish formal convergence for SDb and FDb, as these algorithms can be interpreted as either relying on biased gradients of the SD and FD, or unbiased gradients based on objectives that change at each iteration, and convergence is not guaranteed in either case. In this article, we do not resolve these issues definitively, and we will leave these as open problems for future work. However, we make some contribution in this direction by providing Theorem \ref{thm-convSDb} below, which proves the convergence of SDb in the limit of infinite batchsize with a Gaussian target, where natural gradients \citep{Tan2024a} are used with a constant stepsize. The proof of Theorem \ref{thm-convSDb} given in the supplement follows that of BaM in \cite{Cai2024a} closely under similar settings. There are several differences between the conditions in Theorem \ref{thm-convSDb} and SDb as implemented in Algorithm \ref{BatchAlgs}. In particular, we update $T_t$ instead of $\Sigma_t^{-1}$, using an adaptive instead of constant stepsize, based on Euclidean rather than natural gradients, using a finite instead of infinite batchsize.

\begin{theorem}\label{thm-convSDb}
Suppose the target is $\N(\mu, \Lambda)$ and the variational approximation at iteration $t$ is $\N(\mu_t, \Sigma_t)$, where $\lambda_t = (\mu_t, \Sigma_t)$. Define the normalized errors,
\begin{align*}
\epsilon_t &= \Lambda^{1/2}(\mu_t - \nu), 
\quad 
\Delta_t &= \Lambda^{-1/2} (\Sigma_t^{-1} - \Lambda) \Lambda^{-1/2}. 
\end{align*}
If $\hat{S}_q(\lambda)$ is minimized using SGD using a constant stepsize $0 <\rho<1/4$, based on the natural gradients updates,
\begin{align*}
& \Sigma_{t+1}^{-1} = \Sigma_t^{-1} + 2 \rho \nabla_{\Sigma} \hat{S}_{q_t} (\lambda_t), \\
&\mu_{t+1} = \mu_t - \rho \Sigma_{t+1} \nabla_\mu \hat{S}_{q_t} (\lambda_t), 
\end{align*}
then in the limit of infinite batch size ($B \to \infty$), $\|\epsilon_t\| \to 0$ and $\|\Delta_t\| \to 0$ as $t \to \infty$, where $\| \cdot \|$ denotes the spectral norm.
\end{theorem}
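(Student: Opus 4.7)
The plan is to mirror the convergence argument for BaM in \cite{Cai2024a}, adapted to the specific update scheme of SDb. I would proceed in three steps: (i) take the infinite-batch limit to obtain exact deterministic gradients when the target is Gaussian; (ii) re-express the natural-gradient update in the normalized coordinates $(\epsilon_t,\Delta_t)$; and (iii) prove joint contraction to the origin via a coupled Lyapunov argument. For step (i), as $B\to\infty$ the summary statistics in Algorithm \ref{BatchAlgs} converge to their population values under $q_t=\N(\mu_t,\Sigma_t)$. Using $g_h(\theta)=-\Lambda(\theta-\nu)$ for a Gaussian target, one obtains $\bar\theta\to\mu_t$, $\bar g_h\to-\Lambda(\mu_t-\nu)$, $C_\theta\to\Sigma_t$, $C_{\theta g}\to-\Sigma_t\Lambda$ and $C_g\to\Lambda\Sigma_t\Lambda$, so that the gradient expressions in Section \ref{sec_sgd_batch_approximation}, evaluated at $\lambda=\lambda_t$, reduce to
\begin{align*}
\nabla_\mu S_{q_t}(\lambda_t)&=2\Lambda(\mu_t-\nu),\\
\nabla_\Sigma S_{q_t}(\lambda_t)&=\Lambda\Sigma_t\Lambda+\Lambda(\mu_t-\nu)(\mu_t-\nu)^\top\Lambda-\Sigma_t^{-1}.
\end{align*}

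For step (ii), I would substitute these into the natural-gradient updates and conjugate by $\Lambda^{1/2}$ using $\Sigma_t^{-1}=\Lambda^{1/2}(I+\Delta_t)\Lambda^{1/2}$ and $\epsilon_t=\Lambda^{1/2}(\mu_t-\nu)$. After routine algebra this yields the coupled deterministic recursions
\begin{align*}
\Delta_{t+1}&=(1-2\rho)\Delta_t+2\rho\bigl[(I+\Delta_t)^{-1}-I\bigr]+2\rho\,\epsilon_t\epsilon_t^\top,\\
\epsilon_{t+1}&=\bigl[I-2\rho(I+\Delta_{t+1})^{-1}\bigr]\epsilon_t.
\end{align*}
Linearizing near the fixed point $(0,0)$ gives $\Delta_{t+1}\approx(1-4\rho)\Delta_t+2\rho\,\epsilon_t\epsilon_t^\top$ and $\epsilon_{t+1}\approx(1-2\rho)\epsilon_t$; both linearized maps are contractive precisely when $0<\rho<1/4$, which accounts for the stepsize condition. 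A scalar analysis along each eigendirection of $\Delta_t$ further shows that the map $\delta\mapsto\delta\bigl(1-4\rho+(1-2\rho)\delta\bigr)/(1+\delta)$ admits $\delta=0$ as its unique attracting fixed point on $(-1,\infty)$, so in the pure $\Delta$-dynamics (with $\epsilon_t=0$) convergence from any positive-definite $I+\Delta_0$ is clear.

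For step (iii), I would first identify a forward-invariant set in which $I+\Delta_t$ stays uniformly positive definite, ensuring that $\Sigma_t^{-1}$ remains a valid precision matrix and that $\|I-2\rho(I+\Delta_{t+1})^{-1}\|$ is bounded away from $1$ uniformly in $t$. On this set I would introduce a Lyapunov function of the form $V_t=a\|\Delta_t\|^2+\|\epsilon_t\|^2$ for a suitably chosen $a>0$, bound the cross contribution from the rank-one forcing $\epsilon_t\epsilon_t^\top$, and show $V_{t+1}\le cV_t$ for some $c\in(0,1)$, from which the conclusion $\|\epsilon_t\|,\|\Delta_t\|\to 0$ follows geometrically. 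The main obstacle is precisely the tight coupling between the two recursions: $\epsilon_{t+1}$ depends on the \emph{freshly updated} $\Delta_{t+1}$, while $\Delta_{t+1}$ is itself perturbed by $\epsilon_t\epsilon_t^\top$, so the $\epsilon$-contraction rate degrades whenever $\Delta_t$ swells, and conversely the $\Delta$-contraction is obstructed whenever $\epsilon_t$ is large. Choosing the weight $a$ so that the $\Delta$-contraction strictly dominates the rank-one forcing, together with specifying a sub-level set of $V$ that the iteration leaves invariant, is the delicate part of the argument; this is handled in \cite{Cai2024a}, and the same strategy should transfer verbatim because the $B\to\infty$ recursions derived above have essentially the same algebraic form as those of BaM, differing only through minor effects of the $\ell_2$ (rather than KL-based) proximal regularization implicit in the SGD step.
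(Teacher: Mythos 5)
Your steps (i) and (ii) reproduce the paper's derivation exactly: the population limits of the summary statistics (Lemma \ref{Lem3}) give precisely the gradients you state, and after conjugation by $\Lambda^{1/2}$ the coupled recursions you obtain, namely $J_{t+1}=(1-2\rho)J_t+2\rho\bigl(J_t^{-1}+\epsilon_t\epsilon_t^\top\bigr)$ with $J_t=I_d+\Delta_t$ and $\epsilon_{t+1}=\bigl\{I_d-2\rho J_{t+1}^{-1}\bigr\}\epsilon_t$, are the ones in the supplement (with $\beta=1-2\rho$). The overall architecture of your step (iii) --- a contraction for $\Delta_t$ up to a rank-one forcing of size $\|\epsilon_t\|^2$, a contraction for $\epsilon_t$ with rate governed by $J_{t+1}^{-1}$, and a combination of the two --- is also the paper's.

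The gaps are in how step (iii) would be closed. First, the bound $\|\Delta_{t+1}\|\le(1-2\rho)\|\Delta_t\|+2\rho\|\epsilon_t\|^2$ is not automatic: the paper obtains it by sandwiching $K_{t+1}\preceq J_{t+1}\preceq H_{t+1}$ with $K_{t+1}=(1-2\rho)J_t+2\rho J_t^{-1}$ and $H_{t+1}=K_{t+1}+2\rho\|\epsilon_t\|^2 I_d$, which share eigenvectors with $J_t$, and then proving the scalar inequality $|f(x)-1|\le(1-2\rho)|x-1|$ for $f(x)=(1-2\rho)x+2\rho/x$. That inequality holds only on the branch $x>\sqrt{2\rho/(1-2\rho)}$, and every eigenvalue of $J_t$ is shown to lie there because $\tau_{\min}(J_t)\ge 2\sqrt{2\rho(1-2\rho)}>\sqrt{2\rho/(1-2\rho)}$ exactly when $\rho<1/4$. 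So the stepsize condition is not, as you suggest, about the linearized rates $1-4\rho$ and $1-2\rho$ (those would only require $\rho<1/2$); it is what keeps the eigenvalues of $J_t$ on the increasing branch of the scalar map so that the global, not merely local, one-step comparison goes through. Second, your $\epsilon$-contraction requires $\tau_{\max}(J_{t+1})$ to be bounded above uniformly in $t$; the paper extracts this from the upper envelope $H_{t+1}$ and the fixed point $\tilde\epsilon_t$ of the associated scalar map, whereas your Lyapunov set-up runs into exactly the circularity you flag (the $\epsilon$-rate degrades when $\Delta_t$ swells and vice versa). This circularity is resolvable more cheaply than by importing the full BaM argument --- monotonicity of $\|\epsilon_t\|$ already follows from $2\rho\,\tau_{\max}(J_{t+1}^{-1})\le\tfrac12\sqrt{2\rho/(1-2\rho)}<1$, which then bounds $\|\Delta_t\|$ and hence $\tau_{\max}(J_t)$ --- but as written your proposal asserts rather than establishes these uniform eigenvalue bounds, and they are the substance of the proof.
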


Algorithm \ref{BatchAlgs} differs from BaM in several key aspects. While BaM relies on KLD regularization for stability and has closed-form updates for $(\mu, \Sigma)$, we use an $\ell_2$ penalty and a linearization of the batch approximation leading to SGD. By avoiding linearization of $\hat{S}_{q_t}(\lambda)$ and SGD, the number of iterations required for convergence is reduced in BaM, but each BaM iteration is expensive as the closed form update of $\Sigma$ involves inverting a $d \times d$ matrix with cost of $\mathcal{O}(d^3)$, although this can be reduced to $\mathcal{O}(d^2B + B^3)$ for small batchsize $B$ through low rank solvers. This high cost can result in long runtimes in high dimensions. Moreover, BaM is designed for full covariance matrices and it is not clear how sparsity can be enforced in the precision matrix to take advantage of the posterior conditional independence structure in hierarchical models. BaM can also run into instability and numerical issues with ill-conditioned matrices in practice, which may not be alleviated even with larger batchsizes. On the other hand, SGD allows updating of the Cholesky factor of the precision matrix, where sparse structures can be easily enforced. Smaller batchsizes can also be used, which further reduces the computation and storage burden. While BaM is suited for full covariance Gaussian VI, our approach provides a more scalable and stable alternative for high-dimensional hierarchical models with conditional independence structure.

The batch, match and patch (pBaM) algorithm \citep{Modi2025} extends BaM to higher dimensions through a patch step, that projects the covariance matrix into a family of low rank plus diagonal matrices \citep{Ong2018} such that $\Sigma = \Lambda \Lambda^\top + \Psi$, via an expectation maximization (EM) procedure. The rank $K$ of the low-rank factor $\Lambda$ controls the trade-off between computational efficiency and approximation accuracy, and the approximation accuracy improves with a larger $K$ for a given batchsize $B$. Each EM step has a cost of $\mathcal{O}(dK^{2} + K^{3} + KBd)$ while each BaM step has a cost of $\mathcal{O}(dB^{2} + B^{3} + KBd)$. Hence, pBaM is much more scalable in high-dimensional regimes, where storing and updating dense covariance matrices is impractical. Unlike pBaM, our approach does not require the tuning of additional hyperparameters such as $K$, but is still scalable and able to exploit sparsity.

\subsection{Batch approximated objective under mean-field}
\label{sec_batch_meanfield}
Next, we investigate behavior of the batch approximated FD and SD under the mean-field assumption considered in Section \ref{sec:ord_div_Gaussian}. Suppose the target $p(\theta|y)$ is $\N(\nu, \Lambda^{-1})$ with non-diagonal precision matrix $\Lambda$, and the variational approximation $q(\theta)$ is  $\N(\mu, \Sigma)$ where $\Sigma$ is a diagonal matrix. Using $B > 1$ samples $\{\theta_1, \dots, \theta_B\}$ from an estimate of $q$, $\hat{q}(\theta) = \N(\hat{\mu}, \hat{\Sigma})$, where $\hat{\Sigma}$ is also a diagonal matrix, the batch approximated SD and FD are
\begin{align*}
\hat{S}_{q}(\lambda)
&= \sum_{i=1}^d ( V_{ii} \Sigma_{ii} + U_{ii} \Sigma^{-1}_{ii}) + 2\tr(W), 
\\
\hat{F}_{q}(\lambda)
&= \sum_{i=1}^d (U_{ii} \Sigma^{-2}_{ii} + 2W_{ii} \Sigma^{-1}_{ii}) + \tr(V).
\end{align*}

\begin{lemma}\label{Lem2}
$\hat{S}_{q}(\lambda)$ is minimized at $\Sigma^{\hat{S}}_{ii} = \sqrt{C_{\theta,ii}/C_{g, ii}}$ and $\mu_i^{\hat{S}} = \overline{\theta}_i + \overline{g}_{h,i} \Sigma^{\hat{S}}_{ii}$ for $i=1, \dots, d$. 
If the diagonal entries of $C_{\theta g}$ are all negative, then $\hat{F}_{q}(\lambda)$ is minimized at 
$\Sigma^{\hat{F}}_{ii} = -C_{\theta,ii}/C_{\theta g,ii}$ and $\mu_i^{\hat{F}} = \overline{\theta}_i + \overline{g}_{h,i} \Sigma^{\hat{F}}_{ii}$ for $i=1, \dots, d$.
\end{lemma}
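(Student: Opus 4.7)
The plan is to exploit the mean-field (diagonal) structure so that both objectives separate into a sum of independent two-variable subproblems, one for each coordinate $i$. First I would substitute the definitions $U_{ii} = C_{\theta,ii} + (\mu_i - \overline{\theta}_i)^2$, $V_{ii} = C_{g,ii} + \overline{g}_{h,i}^2$, and $W_{ii} = C_{\theta g,ii} - (\mu_i - \overline{\theta}_i)\overline{g}_{h,i}$ into the displayed expressions for $\hat S_q(\lambda)$ and $\hat F_q(\lambda)$. Because $\Sigma$ is diagonal, off-diagonal entries of $U$, $V$, $W$ never appear, so the objective is a sum $\sum_i \psi_i(\mu_i,\Sigma_{ii})$ and each term can be optimized separately.

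For $\hat S_q$, I would first minimize with respect to $\mu_i$ with $\Sigma_{ii}$ held fixed. The $\mu_i$-derivative is $2(\mu_i-\overline{\theta}_i)\Sigma_{ii}^{-1} - 2\overline{g}_{h,i}$, and the stationary condition gives $\mu_i^{\hat{S}} = \overline{\theta}_i + \overline{g}_{h,i}\Sigma_{ii}$, matching the claimed form. Plugging this back, the cross term $2W_{ii}$ and the $\overline{g}_{h,i}^2$-pieces cancel, and the $i$-th subproblem collapses to $C_{g,ii}\Sigma_{ii} + C_{\theta,ii}\Sigma_{ii}^{-1} + 2C_{\theta g,ii}$. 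Differentiating and setting equal to zero yields $\Sigma_{ii}^{\hat{S}} = \sqrt{C_{\theta,ii}/C_{g,ii}}$. The second derivative $2C_{\theta,ii}\Sigma_{ii}^{-3}$ is positive (since $C_{\theta,ii} \ge 0$ as a sample variance), confirming a minimum; a full second-order check via the $2\times 2$ Hessian reduces to the condition that $U_{ii} - (\mu_i-\overline{\theta}_i)^2 = C_{\theta,ii} > 0$, which is automatic.

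For $\hat F_q$, an entirely parallel computation works. The $\mu_i$-derivative is $2(\mu_i-\overline{\theta}_i)\Sigma_{ii}^{-2} - 2\overline{g}_{h,i}\Sigma_{ii}^{-1}$, again giving $\mu_i^{\hat F} = \overline{\theta}_i + \overline{g}_{h,i}\Sigma_{ii}$. After substituting back, the $i$-th subproblem reduces to $C_{\theta,ii}\Sigma_{ii}^{-2} + 2C_{\theta g,ii}\Sigma_{ii}^{-1} + C_{g,ii}$. Setting the $\Sigma_{ii}$-derivative to zero gives $-2C_{\theta,ii}\Sigma_{ii}^{-3} - 2C_{\theta g,ii}\Sigma_{ii}^{-2} = 0$, hence $\Sigma_{ii}^{\hat F} = -C_{\theta,ii}/C_{\theta g,ii}$.

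The one subtlety, which I would flag explicitly, is feasibility: $\Sigma_{ii}$ must be strictly positive. For $\hat S_q$ this is automatic because $C_{\theta,ii}, C_{g,ii} \ge 0$. For $\hat F_q$, however, we need $-C_{\theta,ii}/C_{\theta g,ii} > 0$, which is exactly why the hypothesis $C_{\theta g,ii} < 0$ for every $i$ is imposed. Under this hypothesis, a second-derivative check at the stationary point gives $6C_{\theta,ii}\Sigma_{ii}^{-4} + 4C_{\theta g,ii}\Sigma_{ii}^{-3} = 2C_{\theta g,ii}^4/C_{\theta,ii}^3 > 0$, so the critical point is indeed the minimizer. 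This feasibility/positivity bookkeeping, rather than any nontrivial calculation, is the only real obstacle; the rest is routine separable optimization.
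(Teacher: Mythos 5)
Your proof is correct and follows essentially the same route as the paper's: both set the first-order conditions in $\mu_i$ and $\Sigma_{ii}$ to zero and solve, using the decompositions $U_{ii}=C_{\theta,ii}+(\mu_i-\overline{\theta}_i)^2$, $V_{ii}=C_{g,ii}+\overline{g}_{h,i}^2$, $W_{ii}=C_{\theta g,ii}-(\mu_i-\overline{\theta}_i)\overline{g}_{h,i}$; your profiling out of $\mu_i$ before optimizing $\Sigma_{ii}$ is just a sequential rather than simultaneous solution of the same stationarity equations. Your added second-order and feasibility checks (positivity of $\Sigma_{ii}$, the role of the hypothesis $C_{\theta g,ii}<0$) go slightly beyond what the paper records, and are a welcome tightening.
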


Next, we study limiting behavior of the batch approximated SD and FD as the batchsize $B \to \infty$. Theorem \ref{thm2} relies on the limits of summary statistics step 6 of Algorithm \ref{BatchAlgs} presented in Lemma \ref{Lem3}.

\begin{lemma} \label{Lem3}
Suppose $\{\theta_1, \dots, \theta_B\}$ are samples from $\hat{q}(\theta) = \N(\hat{\mu}, \hat{\Sigma})$ and the target is $p(\theta|y) = \N(\nu, \Lambda^{-1})$. As $B\rightarrow\infty$, 
\begin{equation*} 
\begin{gathered}
\overline{\theta} \xrightarrow{\text{a.s.}} \hat{\mu}, \quad
C_\theta  \xrightarrow{\text{a.s.}} \hat{\Sigma}, \quad 
\overline{g}_h  \xrightarrow{\text{a.s.}} \Lambda(\nu-\hat{\mu}), \\
C_g  \xrightarrow{\text{a.s.}} \Lambda\hat{\Sigma}\Lambda, \quad
C_{\theta g}  \xrightarrow{\text{a.s.}} -\hat{\Sigma}\Lambda.
\end{gathered} 
\end{equation*}
\end{lemma}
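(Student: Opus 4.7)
The plan is to apply the strong law of large numbers (SLLN) componentwise to each empirical average, together with the fact that for a Gaussian target $p(\theta\mid y) = \N(\nu, \Lambda^{-1})$, the score has the linear form
\begin{equation*}
g_h(\theta) = \nabla_\theta \log p(\theta\mid y) = \Lambda(\nu - \theta).
\end{equation*}
Since $\theta_1, \dots, \theta_B$ are i.i.d.\ from $\hat q(\theta) = \N(\hat\mu, \hat\Sigma)$ with finite second moments, the Gaussian moments $\E[\theta_i] = \hat\mu$ and $\E[\theta_i\theta_i^\top] = \hat\Sigma + \hat\mu\hat\mu^\top$ exist, and every quadratic or bilinear function of $\theta_i$ encountered here is integrable. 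This legitimizes the SLLN throughout.

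First, I would handle the two direct averages. By the SLLN, $\overline\theta \xrightarrow{\text{a.s.}} \hat\mu$. Since $g_h(\theta_i) = \Lambda(\nu - \theta_i)$ is an affine function of $\theta_i$, it follows that $\overline g_h = \Lambda(\nu - \overline\theta) \xrightarrow{\text{a.s.}} \Lambda(\nu - \hat\mu)$ by the continuous mapping theorem.

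Next, I would rewrite each sample cross-moment in the standard ``second moment minus outer product of means'' form and apply the SLLN to the second moment. For $C_\theta = \tfrac{1}{B}\sum \theta_i\theta_i^\top - \overline\theta\,\overline\theta^\top$, the SLLN gives $\tfrac{1}{B}\sum \theta_i\theta_i^\top \xrightarrow{\text{a.s.}} \hat\Sigma + \hat\mu\hat\mu^\top$, while $\overline\theta\,\overline\theta^\top \xrightarrow{\text{a.s.}} \hat\mu\hat\mu^\top$, so $C_\theta \xrightarrow{\text{a.s.}} \hat\Sigma$. For $C_g$, use $g_h(\theta_i)g_h(\theta_i)^\top = \Lambda(\nu-\theta_i)(\nu-\theta_i)^\top\Lambda$; since $\E[(\nu-\theta_i)(\nu-\theta_i)^\top] = \hat\Sigma + (\nu-\hat\mu)(\nu-\hat\mu)^\top$, the SLLN and continuous mapping yield
\begin{equation*}
C_g \xrightarrow{\text{a.s.}} \Lambda\bigl[\hat\Sigma + (\nu-\hat\mu)(\nu-\hat\mu)^\top\bigr]\Lambda - \Lambda(\nu-\hat\mu)(\nu-\hat\mu)^\top\Lambda = \Lambda\hat\Sigma\Lambda.
\end{equation*}
Analogously, $\theta_i g_h(\theta_i)^\top = \theta_i(\nu-\theta_i)^\top\Lambda$ with $\E[\theta_i(\nu-\theta_i)^\top] = \hat\mu(\nu-\hat\mu)^\top - \hat\Sigma$, so the SLLN gives $C_{\theta g} \xrightarrow{\text{a.s.}} [\hat\mu(\nu-\hat\mu)^\top - \hat\Sigma]\Lambda - \hat\mu(\nu-\hat\mu)^\top\Lambda = -\hat\Sigma\Lambda$.

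No step presents a serious obstacle: the only care needed is in invoking the SLLN at the level of matrix-valued averages (which is immediate by applying it entrywise) and in combining the almost sure limits of distinct averages, which is justified by the continuous mapping theorem since a countable intersection of probability-one events has probability one. The Gaussianity of $\hat q$ enters only to guarantee finite second moments of $\theta_i$ (hence integrability of $\theta_i\theta_i^\top$, $g_h(\theta_i)g_h(\theta_i)^\top$ and $\theta_i g_h(\theta_i)^\top$), and the Gaussianity of $p(\theta\mid y)$ enters only through the linearity of $g_h$.
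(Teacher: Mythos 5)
Your proof is correct. The ingredients are the same as the paper's (strong law of large numbers, linearity of the Gaussian score, continuous mapping theorem), but the bookkeeping differs: you expand each centered statistic as ``second sample moment minus outer product of sample means'' and pass each piece to its limit separately, whereas the paper first observes the exact finite-$B$ identities $\overline{g}_h = -\Lambda(\overline{\theta}-\nu)$, $C_g = \Lambda C_\theta \Lambda$ and $C_{\theta g} = -C_\theta\Lambda$ (which hold because $g_h$ is affine, so centering $g_h(\theta_i)$ is the same as applying $-\Lambda$ to the centered $\theta_i$), and then only needs the two limits $\overline{\theta}\xrightarrow{\text{a.s.}}\hat\mu$ and $C_\theta\xrightarrow{\text{a.s.}}\hat\Sigma$ plus continuous mapping. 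The paper's route is a little shorter and avoids computing $\E[\theta_i g_h(\theta_i)^\top]$ and $\E[g_h(\theta_i)g_h(\theta_i)^\top]$ explicitly; yours is marginally more general in spirit, since the moment-by-moment SLLN argument would still apply (with the appropriate limiting expectations) if $g_h$ were not affine, though you do still invoke affinity to evaluate those expectations here. Both are complete and correct; your remarks on entrywise application of the SLLN and on combining countably many almost-sure events are sound, if more detailed than the paper bothers to be.
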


\begin{theorem} \label{thm2}
Suppose the target $p(\theta|y)$ is $\N(\nu, \Lambda^{-1})$. Let the variational approximation $q(\theta)$ be $\N(\mu, \Sigma)$, and $\hat{q}(\theta) = \N(\theta|\hat{\mu}, \hat{\Sigma})$ be an estimate of $q(\theta)$, where $\Sigma$ and $\hat{\Sigma}$ are both diagonal matrices. As $B \rightarrow \infty$, $\hat{S}_{q}(\lambda)$ and $\hat{F}_{q}(\lambda)$ are minimized at $(\mu^{\hat{S}}, \Sigma^{\hat{S}})$ and $(\mu^{\hat{F}}, \Sigma^{\hat{F}})$ respectively, where 
\begin{gather*}
\Sigma^{\hat{S}}_{ii} \xrightarrow{\text{a.s.}}  \sqrt{ \frac{\hat{\Sigma}_{ii}}{ \sum_{j=1}^d \hat{\Sigma}_{jj} \Lambda_{ij}^2 } },\qquad
\Sigma^{\hat{F}}_{ii} \xrightarrow{\text{a.s.}}  \frac{1}{\Lambda_{ii}},\\
\mu_i^{\hat{S}} \xrightarrow{\text{a.s.}}  \hat{\mu}_i +  \sqrt{ \frac{\hat{\Sigma}_{ii}}{ \sum_{j=1}^d \hat{\Sigma}_{jj} \Lambda_{ij}^2 } } \sum_{j=1}^d \Lambda_{ij} (\nu_j - \hat{\mu}_j), \\
\mu_i^{\hat{F}} \xrightarrow{\text{a.s.}}  \hat{\mu}_i +  \frac{1}{\Lambda_{ii}}\sum_{j=1}^d \Lambda_{ij} (\nu_j - \hat{\mu}_j).
\end{gather*}
\end{theorem}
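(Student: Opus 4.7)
The plan is to apply Lemma \ref{Lem2} to obtain the minimizers of $\hat{S}_{q}(\lambda)$ and $\hat{F}_{q}(\lambda)$ in closed form as functions of the summary statistics in step 6 of Algorithm \ref{BatchAlgs}, then invoke Lemma \ref{Lem3} to substitute the almost-sure limits of those summary statistics, and finally simplify using the mean-field structure.

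First, I would read off from Lemma \ref{Lem2} that $\Sigma^{\hat{S}}_{ii} = \sqrt{C_{\theta,ii}/C_{g,ii}}$ and $\mu_i^{\hat{S}} = \overline{\theta}_i + \overline{g}_{h,i}\,\Sigma^{\hat{S}}_{ii}$, and similarly $\Sigma^{\hat{F}}_{ii} = -C_{\theta,ii}/C_{\theta g,ii}$ and $\mu_i^{\hat{F}} = \overline{\theta}_i + \overline{g}_{h,i}\,\Sigma^{\hat{F}}_{ii}$, provided in the FD case that the diagonal entries of $C_{\theta g}$ are negative. By Lemma \ref{Lem3}, $C_{\theta g} \xrightarrow{\text{a.s.}} -\hat{\Sigma}\Lambda$, and since $\hat{\Sigma}$ is diagonal, $(\hat{\Sigma}\Lambda)_{ii} = \hat{\Sigma}_{ii}\Lambda_{ii} > 0$. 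Hence the required negativity condition holds almost surely for all sufficiently large $B$, so the FD formula of Lemma \ref{Lem2} is applicable in the limit.

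Next, I would substitute the limits of Lemma \ref{Lem3} componentwise. For the SD variance, $C_{\theta,ii} \to \hat{\Sigma}_{ii}$ and, using the diagonality of $\hat{\Sigma}$ together with the symmetry of $\Lambda$, $C_{g,ii} \to (\Lambda \hat{\Sigma}\Lambda)_{ii} = \sum_{j=1}^d \Lambda_{ij}^2 \hat{\Sigma}_{jj}$, giving the stated limit for $\Sigma^{\hat{S}}_{ii}$ by the continuous mapping theorem applied to the square-root-of-ratio map. For the FD variance, $C_{\theta g,ii} \to -\hat{\Sigma}_{ii}\Lambda_{ii}$, so $-C_{\theta,ii}/C_{\theta g,ii} \to 1/\Lambda_{ii}$. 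For the means, the limits $\overline{\theta}_i \to \hat{\mu}_i$ and $\overline{g}_{h,i} \to (\Lambda(\nu-\hat{\mu}))_i = \sum_{j=1}^d \Lambda_{ij}(\nu_j - \hat{\mu}_j)$ combine with the already-established variance limits via continuous mapping to yield the two displayed formulas for $\mu_i^{\hat{S}}$ and $\mu_i^{\hat{F}}$.

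There is no serious obstacle: the result is essentially a continuous-mapping corollary of Lemmas \ref{Lem2} and \ref{Lem3}. The only two points requiring minor care are (i) verifying that the negativity premise of Lemma \ref{Lem2} for $\hat{F}_{q}(\lambda)$ holds eventually almost surely, which as noted above follows from $C_{\theta g,ii} \to -\hat{\Sigma}_{ii}\Lambda_{ii} < 0$, and (ii) handling the joint almost-sure convergence of the summary statistics so that continuous mapping applies to the compound expressions; since all limits in Lemma \ref{Lem3} are almost sure on the same probability space, their joint convergence is immediate and the maps involved (ratio, square root, linear combinations) are continuous on the relevant open sets containing the limiting values.
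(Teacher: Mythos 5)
Your proposal is correct and follows essentially the same route as the paper's proof, which likewise combines the closed-form minimizers from Lemma \ref{Lem2} with the almost-sure limits in Lemma \ref{Lem3} via the continuous mapping theorem, using $(\Lambda\hat{\Sigma}\Lambda)_{ii} = \sum_{j=1}^d \hat{\Sigma}_{jj}\Lambda_{ij}^2$ and $(\hat{\Sigma}\Lambda)_{ii} = \hat{\Sigma}_{ii}\Lambda_{ii}$. Your explicit verification that the negativity premise for the FD case holds eventually almost surely is a point the paper only addresses in the discussion following the theorem, so including it is a welcome bit of extra care.
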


From Lemma \ref{Lem3}, $C_{\theta g}$ converges almost surely to $-\hat{\Sigma}\Lambda$, with $i$th diagonal entry $-\hat{\Sigma}_{ii} \Lambda_{ii} < 0$. Thus, diagonal elements of $C_{\theta g}$ are likely negative for a sufficiently large $B$, but may be positive for a small batchsize $B$. In that case, assuming $\mu_i = \bar{\theta}_i + \Sigma_{ii} \bar{g}_{h,i}$, $\nabla_{\Sigma_{ii}} \hat{F}_{q}(\lambda) = -2 \Sigma_{ii}^{-2} (C_{\theta, ii} + C_{\theta g, ii}) < 0$, and $\hat{F}_{q}(\lambda)$ decreases monotonically as $\Sigma_{ii} \rightarrow \infty$. Thus the batch approximated FD faces the issue of ``variance explosion". This is in stark contrast to results in Section \ref{sec:ord_div_Gaussian} where the FD has a closed form solution. On the other hand, the batch approximated SD no longer faces the issue of ``variational collapse", and has a closed form solution for any $B>1$. As $B \rightarrow \infty$, $\Sigma^{\hat{S}}_{ii} \xrightarrow{\text{a.s.}}  
\sqrt{ \hat{\Sigma}_{ii} / (\sum_{j=1}^d \hat{\Sigma}_{jj} \Lambda_{ij}^2) }$, the limit of which is equal to that in \eqref{cov-M} where $M=\hat{\Sigma}$ in the weighted Fisher divergence. It follows from Theorem \ref{thm1} that $\Sigma^{\hat{S}}_{ii} \leq \Sigma^{\hat{F}}_{ii} = \Sigma^{\text{KL}}_{ii}$ in the limit of infinite batchsize. Thus the batch approximated SD underestimates the posterior variance more severely than the batch approximated FD, for which the posterior variance estimate matches that of the KLD, as $B \rightarrow \infty$. However, unlike the FD and SD, the true mean $\nu$ is not recovered by the batch approximated FD and SD even as $B \rightarrow \infty$, unless $\Lambda$ is a diagonal matrix.

\section{Applications}\label{sec_applications}
We evaluate the performances of Algorithms \ref{Algs} and \ref{BatchAlgs} by applying them to logistic regression, GLMMs and stochastic volatility models, and compare their results with BaM, pBaM and MCMC. MCMC sampling is performed using RStan by running 2 chains in parallel, each with 20,000 iterations. The first half is discarded as burn-in, while the remaining 20,000 draws are used to compute kernel density estimates, regarded as the gold standard.

As BaM allows a full covariance matrix, while pBaM uses a more restrictive factor covariance structure and hence may have lower approximation accuracy, we  use BaM whenever it is computationally feasible. pBaM is only used in high-dimensional settings, where BaM is impractical or numerically unstable. The choice of batchsize for FDb, SDb, BaM and pBaM is dependent on the method and model complexity, due to the trade-off between computational efficiency and approximation accuracy. For FDb and SDb, small batchsizes are often sufficient, as only a small step is taken at each iteration due to the reliance on noisy gradient estimates in SGD. In contrast, BaM uses closed form updates that involve matrix inversion and larger batchsizes are necessary to ensure stability and avoid ill-conditioned updates.

To evaluate the multivariate accuracy of variational approximation relative to MCMC, we use maximum mean discrepancy \citep[MMD, ][]{Zhou2023}. We calculate $M^*=-\log (\text{MMD}^2_u+10^{-5})$, where 
\begin{equation*}
\begin{aligned}
\text{MMD}^2_u &= \frac{1}{m(m-1)}\sum_{i \neq j}^m[k(\textbf{x}_v^{(i)},\textbf{x}_v^{(j)})+k(\textbf{x}_g^{(i)},\textbf{x}_g^{(j)})\\
&\quad -k(\textbf{x}_v^{(i)},\textbf{x}_g^{(j)})-k(\textbf{x}_v^{(j)},\textbf{x}_g^{(i)})],
\end{aligned}
\end{equation*}
$\textbf{x}_v^{(1)},\dots,\textbf{x}_v^{(m)}$ and $\textbf{x}_g^{(1)},\dots,\textbf{x}_g^{(m)}$ represent independent samples drawn from the variational approximation and MCMC respectively, $k$ is the radial basis kernel function and $m=1000$. $M^*$ is computed 50 times for each variational approximation and a higher value indicates better multivariate accuracy. In addition, we assess the ability to capture the marginal mean, mode and standard deviation of each variable accurately using the normalized absolute difference ($|\mu - \mu^*|/\sigma^*$, $|\mu - m^*|/\sigma^*$) and standard deviation ratio $\sigma/\sigma^*$, where $\mu$ and $\sigma$ denote the variational mean and standard deviation, and $\mu^*$, $m^*$, $\sigma^*$ denote the mean, mode and standard deviation of each variable based on MCMC samples. The {\em marginal} posterior mode for each variable is reported rather than the joint posterior mode. This distinction arises in high dimensions because MCMC algorithms predominantly explore the typical set, which covers most of the probability mass and is where draws tend to lie in, but this set can lie far from the neighborhood of the global mode. This shell geometry largely vanishes after marginalizing, making the marginal mode a more stable quantity to estimate from MCMC samples \citep{LiuIhler2013, Betancourt2018}.

To assess convergence, we track unbiased estimates of the lower bound, $\hat{\L}$, averaged over every 1000 iterations for SGD methods and 50 iterations for (p)BaM (BaM and pBaM) to reduce noise. Fewer iterations are used for averaging in (p)BaM, as it uses closed form updates, which lead to more stable trajectories. Moreover, (p)BaM usually requires a larger batchsize and converges faster than SGD methods. Each algorithm is terminated when the gradient of a linear regression line fitted to the past five lower bound averages becomes negative (this indicates that the lower bounds have reached a maximum and begun to fluctuate around it), or when the maximum number of iterations is reached. All experiments are performed on a 16GB Apple M1 computer, using R and Julia 1.11.2.

\subsection{Logistic regression} \label{sec_logreg}
\begin{table*}[htb!]
\centering
\begin{tabular}{@{}cccccccc@{}}
\hline
& & KLD & FDr & SDr & FDb & SDb & BaM \\ 
\hline
\multirow{2}{*}{$\frac{|\mu - m^*|}{\sigma^*}$} & \multicolumn{1}{c}{German} & 0.08$\pm$0.06 & 0.57$\pm$0.68 & 0.76$\pm$0.62 & 0.27$\pm$0.25 & \textbf{0.08$\pm$0.05} & 0.08$\pm$0.06  \\ 
& \multicolumn{1}{c}{a4a} & 0.18$\pm$0.15 & 0.45$\pm$0.52 & 0.38$\pm$0.43 & 0.40$\pm$0.43 & 0.15$\pm$0.15 & \textbf{0.14$\pm$0.16}  \\ 
\hline
\multirow{2}{*}{$\frac{|\mu - \mu^*|}{\sigma^*}$} & \multicolumn{1}{c}{German} & 0.02$\pm$0.02 & 0.56$\pm$0.68 & 0.78$\pm$0.61 & 0.25$\pm$0.26 & \textbf{0.01$\pm$0.01} & \textbf{0.01$\pm$0.01}  \\ 
& \multicolumn{1}{c}{a4a} & 0.08$\pm$0.06 & 0.51$\pm$0.55 & 0.45$\pm$0.51 & 0.46$\pm$0.50 & 0.06$\pm$0.06 & \textbf{0.05$\pm$0.06} \\ 
\hline
\multirow{2}{*}{$\frac{\sigma}{\sigma^*}$} & \multicolumn{1}{c}{German} & \textbf{0.99 $\pm$ 0.02} & 0.92  $\pm$ 0.10 & 0.88  $\pm$ 0.17 & \textbf{0.99  $\pm$ 0.02} & \textbf{0.99  $\pm$ 0.02} & \textbf{0.99  $\pm$ 0.02}  \\ 
& \multicolumn{1}{c}{a4a} & 0.61  $\pm$ 0.21 & 0.12  $\pm$ 0.12 & 0.22  $\pm$ 0.31 & 0.54  $\pm$ 0.83 & 0.71  $\pm$ 0.17 & \textbf{0.97  $\pm$ 0.12} \\ 
\hline
\multirow{2}{*}{time} & \multicolumn{1}{c}{German} & 3.6 (45) & 15.0 (60) & 9.8 (39) & 8.5 (32) & 4.7 (16) & 3.4 (0.9) \\ 
& \multicolumn{1}{c}{a4a} & 19.9 (45) & 108.5 (31) & 55.2 (16) & 14.0 (10) & 72.8 (49) & 13.6 (1.05)  \\ 
\hline
\end{tabular}
\caption{Logistic regression. Mean and standard deviation of normalized absolute difference in mode and mean, and standard deviation ratio (best values highlighted in bold). Runtime is in seconds and number of iterations (in thousands) is given in brackets.}
\label{table_germana4a}
\end{table*}

Consider the logistic regression model where $y = (y_1, \dots, y_n)^\top$ represents $n$ independent binary responses. Each $y_i$ follows a Bernoulli distribution with success probability $p_i$, modeled as
\begin{equation*}
\logit (p_{i}) = X_{i}^\top \theta \quad \text{for} \quad i=1, \dots, n.
\end{equation*}
$X_i \in \mathbb{R}^d$ denotes the covariates of the $i$th observation and $\theta \in \mathbb{R}^d$ denotes the unknown coefficients, which is assigned the prior $\N(0, \sigma_0^2I_d)$ with $\sigma_0^2=100$. Here, the precision matrix of the Gaussian variational approximation is not sparse, but a full matrix. The log joint density of the model, gradient and Hessian are given in the supplement.

We fit the logistic regression model to two real datasets from the UCI machine learning repository. The first is the German credit data, which consists of $1000$ individuals classified as having a ``good'' or ``bad'' credit risk, and $20$ attributes. All quantitative predictors are standardized to have mean zero and standard deviation one, while qualitative predictors are encoded using dummy variables. The second is the Adult data with $48,842$ observations, which is used to predict whether an individual's annual income exceeds \$$50,000$ based on $14$ attributes. For MCMC to be feasible, we use the preprocessed a4a data at \url{www.csie.ntu.edu.tw/~cjlin/libsvmtools/datasets/binary.html}, which has $4781$ training samples derived from the Adult data. After preprocessing, $d = 49$ for German credit data and $d = 124$ for a4a data. As the a4a data has a large number of observations, we only generate $10,000$ MCMC samples from two parallel chains, each consisting of $10,000$ iterations. For these datasets, we only use BaM as it already performs  very well. We use a batchsize of $B=3$ for FDb and SDb, and $B=50$ for BaM. The maximum number of iterations is $60,000$.

\begin{figure}[htb!]
\centering
\includegraphics[width=3.518in]{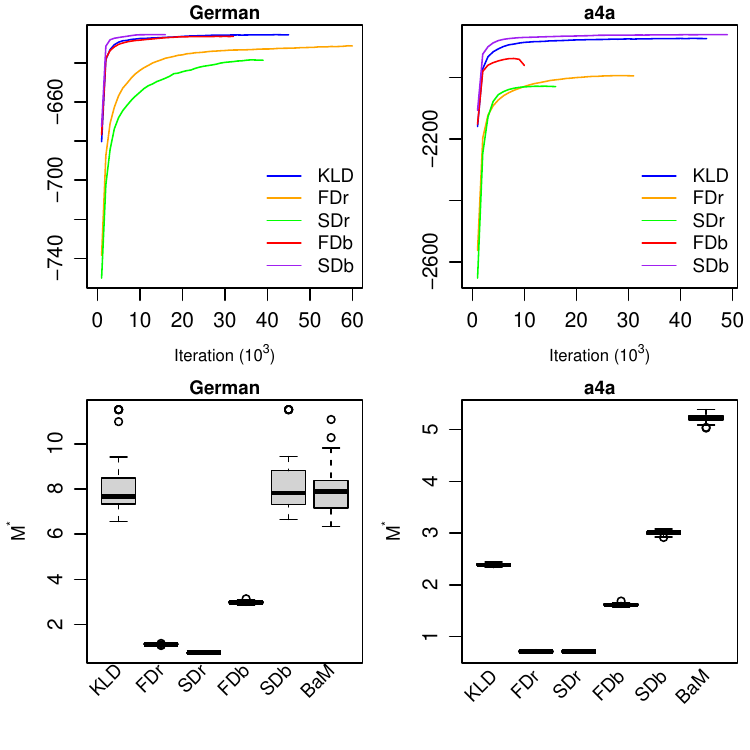}
\caption{Logistic regression. First row contains plots of the lower bound averaged over every 1000 iterations and second row contains boxplots of $M^*$ values.} \label{fig_german}
\end{figure}

Fig \ref{fig_german} shows the progression of the lower bound for SGD methods. FDr and SDr converge very slowly and attain much poorer lower bounds than other methods, likely due to the high variance in their gradient estimates, as discussed in Section \ref{sec variance gradients}. In contrast, SDb converges rapidly and achieves the highest lower bound within the first 1000 iterations, surpassing even KLD. The lower bound achieved by FDb is lower than KLD and SDb although it performs better than FDr and SDr. From the $M^*$ results in Fig \ref{fig_german}, FDr and SDr produce much poorer variational approximations than KLD, while FDb and SDb provide significant improvements over FDr and SDr. In particular, SDb produced better results than KLD.

From Table \ref{table_germana4a}, KLD is the fastest among SGD methods. For the German credit data, FDr, SDr, FDb and SDb each takes $\sim$0.25s per 1000 iterations, but FDb and SDb require fewer iterations to converge. For the a4a data, FDr and SDr require $\sim$3.5s per 1000 iterations compared to  $\sim$1.5s for FDb and SDb, due to the higher cost of computing the Hessian in high dimensions. BaM converges most rapidly, outperforming all SGD methods in runtime. Its $M^*$ values are comparable to KLD for German credit data, and much higher than SGD methods for a4a data.

Overall, $M^*$ values for optimizing FD and SD based on batch approximation are consistently higher than those based on the reparameterization trick. SDb and BaM are better than KLD at estimating the marginal mode and mean accurately for both datasets.

\subsection{Generalized linear mixed model} \label{sec_GLMMs}
Let $y_i=(y_{i1},\dots,y_{in_i})^\top $ denote the $n_i$ observations for the $i$th subject and $y=(y_1^\top,\dots,y_n^\top)^\top$. Each $y_{ij}$ is distributed according to a density in the exponential family, and a smooth invertible link function $g(\cdot)$ relates its mean $\mu_{ij}$ to a linear predictor $\eta_{ij}$ such that
\begin{equation*}
g (\mu_{ij}) = \eta_{ij}=X_{ij}^\top  \beta + Z_{ij}^\top  b_i
\end{equation*}
for $i=1, \dots, n$, $j=1, \dots, n_i$. Here, $\beta \in \mathbb{R}^p$ is the fixed effect, $b_i \in \mathbb{R}^r$ is the random effect characterizing the $i$th subject, and $X_{ij}  \in \mathbb{R}^p$ and $Z_{ij}  \in \mathbb{R}^r$ are the covariates. We assume $b_i \sim \N(0, G^{-1})$ and let $G=WW^{\top}$ be the Cholesky decomposition of precision matrix $G$, where $W$ is a lower triangular matrix with positive diagonal entries. For unconstrained optimization of $W$, we introduce $W^*$ such that $W_{ii}^* = \log(W_{ii})$ and $W_{ij}^* = W_{ij}$ if $i \neq j$, and let $\zeta = \text{vech}(W^*)$. Normal priors, $\beta \sim \N(0, \sigma_0^2I_p)$ and $\zeta \sim \N(0, \sigma_0^2I_{r(r+1)/2})$, where $\sigma_0^2=100$ are assigned. The global variables are $\theta_G = (\beta^{\top},\zeta^{\top})^{\top}$ and the local variables are $\theta_L= (b_1^{\top},\dots,b_n^{\top})^{\top}$. We focus on GLMMs with canonical link functions and responses from the one-parameter exponential family. The gradient $\nabla_\theta \log h(\theta)$ and Hessian $H = \nabla_\theta^2 \log h(\theta)$, which has a sparse structure analogous to that of $\Omega$, are derived in the supplement.

First, consider the epilepsy data \citep{Thall1990} from a clinical trial with $n = 59$ patients, who were randomly assigned to a drug, progabide ($\text{Trt} = 1 $), or a placebo ($\text{Trt} = 0$). The response is the number of seizures experienced by each patient during 4 follow-up periods. Covariates include logarithm of the patient's age at baseline, which is centered by subtracting the mean ($\text{Age}$), logarithm of 1/4 the number of seizures prior to the trial ($\text{Base}$), visit number coded as $-$0.3, $-$0.1, 0.1, 0.3 ($\text{Visit}$), and an indicator of the 4th visit (V4). We consider Poisson mixed models with random intercepts and slopes \citep{Breslow1993},
\begin{multline*}
\text{Epi I:} \log \mu_{ij} = \beta_0+\beta_{\text{Base}} \text{Base}_i+\beta_{\text{Trt}} \text{Trt}_i +\beta_{\text{Age}} \text{Age}_i  
\\
+ \beta_{\text{Base}\text{Trt}} \text{Base}_i \text{Trt}_i +\beta_{\text{V4}} \text{V4}_{ij}   +b_i,
\\
\text{Epi II:}  \log \mu_{ij} = \beta_0+\beta_{\text{Base}} \text{Base}_i+\beta_{\text{Trt}} \text{Trt}_i  +\beta_{\text{Age}} \text{Age}_i 
\\ 
+ \beta_{\text{Base}\text{Trt}} \text{Base}_i \text{Trt}_i  +\beta_{\text{Visit}} \text{Visit}_{ij} +b_{i1} + b_{i2} \text{Visit}_{ij},
\end{multline*}
for $i=1,...,n$, $j=1,...,4$.

\begin{table*}[htb!]
\centering
\begin{tabular}{@{}cccccccc@{}}
\hline
& & KLD & FDr & SDr & FDb & SDb & (p)BaM 
\\ \hline
\multirow{4}{*}{$\tfrac{|\mu - m^*|}{\sigma^*}$} 
& \multicolumn{1}{c}{Epi I}        & \textbf{0.07$\pm$0.05} & 2.06$\pm$1.62 & 2.31$\pm$2.10 & 0.29$\pm$0.19 & \textbf{0.07$\pm$0.05} & 0.08$\pm$0.05 
\\ 
& \multicolumn{1}{c}{Epi II}       & 0.11$\pm$0.09 & 2.13$\pm$1.80 & 2.55$\pm$2.40 & 0.20$\pm$0.15 & \textbf{0.10$\pm$0.08} & 0.10$\pm$0.09 
\\ 
& \multicolumn{1}{c}{Toenail}     & \textbf{0.21 $\pm$ 0.13} & 1.21 $\pm$ 1.30 & 1.58 $\pm$ 2.20 & 0.67 $\pm$ 0.66 & 0.35 $\pm$ 0.21 & 0.33 $\pm$ 0.15 
\\ 
& \multicolumn{1}{c}{Polypharmacy} & \textbf{0.18$\pm$0.11} & 1.00$\pm$1.20 & 1.34$\pm$2.14 & 0.44$\pm$0.39 & 0.22$\pm$0.13 & 0.21$\pm$0.11 
\\ \hline
\multirow{4}{*}{$\tfrac{|\mu - \mu^*|}{\sigma^*}$} 
& \multicolumn{1}{c}{Epi I}        & 0.04$\pm$0.03 & 2.05$\pm$1.63 & 2.31$\pm$2.11 & 0.28$\pm$0.16 & 0.02$\pm$0.02 & \textbf{0.02$\pm$0.01}  \\ 
& \multicolumn{1}{c}{Epi II}       & 0.05$\pm$0.04 & 2.13$\pm$1.79 & 2.56$\pm$2.39 & 0.15$\pm$0.14 & \textbf{0.03$\pm$0.03} & 0.04$\pm$0.04  \\ 
& \multicolumn{1}{c}{Toenail}      & \textbf{0.11 $\pm$ 0.07} & 1.46 $\pm$ 1.20 & 1.83 $\pm$ 2.14 & 0.78 $\pm$ 0.61 & 0.36 $\pm$ 0.23 & 0.28 $\pm$ 0.18 
\\ 
& \multicolumn{1}{c}{Polypharmacy} & \textbf{0.06$\pm$0.04} & 1.16$\pm$1.17 & 1.51$\pm$2.11 & 0.43$\pm$0.39 & 0.17$\pm$0.13 & 0.12$\pm$0.09
\\ \hline
\multirow{4}{*}{$\tfrac{\sigma}{\sigma^*}$} 
& \multicolumn{1}{c}{Epi I}        & 0.95$\pm$0.06 & 0.76$\pm$0.26 & 0.72$\pm$0.34 & 0.81$\pm$0.21 & 0.94$\pm$0.04 & \textbf{0.98$\pm$0.02}  \\ 
& \multicolumn{1}{c}{Epi II}       & 0.96$\pm$0.09 & 0.94$\pm$0.30 & 0.69$\pm$0.28 & 0.88$\pm$0.18 & 0.95$\pm$0.08 & \textbf{0.97$\pm$0.08}  \\ 
& \multicolumn{1}{c}{Toenail}      & \textbf{0.88 $\pm$ 0.05} & 0.35 $\pm$ 0.14 & 0.10 $\pm$ 0.05 & 0.67 $\pm$ 0.14 & 0.76 $\pm$ 0.11 & 0.80 $\pm$ 0.11  
\\ 
& \multicolumn{1}{c}{Polypharmacy} & \textbf{0.94$\pm$0.03} & 0.39$\pm$0.12 & 0.13$\pm$0.05 & 0.80$\pm$0.09 & 0.86$\pm$0.07 & 0.90$\pm$0.06  
\\ \hline
\multirow{4}{*}{time} 
& \multicolumn{1}{c}{Epi I}        & 2.3 (40) & 16.7 (47) & 4.3 (12) & 6.6 (24) & 11.6 (34) & 1.0 (0.4)  \\ 
& \multicolumn{1}{c}{Epi II}       & 5.9 (52) & 73.3 (60) & 26.0 (21) & 16.7 (25) & 39.4 (42) & 17.7 (2.3)  \\ 
& \multicolumn{1}{c}{Toenail}      & 2.8 (11) & 117.4 (30) & 27.7 (7) & 93.7 (30) & 35.0 (10) & 11.3 (1.35)  \\ 
& \multicolumn{1}{c}{Polypharmacy} & 5.1 (10) & 346.1 (30) & 94.7 (7) & 285.8 (30) & 139.2 (12) & 36.7 (1.75)  \\ 
\hline
\end{tabular}
\caption{GLMM. Mean and standard deviation of normalized absolute difference in mode and mean and standard deviation ratio (best values highlighted in bold). Runtime is in seconds and number of iterations (in thousands) is given in brackets. BaM is used for Epi I and Epi II, and pBaM is used for Toenail and Polypharmacy.} \label{table_GLMMs}
\end{table*}

Next, consider the toenail data \citep{DeBacker1998} from a clinical trial comparing two oral antifungal treatments for toenail infections. Each of 294 patients was evaluated for up to 7 visits, resulting in a total of 1908 observations. Patients were randomized to receive 250 mg of terbinafine ($\text{Trt} = 1$) or 200 mg of itraconazole ($\text{Trt} = 0$) per day. The response variable is binary, with 0 indicating no or mild nail separation and 1 for moderate or severe separation. Visit times in months ($t$) are standardized to have mean 0 and variance 1. A logistic random intercept model is fitted to this data, 
\begin{equation*}
\begin{aligned}
\text{logit}(\mu_{ij}) = \beta_0+\beta_{\text{Trt}} \text{Trt}_i+\beta_t t_{ij}
+\beta_{\text{Trt} \times t} \text{Trt}_i \times t_{ij}+b_i, 
\end{aligned}
\end{equation*}
for $i=1,...,294, 1\leq j \leq 7$.

Lastly, we analyze the polypharmacy data \citep{Hosmer2013} which contains 500 subjects, each observed for drug usage over 7 years, resulting in 3500 binary responses. Covariates include Gender (1 for males, 0 for females), Race (0 for whites, 1 otherwise), Age ($\log(\text{age}/10)$) and INPTMHV (0 if there are no inpatient mental health visits and 1 otherwise). The number of outpatient mental health visits (MHV) is coded as $\text{MHV1}=1$ if $1 \leq \text{MHV} \leq 5$, $\text{MHV2}=1$ if $6 \leq \text{MHV} \leq 14$, and $\text{MHV3}=1$ if $\text{MHV} \geq 15$. We consider a logistic random intercept model, 
\begin{align*}
\text{logit}&(\mu_{ij})=\beta_0+\beta_{\text{Gender}}\text{Gender}_i+\beta_{\text{Race}} \text{Race}_i \\
& + \beta_{\text{Age}} \text{Age}_{ij} + \beta_{\text{MHV1}} \text{MHV1}_{ij} + \beta_{\text{MHV2}} \text{MHV2}_{ij}\\
& + \beta_\text{MHV3} \text{MHV3}_{ij}  + \beta_{\text{INPT}} \text{INPTMHV}_{ij} + b_i,
\end{align*}
for $i=1,\dots, 500$, $j=1,\dots,7$.

We set $B=5$ for FDb and SDb. For BAM, $B=100$ for the epilepsy data. For the higher-dimensional toenail and polypharmacy data, BaM is prone to ill-conditioned updates and converges very slowly with smaller batchsizes. Hence, we use pBaM for these two datasets with $B=32$, as recommended by \citet{Modi2025}. We set the rank $K=32$ for toenail data and $K=64$ for polypharmacy data, which is higher in dimension, so a larger $K$ is used. The maximum number of iterations for epilepsy data is $60,000$, which is reduced to $30,000$ for the larger toenail and polypharmacy data.

\begin{figure}[tb!]
\centering
\includegraphics[width=3.518in]{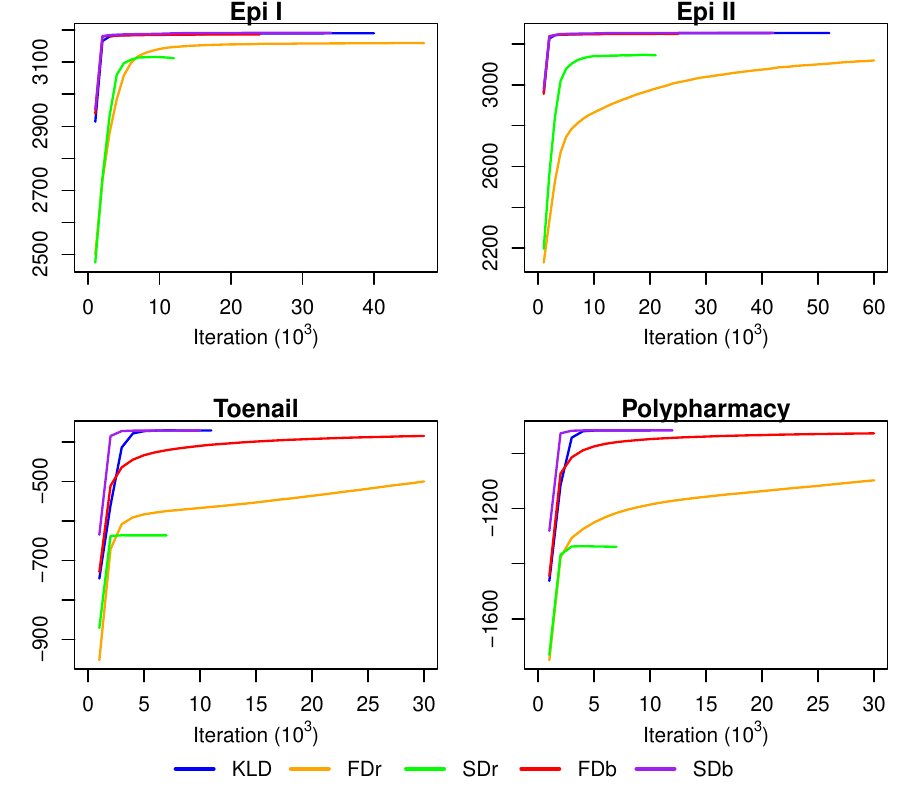}
\caption{GLMM. Lower bound averaged over every 1000 iterations.} \label{GLMMs_LB}
\end{figure}

Fig \ref{GLMMs_LB} shows that SDb is among the fastest to converge among SGD methods, achieving a higher lower bound than KLD for Epi I, Epi II and polypharmacy, and comparable to KLD for toenail. FDb converges rapidly for Epi I and Epi II,  but fails to converge by the maximum number of iterations for toenail and polypharmacy.

\begin{figure}[tb!]
\centering
\includegraphics[width=3.518in]{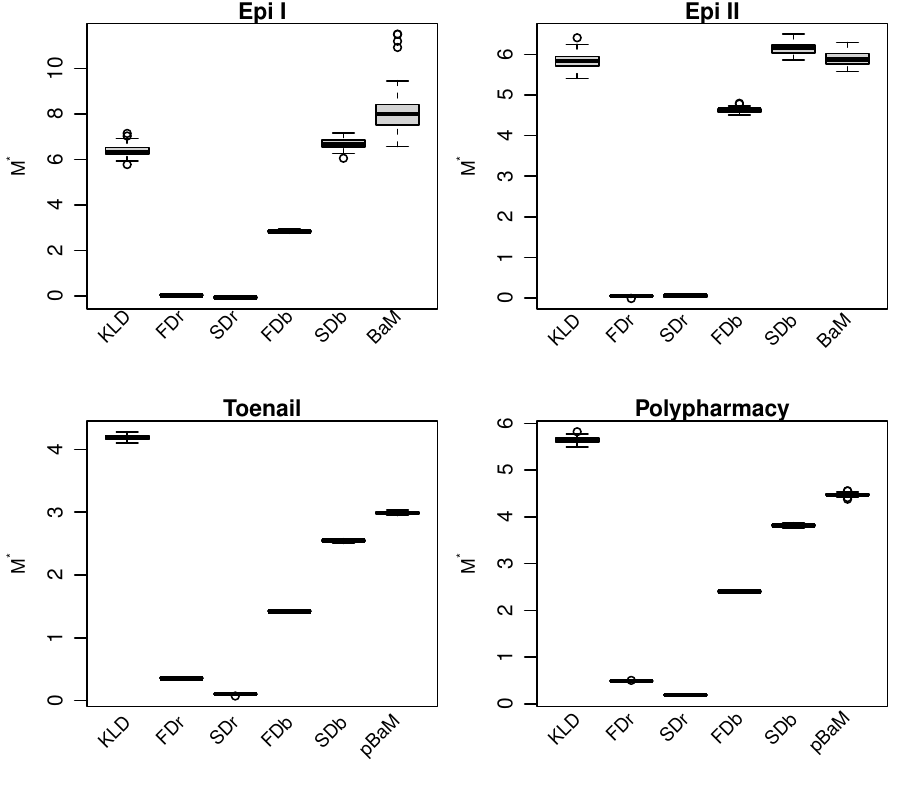}
\caption{GLMM. Boxplots of $M^*$ values.} \label{GLMMs_MMD}
\end{figure}

From Fig \ref{GLMMs_MMD}, FDr and SDr have the lowest $M^*$, while FDb and SDb yield substantial improvements over their reparameterization trick based counterparts. Among SGD methods based on the weighted Fisher divergence, SDb has the highest $M^*$, even surpassing KLD for Epi I and Epi II. BaM also outperforms KLD for Epi I and Epi II. While pBaM outperforms SDb for toenail and polypharmacy, it still falls short of KLD.

From Table \ref{table_GLMMs}, KLD is often able to capture the marginal posterior mean and mode most accurately, with comparable performance from SDb and (p)BaM. While BaM captured the marginal posterior variance most accurately for Epi I and Epi II, pBaM falls behind KLD for toenail and polypharmacy. SDr underestimates the marginal posterior variance most severely, which is reminiscent of the ``variational collapse" problem it faces in the mean-field setting.
(p)BaM is able to converge using the least number of iterations, by leveraging closed-form updates and larger batchsizes. However, the computation cost per iteration of (p)BaM is much higher than SGD methods, which can exploit the sparse structure of the precision matrix. This issue becomes more apparent as the dimension of $\theta$ increases. Among SGD methods, KLD is the fastest. Methods based on FD tend to require more iterations to converge than those based on SD, resulting in longer runtime. SDb converges in about the same number of iterations as KLD, but each iteration takes longer.

\begin{figure*}[htb!]
\centering
\includegraphics[width=6.597in]{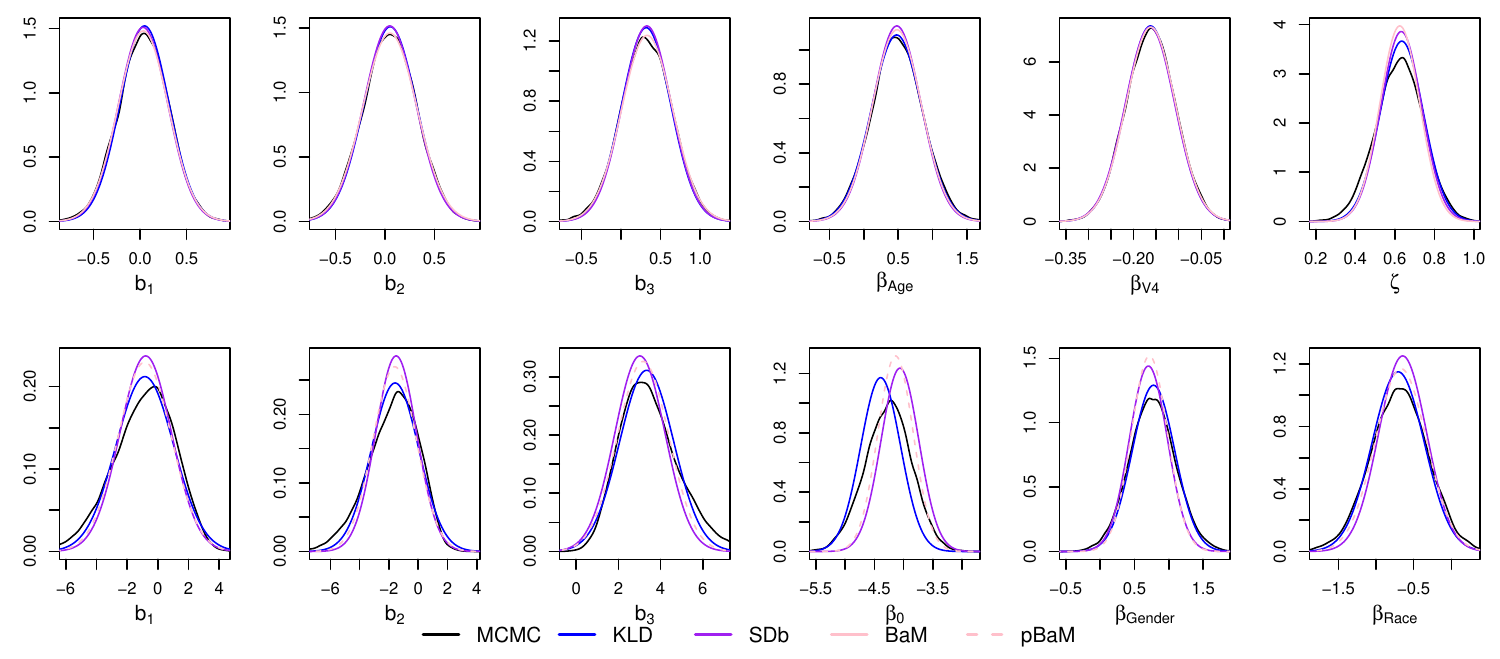}
\caption{Marginal density estimates for some local and global variables in Epi I (first row) and polypharmacy (second row).} \label{EpiPoly_den}
\end{figure*}

Fig \ref{EpiPoly_den} compares the marginal densities estimated using MCMC with variational approximations from KLD, SDb and (p)BaM for some variables in Epi I and polypharmacy. For Epi I, all variational approaches match the MCMC results very closely except for $\zeta$, where SDb underestimated the marginal posterior variance more severely than BaM and KLD. For the higher-dimensional polypharmacy data, KLD matches MCMC results most closely, while SDb and pBaM tend to underestimate the marginal posterior variance although the mode was captured more accurately in some cases.

The results in this section are mixed, although KLD and (p)BaM tend to perform better than other methods. The superior performance of KLD may be related to the findings in Sections \ref{sec:ord_div_Gaussian} and \ref{sec:ord_div_non_Gaussian}, which showed that the mean and variance underestimation for KLD is often less severe than FD or SD, leading to an overall higher accuracy. While SDb and BaM both minimize the batch approximated SD, BaM relies on closed form updates that lead to higher accuracy and faster convergence in low-dimensional problems, compared to SDb which is based on SGD. The advantages of SDb are more apparent for the higher-dimensional stochastic volatility models in Section \ref{sec_SV models}, where (p)BaM fails to converge, and SDb surpasses KLD in overall accuracy and marginal mean and mode estimation. We hypothesize that SDb may perform better than KLD for skewed heavy-tailed posteriors, although this conjecture remains to be verified.

\subsection{Stochastic volatility model} \label{sec_SV models}
The stochastic volatility model is widely used to capture the dynamic nature of financial time series. It provides an attractive alternative to constant volatility models like the Black-Scholes model \citep{black1973pricing}, as the volatility of asset returns evolves over time according to a stochastic process. The response at time $t$ is 
\begin{equation*}
y_t \sim \N(0, \exp(\lambda+\sigma b_t))
\quad \text{for}
\quad t=1,\dots,n,
\end{equation*}
where $\lambda\in \mathbb{R}$, $\sigma>0$, and the latent volatility process $b_t$ follows an autoregressive model of order one such that 
\begin{align*}
b_{t} &\sim \N(\phi b_{t-1},1)\;\text{for}\;t=2,\dots,n, \\
b_1 &\sim \N(0,1/(1-\phi^2)),
\end{align*}
where $\phi\in(0,1)$. To allow unconstrained updates, we apply the transformations, $\alpha = \log \sigma$ and $\psi = \logit(\phi)$. The set of local variables is $\theta_L = (b_1,\dots,b_n)^{\top}$ and global variables are $\theta_G = (\alpha,\lambda,\psi)^{\top}$. We consider the prior $\theta_G \sim \N(0,\sigma_0^2I)$, where $\sigma_0^2=10$. For this model, $b_i$ is independent of $b_j$ given $\theta_G$ a posteriori if $|i-j|>1$. Thus, the Hessian of $\log h(\theta)$ has the same sparsity structure as $\Omega$ in the variational approximation. Both $\nabla_\theta \log(h(\theta))$ and $\nabla_\theta^2 \log h(\theta)$ are derived in the supplement.

\begin{table*}[ht]
\centering
\begin{tabular}{@{}ccccccc@{}}
\hline
& & KLD & FDr & SDr & FDb & SDb \\ 
\hline
\multirow{2}{*}{$\frac{|\mu - m^*|}{\sigma^*}$} & \multicolumn{1}{c}{GBP} & 0.13$\pm$0.09 & 1.03$\pm$0.81 & 0.92$\pm$0.69 & 0.79$\pm$0.60 & \textbf{0.07$\pm$0.05}  \\ 
& \multicolumn{1}{c}{DEM} & 0.11$\pm$0.08 & 1.13$\pm$0.88 & 1.17$\pm$0.71 & 0.96$\pm$0.74 & \textbf{0.07$\pm$0.05}  \\ 
\hline
\multirow{2}{*}{$\frac{|\mu - \mu^*|}{\sigma^*}$} & \multicolumn{1}{c}{GBP} & 0.10$\pm$0.02 & 1.10$\pm$0.86 & 0.98$\pm$0.70 & 0.86$\pm$0.65 & \textbf{0.06$\pm$0.05}  \\ 
& \multicolumn{1}{c}{DEM} & 0.10$\pm$0.03 & 1.17$\pm$0.91 & 1.21$\pm$0.70 & 1.00$\pm$0.77 & \textbf{0.03$\pm$0.02}  \\ 
\hline
\multirow{2}{*}{$\frac{\sigma}{\sigma^*}$} & \multicolumn{1}{c}{GBP} & \textbf{0.92 $\pm$ 0.05} & 0.68 $\pm$ 0.12 & 0.85 $\pm$ 0.19 & 0.48 $\pm$ 0.06 & 0.88 $\pm$ 0.05  \\ 
& \multicolumn{1}{c}{DEM} & \textbf{0.95 $\pm$ 0.03} & 0.60 $\pm$ 0.08 & 0.99 $\pm$ 0.19 & 0.47 $\pm$ 0.04 & 0.91 $\pm$ 0.03  \\ 
\hline
\multirow{2}{*}{time} & \multicolumn{1}{c}{GBP} & 9.2 (20) & 18.9 (30) & 8.3 (13) & 1452.9 (30) & 898.5 (14)  \\ 
& \multicolumn{1}{c}{DEM} & 9.7 (19) & 27.3 (30) & 23.7 (25) & 2748.0 (30) & 1460.3 (13)  \\ 
\hline
\end{tabular}
\caption{Stochastic volatility model. Mean and standard deviation of normalized absolute difference in mode and mean and standard deviation ratio (best values highlighted in bold). Runtime is in seconds and number of iterations (in thousands) is given in brackets.} \label{table_SSMs}
\end{table*}

We analyze two datasets from \texttt{Garch} in the R package \texttt{Ecdat}. The first contains $n = 1323$ observations of the weekday exchange rates of the U.S. Dollar against the British Pound (GBP) from 1 Aug 1980 to 28 Oct 1985. The second contains $n = 1866$ observations of the weekday exchange rates for the U.S. Dollar against the German Deutschemark (DEM) from 2 Jan 1980 to 21 May 1987. For both datasets, the mean-corrected log-return series $\{y_t\}$ is derived from exchange rates $\{r_t\}$ using 
\begin{equation*}
y_t = 100 \times \left\{ \log\left(\frac{r_t}{r_{t-1}}\right) - \frac{1}{n} \sum_{i=1}^n \log\left(\frac{r_i}{r_{i-1}}\right) \right\}.
\end{equation*}

We set $B=10$ for FDb and SDb. We have tried various batchsizes for BaM, but the updates were severely ill-conditioned and BaM failed to converge. The challenge of inferring a full covariance matrix of dimension exceeding 1000 for BaM here is immense, further complicated by the high computational cost of matrix inversion. Although pBaM scales better to high dimensions than BaM, it also fails to converge for $B\in\{32,64,128\}$ and ranks $K \in \{8, 16, 32, 64, 128\}$. This may be because the posterior conditional independence structure of the stochastic volatility model is difficult to capture via a factor covariance matrix. The maximum number of iterations is set as $30,000$.

From Table~\ref{table_SSMs}, SDb provides the best approximations of the mean and mode, while KLD yields the most accurate estimates of the standard deviation. Note that SDr achieves a higher standard deviation ratio of $0.99$ for DEM, but does so with a much larger standard deviation of $0.19$, making KLD more reliable. In terms of runtime, KLD is the most efficient.

Fig \ref{SSMs_MMD} illustrates the impact of varying the batchsize for SDb in terms of convergence rate and approximation accuracy measured by $M^*$. Increasing the batchsize clearly leads to faster convergence and improved accuracy. The total runtime (shown in the legends of the first row) tends to decrease as fewer iterations are required for convergence. This suggests that larger batchsizes can enhance the stability and accuracy of SDb. Notably, the $M^*$ values of SDb exceed those of KLD even with a small $B=3$.

\begin{figure}[tb!]
\centering
\includegraphics[width=3.518in]{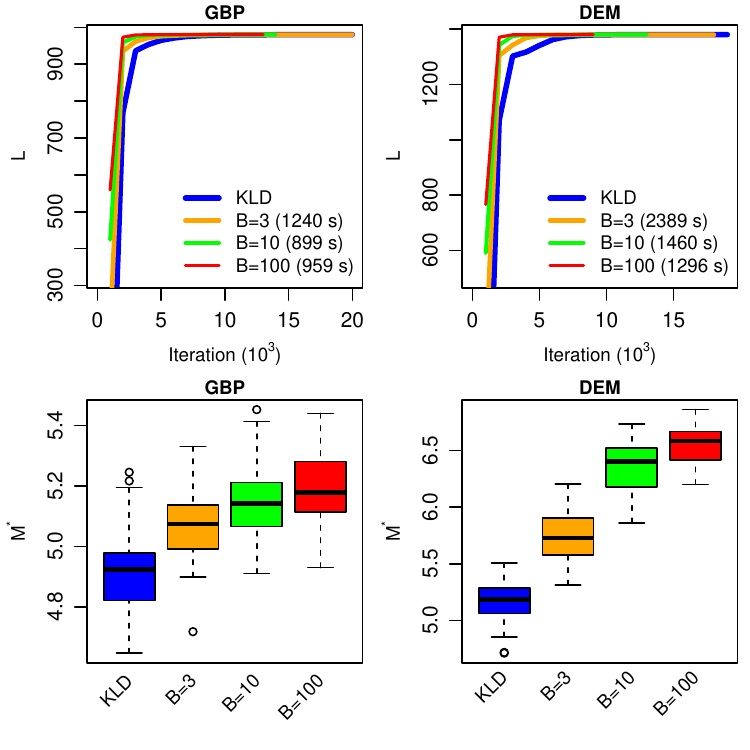}
\caption{Stochastic volatility model. Lower bound averaged over every 1000 iterations and boxplots of $M^*$ values for different batchsizes for SDb.} \label{SSMs_MMD}
\end{figure}

Fig \ref{dm_den} compares the marginal posterior density estimates from MCMC, KLD and SDb ($B=10, 100$) for some local variables and all global variables in DEM. SDb can capture the marginal posterior mode more accurately than KLD, especially for each of the global variables, but has a higher tendency to underestimate the posterior variance. Increasing the batchsize from 10 to 100 helps in reducing underestimation of the posterior variance.

\begin{figure*}[tb!]
\centering
\includegraphics[width=7in]{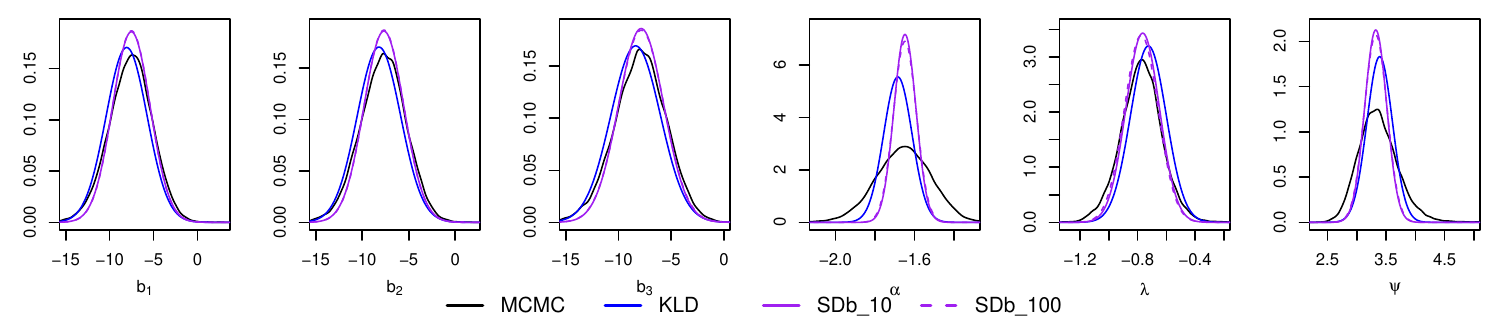}
\caption{Stochastic volatility model. Marginal density estimates for some local and global variables in DEM.} \label{dm_den}
\end{figure*}

\section{Conclusion} \label{sec_conclusion} 
In this article, we evaluate the performance of Gaussian variational inference based on the weighted Fisher divergence by focusing on the FD and SD. First, we consider the mean-field assumption for Gaussian and non-Gaussian targets. We demonstrate that FD and SD tend to underestimate the posterior variance more severely than KLD, and SD can capture the posterior mode more accurately than FD and KLD for skewed targets.

Next, we consider high-dimensional hierarchical models whose posterior conditional independence structure can be captured using a sparse precision matrix in the Gaussian variational approximation. To impose sparsity on the Cholesky factor of the precision matrix, we consider optimization based on SGD and propose two approaches based on the reparametrization trick and a batch approximation of the objective.

The reparametrization trick yields unbiased gradient estimates but involves a Hessian matrix, which is computationally expensive and increases variability in the gradients, leading to reduced stability and slow convergence. To address these issues, we introduce an alternative that minimizes a biased estimate of the FD and SD computed using a random batch of samples at each iteration. This eliminates reliance on the Hessian and improves stability. This approach can also be interpreted as optimizing a new objective, that iteratively improves the match between gradients of the posterior and variational density, at sample points that shift gradually towards regions of high posterior probability. While the general convergence of FDb and SDb remains as an open problem, we make some contribution in this direction by proving the convergence of SDb in the special case of a Gaussian target with infinite batch size, using natural gradients updates with a constant stepsize. We also evaluate the behavior of this new objective under the mean-field assumption for Gaussian targets and show that it alleviates the variational collapse issue faced previously by SD.

The proposed methods are compared to KLD, (p)BaM in applications involving logistic regression, GLMMs and stochastic volatility models. Extensive experiments reveal that FDr and SDr converge very slowly, often to suboptimal variational approximations. FDb and SDb provide substantial improvements over FDr and SDr, with SDb having superior performance in terms of convergence rate and accuracy. (p)BaM, which relies on closed-form updates and hence requires fewer iterations to converge, is very effective for logistic regression. However, it is less efficient than KLD for GLMMs and stochastic volatility models, and its performance gradually worsens as the dimension increases, eventually failing to converge. SDb has an advantage over (p)BaM in high dimensions as it can impose sparsity on the precision matrix, remains feasible computationally and is more stable and less sensitive to poor initialization. SDb can also capture posterior modes more accurately than KLD but is more prone to variance underestimation.

There are several avenues for future research. While this work has focused primarily on two variants of the weighted Fisher divergence (FD and SD), it will be valuable to investigate other variants. Besides Gaussian variational approximations, it is also of interest to investigate the performance of FD and SD under more flexible variational families. While we have used SGD for optimization, the choice of optimizer and associated hyperparameters significantly influences convergence behavior, and it is useful to explore alternative optimization techniques based on natural gradients or which do not rely on SGD. Our findings highlight the potential of the batch approximated SD, and its properties can be  investigated further in other contexts. Finally, proving the convergence of our batch-approximated methods in the practical setting with non-Gaussian targets and finite batches remains as an open problem.

\section{Acknowledgment}
We would like to thank the Editor, Associate Editor and three referees for their comments and helpful suggestions which have improved this manuscript greatly.

%%%%%%%%%%%%%%%%%%%%%%%%%%%%%%%%%%%%%%%%%%%%%%
%% Single Appendix:                         %%
%%%%%%%%%%%%%%%%%%%%%%%%%%%%%%%%%%%%%%%%%%%%%%
%\begin{appendix}
%\section*{???}%% if no title is needed, leave empty \section*{}.
%\end{appendix}
%%%%%%%%%%%%%%%%%%%%%%%%%%%%%%%%%%%%%%%%%%%%%%
%% Multiple Appendixes:                     %%
%%%%%%%%%%%%%%%%%%%%%%%%%%%%%%%%%%%%%%%%%%%%%%
%\begin{appendix}
%\section{???}
%
%\section{???}
%
%\end{appendix}

%%%%%%%%%%%%%%%%%%%%%%%%%%%%%%%%%%%%%%%%%%%%%%
%% Support information, if any,             %%
%% should be provided in the                %%
%% Acknowledgements section.                %%
%%%%%%%%%%%%%%%%%%%%%%%%%%%%%%%%%%%%%%%%%%%%%%
%\begin{acks}[Acknowledgments]
% The authors would like to thank ...
%\end{acks}
%%%%%%%%%%%%%%%%%%%%%%%%%%%%%%%%%%%%%%%%%%%%%%
%% Funding information, if any,             %%
%% should be provided in the                %%
%% funding section.                         %%
%%%%%%%%%%%%%%%%%%%%%%%%%%%%%%%%%%%%%%%%%%%%%%
\begin{funding}
Linda Tan's research is supported by the Ministry of Education, Singapore, under its Academic Research Fund Tier 2 (Award MOE-T2EP20222-0002).  David Nott's research is supported by the Ministry of Education, Singapore, under the Academic Research Fund Tier 2 (MOE-T2EP20123-0009).
\end{funding}

%%%%%%%%%%%%%%%%%%%%%%%%%%%%%%%%%%%%%%%%%%%%%%
%% Supplementary Material, including data   %%
%% sets and code, should be provided in     %%
%% {supplement} environment with title      %%
%% and short description. It cannot be      %%
%% available exclusively as external link.  %%
%% All Supplementary Material must be       %%
%% available to the reader on Project       %%
%% Euclid with the published article.       %%
%%%%%%%%%%%%%%%%%%%%%%%%%%%%%%%%%%%%%%%%%%%%%%
%\begin{supplement}
%\stitle{???}
%\sdescription{???.}
%\end{supplement}

\begin{supplement}
\stitle{Zip file}
\sdescription{Julia code and a pdf file containing derivations, proofs of all lemmas and theorems that are not provided in the maniscript, and additional details.} 
\end{supplement}

%%%%%%%%%%%%%%%%%%%%%%%%%%%%%%%%%%%%%%%%%%%%%%%%%%%%%%%%%%%%%
%%                  The Bibliography                       %%
%%                                                         %%
%%  imsart-???.bst  will be used to                        %%
%%  create a .BBL file for submission.                     %%
%%                                                         %%
%%  Note that the displayed Bibliography will not          %%
%%  necessarily be rendered by Latex exactly as specified  %%
%%  in the online Instructions for Authors.                %%
%%                                                         %%
%%  MR numbers will be added by VTeX.                      %%
%%                                                         %%
%%  Use \cite{...} to cite references in text.             %%
%%                                                         %%
%%%%%%%%%%%%%%%%%%%%%%%%%%%%%%%%%%%%%%%%%%%%%%%%%%%%%%%%%%%%%

%% if your bibliography is in bibtex format, uncomment commands:
%\bibliographystyle{imsart-number} % Style BST file (imsart-number.bst or imsart-nameyear.bst)
%\bibliography{bibliography}       % Bibliography file (usually '*.bib')

\bibliographystyle{chicago}
\bibliography{ref}

\newpage

\noindent
\begin{center}
{ \bf \Large Supplementary Material}
\end{center}

\setcounter{section}{0} \renewcommand{\thesection}{S\arabic{section}}
\setcounter{figure}{0} \renewcommand{\thefigure}{S\arabic{figure}}
\setcounter{table}{0} \renewcommand{\thetable}{S\arabic{table}}
\setcounter{equation}{0} \renewcommand{\theequation}{S\arabic{equation}}
\setcounter{lemma}{0} \renewcommand{\thelemma}{S\arabic{lemma}}

\section{Proof of Lemma \ref{Lem - weighted Fisher div betw Gaussians}}
The $M$-weighted Fisher divergence is 
\begin{align*}
S_M(q \| p) &= \E_q\{\|  \nabla_\theta \log q(\theta)  - \nabla_\theta \log p(\theta|y) \|_M^2\}\\
&=\E_q\{ \| \Sigma^{-1}(\theta-\mu) - \Lambda(\theta-\nu) \|_M^2 \}\\
&=\E_q\{  \| (\Sigma^{-1} - \Lambda)(\theta-\mu) - \Lambda(\mu - \nu) \|_M^2 \}\\
&=\tr\{ (\Sigma^{-1} - \Lambda)M (\Sigma^{-1} - \Lambda) \Sigma \}\\
&\quad +(\mu - \nu)^{\top} \Lambda M \Lambda (\mu - \nu)\\
&= \tr(\Sigma^{-1}M) + \tr(\Lambda M \Lambda \Sigma) - 2 \tr(M\Lambda)\\
&\quad +(\mu - \nu)^{\top}\Lambda M \Lambda (\mu - \nu).
\end{align*}
The final result arises from $\tr(AB) = \sum_{i=1}^d A_{ii} B_{ii}$ if $A$ is a diagonal matrix.

\section{Proof of Theorem \ref{thm-t}}
First we present Lemma \ref{lem_symmetry}, which is required in the proof of Theorem \ref{thm-t}.
\begin{lemma} \label{lem_symmetry}
Let $\theta\sim\N(\mu,\Sigma)$. If $f:\mathbb{R}^d\to\mathbb{R}^k$ is integrable and is an odd function of $(\theta-\mu)$ in that $f(\mu - \theta)=-f(\theta-\mu)$, then $\E_{\theta \sim \N(\mu, \Sigma)} [f(\theta-\mu)]=0$.
\end{lemma}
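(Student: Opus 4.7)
The plan is to translate the statement into an integral over the centered variable and then exploit the invariance of the centered Gaussian density under negation. First I would set $z = \theta - \mu$, so that $z \sim \N(0,\Sigma)$ with Lebesgue density $\phi(z) = (2\pi)^{-d/2}|\Sigma|^{-1/2} \exp(-\tfrac{1}{2} z^\top \Sigma^{-1} z)$, and observe that $\phi$ is an even function, i.e.\ $\phi(-z) = \phi(z)$ for all $z \in \mathbb{R}^d$. Integrability of $f$ gives $\E_{\theta \sim \N(\mu,\Sigma)}[f(\theta-\mu)] = \int_{\mathbb{R}^d} f(z)\,\phi(z)\,dz$, a well-defined Lebesgue integral (interpreted componentwise, so the argument below is applied to each coordinate $f_1,\dots,f_k$ separately).

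Next I would apply the change of variable $z \mapsto -z$, whose Jacobian has determinant $|{-}I_d| = (-1)^d$ but whose absolute value is $1$ and which maps $\mathbb{R}^d$ onto itself, yielding
\begin{equation*}
\int_{\mathbb{R}^d} f(z)\,\phi(z)\,dz = \int_{\mathbb{R}^d} f(-z)\,\phi(-z)\,dz = \int_{\mathbb{R}^d} f(-z)\,\phi(z)\,dz,
\end{equation*}
where the second equality uses $\phi(-z) = \phi(z)$. The hypothesis $f(\mu - \theta) = -f(\theta - \mu)$, applied with $\theta - \mu = z$, reads $f(-z) = -f(z)$, so the last integral equals $-\int f(z)\phi(z)\,dz$. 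Hence $\E[f(\theta-\mu)]$ equals its own negative and must vanish.

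There is no real obstacle to overcome; the only mild subtlety is ensuring that the change of variables is applied rigorously in the vector-valued setting, which is handled by integrability of each component $f_j$ (so that both sides of the equality are finite and the cancellation $I = -I \implies I = 0$ is valid rather than an indeterminate $\infty - \infty$). I would write the proof in two lines: define $z$, invoke the change of variables with the oddness of $f$ and evenness of $\phi$, and conclude.
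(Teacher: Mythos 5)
Your proposal is correct and follows essentially the same route as the paper's proof: center the variable, use the symmetry $\phi(-z)=\phi(z)$ of the centered Gaussian density together with the oddness of $f$ under the substitution $z\mapsto -z$, and conclude from $I=-I$ (with $I$ finite by integrability) that $I=0$. The only difference is that you make the finiteness caveat and the componentwise interpretation explicit, which the paper leaves implicit.
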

\begin{proof}
Let $\theta'=\theta-\mu$ so that $\theta'\sim \N(0,\Sigma)$. Then $f(\theta')=f(\theta-\mu)=-f(\mu-\theta)=-f(-\theta')$. 
\begin{align*}
\E_{\theta \sim \N(\mu, \Sigma)} & [f(\theta-\mu)]
=\E_{\theta' \sim \N(0, \Sigma)} [f(\theta')]\\
&=\int_{\mathbb{R}^d}f(\theta')\phi(\theta' \mid 0, \Sigma) d\theta'\\
&=\int_{\mathbb{R}^d}f(-\theta')\phi(-\theta' \mid 0, \Sigma) d\theta'\\
&=\int_{\mathbb{R}^d} \{- f(\theta')\} \phi(\theta' \mid 0, \Sigma) d\theta'\\
&=-\E_{\theta' \sim \N(0, \Sigma)} [f(\theta')] \\
&=-\E_{\theta \sim \N(\mu, \Sigma)} [f(\theta-\mu)]
\end{align*}
Thus, $\E_{\theta \sim \N(\mu, \Sigma)} [f(\theta-\mu)]=-\E_{\theta \sim \N(\mu, \Sigma)} [f(\theta-\mu)]=0$.    
\end{proof}

For $q(\theta)=\N(\mu,\Sigma)$, $\nabla_\theta\log q(\theta)=-\Sigma^{-1}(\theta-\mu)$. For the multivariate Student's $t$ distribution, let
\begin{equation*}
\delta(\theta)=(\theta-m)^\top S^{-1}(\theta-m),
\quad
w(\theta)=\frac{\nu+d}{\nu+\delta(\theta)}.
\end{equation*}
Then 
\begin{align*}
\nabla_\theta\log p(y, \theta)
&= -w(\theta)S^{-1}(\theta-m), \\
H_p(\theta) 
& =\nabla_\theta^2\log p(y, \theta) = -w(\theta)S^{-1} \\
& \quad + \frac{2w(\theta)}{\nu+\delta(\theta)}S^{-1}(\theta-m)(\theta-m)^\top S^{-1}.
\end{align*}

For the KLD, the evidence lower bound $\L=\E_q[\log p(y, \theta)-\log q(\theta)]$. Since $\E_q[\log q(\theta)]$ is independent of $\mu$, we only need to focus on the first term. By applying the reparametrization trick described in Section \ref{sec_sgd_reparametrization trick} of the manuscript,
\begin{align*}
\nabla_\mu \mathcal L
&= \E_q[\nabla_\theta \log p(y, \theta)] 
= -\E_q[w(\theta)S^{-1}(\theta-m)].
\end{align*}
Here $w(\theta)$ is even in $(\theta-m)$, while $(\theta-m)$ is odd, so the integrand is odd in $(\theta-m)$. Lemma \ref{lem_symmetry} implies that $\nabla_\mu \mathcal L = - \E_q[w(\theta)S^{-1}(\theta-m)]=0$ at $\mu=m$.

For the FD, from Section \ref{sec_sgd_reparametrization trick} of the manuscript, 
\[
\nabla_\mu F(q\|p)=2\E_q[H_p(\theta)(\nabla_\theta \log p(y, \theta)-\nabla_\theta \log q(\theta))].
\]
Note that $H_p(\theta)$ is even in $(\theta-m)$. Moreover, 
\begin{multline*}
\nabla_\theta \log p(y, \theta)-\nabla_\theta \log q(\theta) \\
= -w(\theta)S^{-1}(\theta-m)+\Sigma^{-1}(\theta-m),
\end{multline*}
which is odd in $(\theta-m)$. Thus the integrand in $\nabla_\mu F(q\|p)$ is odd, and by Lemma \ref{lem_symmetry}, $\nabla_\mu F(q\|p) = 0$ at $\mu=m$.

For the SD, from Section \ref{sec_sgd_reparametrization trick} of the manuscript, 
\[
\nabla_\mu S(q\|p) = 2 \E_q [H_p(\theta)\Sigma(\nabla_\theta \log p(y, \theta)-\nabla_\theta \log q(\theta))].
\]
The same argument as for FD shows that the integrand is odd in $(\theta-m)$, and hence $\nabla_\mu S(q\|p) = 0$ at $\mu=m$. 

Hence $\mu=m$ is a stationary point for all three divergences.

\section{Univariate non-Gaussian target}
We begin by deriving some key expressions that are used throughout our analysis of non-Gaussian target distributions. Suppose the variational approximation $q(\theta) = \phi(\theta|\mu, \sigma^2)$. We have 
\begin{align*}
\log q(\theta) 
&= -\frac{1}{2}\log(2\pi) - \frac{1}{2}\log(\sigma^2) - \frac{(\theta-\mu)^2}{2\sigma^2},
\\
\nabla_{\theta}\log q(\theta)
&=-\frac{\theta-\mu}{\sigma^2}, 
\\
\E_q\left\{\log q(\theta)\right\}
&=-\frac{1}{2}\log(2\pi \sigma^2)-\frac{1}{2}.
\end{align*}
For univariate densities, $S(q\|p) = \sigma^2 F(q\|p)$, so only the expression of FD is presented. 

\subsection{Student's $t$}
When the target is the univariate Student's $t$, such that $\theta \sim t(\nu)$, where $\nu$ is the degree of freedom,
\begin{align*}\label{t_grad}
\log p(y, \theta) 
&= \log\left\{\Gamma\left(\frac{\nu+1}{2}\right)\right\} - \frac{1}{2}\log(\pi\nu)\\
&\quad -\log\left\{\Gamma\left(\frac{\nu}{2}\right)\right\} - \frac{\nu+1}{2}\log\left(1+\frac{\theta^2}{\nu}\right),\\
\nabla \log p(y, \theta)
&= -\frac{(\nu+1)\theta}{\nu+\theta^2}.
\end{align*}
The evidence lower bound for KLD is 
\begin{align*}
\L &=\E_q\left\{\log p(y, \theta)-\log q(\theta)\right\}\\
&=\log\left\{\Gamma\left(\frac{\nu+1}{2}\right)\right\}-\frac{1}{2}\log\left\{\Gamma^2\left(\frac{\nu}{2}\right)\nu\right\}\\
&\quad -\frac{\nu+1}{2}\E_q\log\left(1+\frac{\theta^2}{\nu}\right)+\frac{1}{2}\log(2 \sigma^2)+\frac{1}{2}.
\end{align*}
The Fisher divergence is
\begin{align*}
F(q\|p)&=\E_q\{\| \nabla_\theta \log p(y, \theta) -\nabla_\theta \log q(\theta)\|^2\}\\
&=\E_q\left\{\left\|-(\nu+1)\frac{\theta}{\nu+\theta^2}+\frac{\theta-\mu}{\sigma^2}  \right\|^2 \right\}\\
&= (\nu+1)^2 \E_q\left\{\frac{\theta^2}{(\nu+\theta^2)^2}\right\}+\E_q\left\{\frac{(\theta-\mu)^2}{\sigma^4}\right\} \\
&\quad - 2 \frac{\nu+1}{\sigma^2} \E_q\left\{ \frac{\theta(\theta-\mu)}{\nu+\theta^2}\right\}\\
&=(\nu+1)^2\E_q\left\{\frac{\theta^2}{(\nu+\theta^2)^2}\right\}+\frac{1}{\sigma^2}\\
&\quad -\frac{2(\nu+1)}{\sigma^2}\E_q\left\{\frac{\theta(\theta-\mu)}{\nu+\theta^2}\right\}.
\end{align*}

\subsection{Log transformed inverse Gamma}
For the log transformed inverse gamma target, recall that $a_1 = a_0 + n/2 > 1/2$ since $n \geq 1$ and $b_1 = b_0 + \sum_{i=1}^n y_i^2/2$. Then 
\begin{align*}
\log p(y,\theta)&=-\frac{n}{2}\log(2\pi)+a_0\log b_0-\log\Gamma(a_0)\\
& \quad  - a_1 \theta - b_1 \exp(-\theta),\\
\nabla_{\theta}\log p(y,\theta)&= -a_1 + b_1 \exp(-\theta).
\end{align*}
Setting $\nabla_{\theta}\log p(y,\theta) = 0$, the true posterior mode $m_* = \log(b_1/a_1)$. Since $\exp(-\theta) \mid y \sim \text{Gamma}(a_1, b_1)$, the true posterior mean and variance are given by $\mu_* = \E(\theta \mid y) = \log(b_1) - \psi(a_1)$ and $\sigma_*^2 = \Var(\theta \mid y) = \psi_1(a_1)$ respectively \citep[pg. 33,][]{Hall2024}. As $\psi(a_1) < \log(a_1)$ \citep{Alzer1997}, $\mu_* > m_*$ and the true posterior is right skewed.

First, we find the optimal variational parameters $(\hat{\mu}_{\text{KL}}, \hat{\sigma}^2_{\text{KL}})$ that maximize the evidence lower bound for the KLD. We have
\begin{align*}
\E_q\left\{\log p(y, \theta)\right\}&=-\frac{n}{2}\log(2\pi)+a_0\log b_0-\log\Gamma(a_0)\\
&\quad -a_1\mu-\exp\left(\frac{\sigma^2}{2}-\mu\right)b_1.
\end{align*}
Hence, 
\begin{align*}
\L &=\E_q\left\{\log p(y,\theta)-\log q(\theta)\right\}\\
&= \frac{1-n}{2}\log(2\pi)+a_0\log b_0-\log\Gamma(a_0) -a_1 \mu \\
&\quad - b_1 \exp\left( \frac{\sigma^2}{2} -\mu\right) + \frac{1}{2}\log(\sigma^2)+\frac{1}{2}.
\end{align*}
Setting 
\begin{align*}
\nabla_\mu\L&= -a_1+b_1 \exp(\sigma^2/2-\mu)=0,\\
\nabla_{\sigma^2}\L&= -\frac{b_1}{2} \exp(\sigma^2/2-\mu)+\frac{1}{2\sigma^2}=0.
\end{align*}
and solving simultaneously gives the global maximum at
\begin{align*}
\hat{\mu}_{\KL} & =\log\!\Big(\frac{b_1}{a_1}\Big)+\frac{1}{2a_1}, 
\quad
\hat{\sigma}^2_{\KL} = \frac{1}{a_1}.
\end{align*}

Now, we find the optimal variational parameters $(\hat{\mu}_{\text{F}}, \hat{\sigma}^2_{\text{F}})$ and $(\hat{\mu}_{\text{S}}, \hat{\sigma}^2_{\text{S}})$ that minimize the FD and SD respectively. For the FD, 
\begin{align*}%\label{ig-der}
F(q\|p)&=\E_q\{\| \nabla_\theta \log p(y,\theta) -\nabla_\theta \log q(\theta)\|^2\}\\
&=\E_q\left\{\left\|-a_1 + \exp(-\theta)b_1 + \frac{\theta-\mu}{\sigma^2} \right\|^2 \right\}\\
&=a_1^2 + b_1^2 \E_q\left\{ \exp(-2\theta)  \right\}+\E_q\left\{\frac{(\theta-\mu)^2}{\sigma^4}\right\} \\
&\quad -2 a_1 b_1 \E_q\left\{\exp(-\theta) \right\}  - 2a_1 \E_q\left\{ \frac{\theta-\mu}{\sigma^2}\right\} \\
&\quad + 2 b_1 \E_q\left\{\exp(-\theta) \frac{\theta-\mu}{\sigma^2}\right\} 
\\
&= a_1^2 + b_1^2\exp (2\sigma^2 - 2\mu ) \\
&\quad - 2 b_1 (a_1 + 1) \exp (\sigma^2/2-\mu ) + 1/\sigma^2 ,
\end{align*}
since $\E_q\{ \exp(-a\theta) \} = \exp(a^2\sigma^2/2 - a\mu)$ for any constant $a \in \mathbb{R}$. It follows that 
\begin{align*}
S(q\|p) 
&= \sigma^2 F(q\|p) \\
&= \sigma^2 \{ a_1^2 + b_1^2\exp (2\sigma^2 - 2\mu ) \\
&\quad - 2 b_1 (a_1 + 1) \exp (\sigma^2/2-\mu )\} + 1, \\
\nabla_\mu S(q\|p) &= 2 b_1 \sigma^2 \exp (\sigma^2/2-\mu ) \{ a_1 + 1  \\
&\quad - b_1\exp (3\sigma^2/2 - \mu ) \}. 
\end{align*}
Note that $\nabla_\mu S(q \| p) = \sigma^2\nabla_\mu F(q\|p)$. Therefore, setting $\nabla_\mu S(q\|p) = 0$ and $\nabla_\mu F(q\|p) = 0$ both lead to the same condition,
\[
 \mu = \log \frac{ b_1}{a_1 + 1} + \frac{3\sigma^2}{2}.
\]
At this value of $\mu$,
\begin{align*}
F(\sigma^2) &= F(q\|p)|_{\mu = \hat{\mu}_F} \\
&= a_1^2 - (a_1+1)^2  \exp(-\sigma^2) + \frac{1}{\sigma^2}. \\
S(\sigma^2) &= S(q\|p)|_{\mu = \hat{\mu}_S} \\
&= a_1^2 \sigma^2 - (a_1+1)^2 \sigma^2 \exp(-\sigma^2) + 1. 
\end{align*}
Setting
\begin{align*}
F'(\sigma^2)  &= (a_1+1)^2 \exp(-\sigma^2) -\frac{1}{\sigma^4}=0, 
\\
S'(\sigma^2) &= a_1^2  + (a_1+1)^2 (\sigma^2 -1) \exp(-\sigma^2) = 0,
\end{align*}
we obtain
\begin{align*}
\hat{\sigma}^2_{\F} &=-2W_0\left(-\frac{1}{2(a_1+1)}\right),\\
\hat{\sigma}^2_{\S} &=1 - W_0\left(\frac{\e a_1^2}{(a_1+1)^2}\right),
\end{align*}
where $W_0$ is the principal branch of the Lambert W function \citep{Corless1996}. The Lambert W function yields the solution to the equation $z \exp(z) = a$, such that $z = W_0(a) \geq 0$ if $a \geq 0$, and either $z=W_0(a)\in [-1,0)$ or $z = W_{-1}(a) \leq -1$ if $-\e^{-1} \leq a < 0$. For SD, the argument $0 < {\e a_1^2}/{(a_1+1)^2} < \e$ and hence $S(\sigma^2)$ has a global minimum at $\hat{\sigma}^2_{\S} \in (0,1)$. For FD, it can be verified that $-\e^{-1}<-1/\{2(a_1 + 1)\} <0$ and hence $F(\sigma^2)$ has two stationary points, one in $(0,2)$ and the other in $(2, \infty)$. As $\lim_{\sigma^2 \rightarrow 0^+} F(\sigma^2) = +\infty$ and $\lim_{\sigma^2 \rightarrow +\infty} F(\sigma^2) = a_1^2$, the global minimum occurs in $(0,2)$ and is given by the principal branch $W_0(\cdot)$. Plots of $F(\sigma^2)$ and $S(\sigma^2)$ are given in Fig \ref{fig_loggammaFDSD}. It follows that 
\begin{align} \label{loggamma means FD SD}
\hat{\mu}_{\F} &=\log \frac{ b_1}{a_1 + 1} + \frac{3\hat{\sigma}^2_{\F}}{2}, 
\\
\hat{\mu}_{\S} &= \log \frac{ b_1}{a_1 + 1} + \frac{3\hat{\sigma}^2_{\S}}{2}. \nonumber
\end{align}

\begin{figure}[htb!]
\centering
\includegraphics[width=\linewidth]{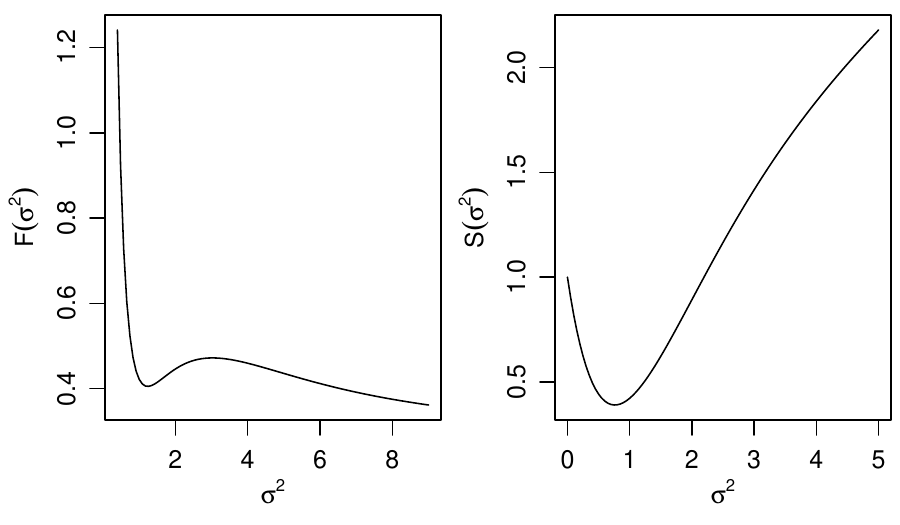}
\caption{Plots of $F(\sigma^2)$ and $S(\sigma^2)$ when $a_1=0.501$.}
\label{fig_loggammaFDSD}
\end{figure}

Before proving Theorem \ref{thm-loggamma}, we require some intermediate results, which are summarized in Lemma \ref{lem_loggamma1}.
\begin{lemma} \label{lem_loggamma1}
\begin{enumerate}[i.]
\item[]
\item Let $h(x) = x/2 - \log(x) - \log(a_1+1)$. Then $h(1/a_1) < 0$ for $a_1 > 1/2$ .
\item $\psi_1(x) > 1/x$ for $x>0$.
\item $S'(c) < 0$ where $c = (2/3) \log (1 + 1/a_1)$.
\item $(2/3) \log(1+1/a_1) + 1/(3a_1) > b$ where $b= (2/3)/(a_1+1/2) + 1/(3a_1)$.
\item $f(a_1) = \log(a_1+1) + \log(b) - b/2 > 0$ for $a_1 > 1/2$ where $b$ is defined in iv.
\end{enumerate}
\end{lemma}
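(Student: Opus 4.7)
The plan is to treat each claim separately: (i), (iv) reduce to single-variable calculus identities; (ii) follows in one line from a standard integral representation; (iii) reduces to a tractable analysis of an auxiliary function of $v = 1 + 1/a_1$; and (v) is the main obstacle, handled via the structural identity $f(a_1) = -h(b)$.

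For (i), I would evaluate $h(1/a_1) = \tfrac{1}{2a_1} - \log(1+1/a_1)$ and substitute $u = 1/a_1 \in (0,2)$ to reduce the claim to $\log(1+u) > u/2$. The auxiliary function $g(u) = \log(1+u) - u/2$ satisfies $g(0) = 0$, is increasing on $(0,1)$ and decreasing on $(1,\infty)$ via $g'(u) = 1/(1+u) - 1/2$, and $g(2) = \log 3 - 1 > 0$, so $g > 0$ on $(0,2)$. For (ii), I would use the integral representation $\psi_1(x) = \int_0^\infty t\,e^{-xt}/(1-e^{-t})\, dt$ together with the elementary bound $1 - e^{-t} < t$ for $t > 0$; the integrand then exceeds $e^{-xt}$ pointwise, and integration yields $\psi_1(x) > 1/x$. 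For (iv), setting $u = 1/a_1$ and subtracting the common $1/(3a_1)$ term reduces the claim to $\log(1+u) > 2u/(2+u)$; writing $k(u) = \log(1+u) - 2u/(2+u)$ with $k(0) = 0$, a direct computation collapses the derivative to $k'(u) = u^2/[(1+u)(2+u)^2] > 0$ on $u>0$, finishing the argument.

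For (iii), I would substitute $c = \tfrac{2}{3}\log(1+u)$ with $u = 1/a_1$ into the closed-form expression $S'(\sigma^2) = a_1^2 + (a_1+1)^2(\sigma^2 - 1)\,e^{-\sigma^2}$ derived in the preceding analysis. Using $e^{-c} = (1+u)^{-2/3}$ and $(a_1+1)^2 = (1+u)^2/u^2$, the condition $S'(c) < 0$ rearranges to $\phi(v) := v^{4/3}\{1 - \tfrac{2}{3}\log v\} > 1$ for $v = 1+u \in (1,3)$. A direct differentiation gives $\phi'(v) = v^{1/3}\{\tfrac{2}{3} - \tfrac{8}{9}\log v\}$, so $\phi$ increases on $(1, e^{3/4})$ and decreases on $(e^{3/4}, \infty)$, with $\phi(1) = 1$ and $\phi(3) = 3^{4/3}(1 - \tfrac{2}{3}\log 3) > 1$. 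Hence $\phi > 1$ throughout $(1, 3)$, which gives $S'(c) < 0$.

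The hard part is (v). The key structural observation is that $f(a_1) = -h(b)$, where $h$ is the function from (i), so $f(a_1) > 0$ is equivalent to $h(b) < 0$, i.e., $b$ lies strictly between the two roots of $h$. Since $h$ is decreasing on $(0, 2)$ from $+\infty$ down to $h(2) < 0$ and then increasing, part (i) places $1/a_1$ in this negative region, but part (iv) only yields $b < 1/a_1$, which does not by itself locate $b$ above the smaller root $r_1 = \hat{\sigma}^2_{\F}$. I would therefore work directly: write $b = u(6+u)/[3(2+u)]$ and set $F(u) := f(a_1(u)) = \log(1+u) + \log[(6+u)/(3(2+u))] - u(6+u)/[6(2+u)]$. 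A Taylor expansion near $u = 0$ gives $F(u) = u/6 - u^2/3 + O(u^3) > 0$ for small $u$, and the boundary $F(2) = \log 2 - 2/3 > 0$ handles the right endpoint. Verifying positivity uniformly on $(0,2)$ is the technical hurdle; my plan is to differentiate $F$ explicitly, clear denominators to obtain a rational function whose numerator is a polynomial on $(0,2)$, and either bound the numerator directly or partition $(0, 2)$ into two subintervals, combining the second-order Taylor lower bound $\log(1+x) \geq x - x^2/2$ on a left subinterval (where it is tight enough) with a monotonicity argument on the right based on explicit sign analysis of $F'$.
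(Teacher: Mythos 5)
Parts (i)--(iv) are correct and complete, though by somewhat different routes than the paper: for (i) you substitute $u=1/a_1$ and analyze $\log(1+u)-u/2$ on $(0,2)$, whereas the paper studies $g(x)=\log x + 1/(2x)-\log(x+1)$ directly in $a_1$; for (ii) you use the integral representation of $\psi_1$ with $1-e^{-t}<t$, whereas the paper compares the series $\sum (n+x)^{-2}$ with the telescoping series for $1/x$; for (iii) your reduction to $\phi(v)=v^{4/3}\{1-\tfrac23\log v\}>1$ on $(1,3)$ is an algebraically equivalent reformulation of the paper's reduction to $\log(1-c)+2c>0$, and both are settled by the same unimodality-plus-endpoints argument; for (iv) you prove the inequality $\log(1+u)>2u/(2+u)$ from scratch via $k'(u)=u^2/\{(1+u)(2+u)^2\}>0$, where the paper simply cites it from Love (1980) --- your version is self-contained and arguably preferable. (Minor slip: your second-order Taylor coefficient in (v) should be $-2/9$, not $-1/3$, but this does not affect anything.)

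The genuine gap is in (v), which you have outlined but not proved. Your structural observation $f(a_1)=-h(b)$ is correct and usefully explains why (i) and (iv) alone cannot close the argument (they locate $b$ below $1/a_1$ but not above the smaller root of $h$), and your proposed fallback --- differentiate, clear denominators, and sign-analyze a polynomial numerator --- is exactly the right move. But you stop at the plan: the Taylor expansion near $u=0$ and the endpoint check $F(2)=\log 2 - 2/3>0$ do not establish positivity on all of $(0,2)$, and you explicitly defer the uniform verification. The paper carries this computation out in the $a_1$ variable and finds
\[
f'(a_1) = -\frac{120a_1^4 + 100a_1^3 + 6a_1^2 - 5a_1 - 1}{6a_1^2(a_1+1)(2a_1+1)^2(6a_1+1)},
\]
whose numerator polynomial is easily seen to be positive for $a_1>1/2$ (it equals $18$ at $a_1=1/2$ and is increasing there), so $f$ is strictly decreasing with $\lim_{a_1\to\infty}f(a_1)=0^+$, and positivity follows at once --- no partitioning of the interval or Taylor lower bounds are needed. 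Your approach would succeed if you completed this derivative computation (in the $u$ variable it shows $F'>0$ on $(0,2)$ with $F(0^+)=0$), but as written the key step of (v) is asserted rather than established.
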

\begin{proof}
For i, we want to show that $h(1/a_1) = \log(a_1) + 1/(2a_1) - \log(a_1+1)<0$. Let 
$g(x)= \log x + 1/(2x) - \log(x+1)$. 
Then $g'(x)= (x-1)/ \{2x^2(x+1) \}$. Hence $g$ is decreasing on $(1/2,1]$ and increasing on $[1,\infty)$, with $g(1/2) = -2/3$ and $\lim_{x \to +\infty} g(x)=0^-$. Thus $g(x)<0$ for all $x>\frac{1}{2}$, and hence $h(1/a_1) <0$.

For ii, we can write $\psi_1(x) = \sum_{n=0}^\infty 1/(n+x)^2$ and $1/x = \sum_{n=0}^\infty \{1/(n+x) - 1/(n+x+1)\}$. For $x>0$, 
\[
\psi_1(x) - \frac{1}{x} = \sum_{n=0}^\infty 
\frac{1}{n+x} \left( \frac{1}{n+x}  - \frac{1}{n+x+1} \right) > 0.
\]

For iii, $S'(c) = a_1^2  + (a_1+1)^2 (c -1) \exp(-c) < 0$ is equivalent to $\log(1-c) + 2c > 0$ since $\log\{a_1/(a_1+1)\} = -3c/2$. Let $k(x) = \log(1-x) + 2x$ for $0 < x < 1$. Then $k'(x) = (1-2x)/(1-x)$. Thus $k(x)$ increases from $(0,0)$, reaches a maximum at $(1/2, \log(1/2) + 1)$ and decreases to $-\infty$ with an asymptote at $x=1$. Since $0 <c<(2/3) \log(3)$, and $k(2/3\log3) = 0.147> 0$, we conclude that $k(c) > 0$ and hence $S'(c) < 0$.

For iv, we use the inequality $\log(1+x) > x/(1+x/2)$ for $x>0$ from \cite{Love1980}, which implies that $\log(1+1/a_1) > 1/(a_1+1/2)$.

For v, as $b = (a_1 + 1/6)/\{a_1(a_1+1/2)\}$
\begin{multline*}
f'(a_1) = \frac{1}{a_1+1} + \frac{1}{a_1+1/6} - \frac{1}{a_1} - \frac{1}{a_1+1/2} + \frac{1}{6a_1^2} \\
- \frac{1}{3(a_1 + 1/2)^2} = -\frac{120a_1^4 + 100a_1^3 + 6a_1^2 -5a_1 -1}{6a_1^2(a_1+1)(2a_1+1)^2 (6a_1+1)}.
\end{multline*}
It can be verified that $f'(a_1) < 0$ and hence $f(a_1)$ is strictly decreasing for $a_1 > 1/2$. Moreover $f(0.5) =\log(2) - 2/3 > 0$ and $\lim_{a_1 \to +\infty} f(a_1) = 0^+$. Hence $f(a_1) > 0$ for $a_1>1/2$.
\end{proof}

\begin{proof}[Proof of Theorem \ref{thm-loggamma}]
First, we establish the ordering for the variance parameters. At the global minimum of FD, we have $F'(\hat{\sigma}^2_{\F})=0$. As $S(\sigma^2) = \sigma^2 F(\sigma^2)$, 
\begin{equation*}
S'(\hat{\sigma}^2_{\F})=F(\hat{\sigma}^2_{\F})+\hat{\sigma}^2_{\F}F'(\hat{\sigma}^2_{\F})=F(\hat{\sigma}^2_{\F})>0.
\end{equation*}
The strict inequality above holds because $F(q\|p)=0$ if and only if $q(\theta) = p(\theta \mid y)$ almost everywhere. However, this is not true as $q(\theta)$ is symmetric while $p(\theta \mid y)$ is right skewed. Since $S(\sigma^2)$ has a global minimum and $S'(\hat{\sigma}^2_{\F})>0$, its minimum must occur strictly before $\hat{\sigma}^2_{\F}$, and hence
\[
\hat{\sigma}^2_{\S}<\hat{\sigma}^2_{\F}.
\]
Next, we want to show that $0 < \hat{\sigma}^2_{\F}<\hat{\sigma}^2_{\KL} = 1/a_1$ < 2. By setting $F'(\sigma^2) = 0$, observe that $h(\sigma^2) = \sigma^2/2 - \log(\sigma^2) - \log(a_1+1)$ is strictly decreasing and has a single root $\hat{\sigma}^2_{\F}$ in $(0,2)$. Moreover, $h(1/a_1) < 0$ from Lemma \ref{lem_loggamma1}i. Hence $\hat{\sigma}^2_{\F}<\hat{\sigma}^2_{\KL}$. Finally, $\hat{\sigma}^2_{\KL} = 1/a_1 < \psi_1(a_1) = \sigma_*^2$ from Lemma \ref{lem_loggamma1}ii. Hence,  
\[
\hat{\sigma}^2_{\S} < \hat{\sigma}^2_{\F} < \hat{\sigma}^2_{\KL} < \sigma_*^2.
\]

Next, we establish the ordering for the mean parameters. First, $\hat{\sigma}^2_{\S}<\hat{\sigma}^2_{\F}$ implies that $\hat\mu_{\S}<\hat\mu_{\F}$ from \eqref{loggamma means FD SD}. Next, 
\begin{align*}
\mu_*-\hat\mu_{\KL}
&= \{\log b_1 - \psi(a_1)\} - \left(\log \frac{b_1}{a_1} +\frac{1}{2a_1}\right)\\
&= \log a_1 - \psi(a_1) - \frac{1}{2a_1}>0
\end{align*}
for $a_1>0$, based on a result from \cite[pg 374,][]{Alzer1997}. Hence $\hat\mu_{\KL}<\mu_*$. We want to show that 
\begin{align*}
\hat\mu_{\S}- m_*
&= \left( \log \frac{ b_1}{a_1 + 1} + \frac{3\hat{\sigma}^2_{\S}}{2} \right) - \log \frac{b_1}{a_1} \\
&= \frac{3}{2}\hat{\sigma}^2_{\S} -\log\left(1+\frac1{a_1}\right) > 0.
\end{align*}
This is equivalent to showing that $\hat{\sigma}^2_{\S} > c$, where $0 < c = (2/3) \log (1 + 1/a_1) < (2/3) \log(3) \approx 0.732$. Recall that $S(\sigma^2)$ has a global minimum at $\hat{\sigma}^2_{\S}$. Thus, it suffices to show that $S'(c) < 0$, which holds from Lemma \ref{lem_loggamma1}iii.  

It remains to show that $\hat\mu_{\F}<\hat\mu_{\KL}$. We have 
\begin{align*}
\hat\mu_{\KL}-\hat\mu_{\F}
&= \left( \frac{1}{2a_1} + \log \frac{b_1}{a_1} \right) 
- \left( \log \frac{ b_1}{a_1 + 1} + \frac{3\hat{\sigma}^2_{\F}}{2} \right) \\
&= \log\left(\frac{a_1+1}{a_1}\right)+\frac{1}{2a_1} -\frac{3}{2}\hat{\sigma}^2_{\F}.
\end{align*}
Hence, our goal is to show that $\hat{\sigma}^2_{\F} < (2/3) \log(1+1/a_1) + 1/(3a_1)$. We split the proof into two parts by showing that $\hat{\sigma}^2_{\F} < b$ and $b \leq (2/3) \log(1+1/a_1) + 1/(3a_1)$, where $0 < b=(2/3)/(a_1+1/2) + 1/(3a_1) < 4/3$. The second part of the proof is given in Lemma \ref{lem_loggamma1}iv. For the first part of the proof, recall that $F(\sigma^2)$ has a single global minimum in $(0,2)$. Hence it suffices to show that $F'(b) >0$ or equivalently that $\log(a_1+1) + \log(b) - b/2 > 0$. This is true from Lemma \ref{lem_loggamma1}v.

Therefore, we have $m_*<\hat\mu_{\S}<\hat\mu_{\F}<\hat\mu_{\KL}<\mu_*$.
\end{proof}

\subsection{Skew normal}
The probability density function (pdf) for $\theta \sim \SN(m, t, \lambda)$ is
\begin{equation*} 
p(y, \theta) = 2 \, \phi(\theta|m, t^2) \, \Phi \{\lambda(\theta - m) \}.
\end{equation*}
The log-density and gradient for the skew normal are
\begin{align*} \label{skew_log}
\log p(y, \theta)&=\log 2 -\frac{1}{2}\log(2\pi t^2) -\frac{(\theta-m)^2}{2t^2}\\
&\quad +\log[\Phi \{\lambda(\theta - m) \}],\\
\nabla_{\theta}\log p(y, \theta)&=-\frac{\theta-m}{t^2}+\frac{\lambda\phi\{\lambda(\theta-m)\}}{\Phi \{\lambda(\theta - m) \}}.
\end{align*}
Taking the expectation of $\log p(y, \theta)$ with respect to $q(\theta)$, 
\begin{align*}
\E_q\left\{\log p(y, \theta)\right\}&=\log 2 -\frac{1}{2}\log(2\pi t^2) -\frac{\sigma^2+(\mu -m)^2}{2t^2}\\
&\quad +\E_q\log[\Phi \{\lambda(\theta - m) \}].
\end{align*}
For KLD, we maximize the evidence lower bound
\begin{align*}
\L &=\E_q\left\{\log p(y, \theta)-\log q(\theta)\right\}\\
&= \log 2 - \log(t) -\frac{\sigma^2+(\mu - m)^2}{2t^2} \\
&\quad + \E_q\log[\Phi \{\lambda(\theta - m) \}] +\log(\sigma)+\frac{1}{2}.
\end{align*}

The FD is given by
\begin{align*}%\label{fisher-der}
& F(q\|p)=\E_q\{\| \nabla_\theta \log p(y, \theta) -\nabla_\theta \log q(\theta)\|^2\}\\
&=\E_q\left\{\left\|-\frac{\theta-m}{t^2}+\frac{\lambda\phi\{\lambda(\theta-m)\}}{\Phi \{\lambda(\theta - m) \}}+\frac{\theta-\mu}{\sigma^2} \right\|^2\right\}\\
&=\E_q\left\{\frac{(\theta-m)^2}{t^4}\right\}+2\,\E_q\left[\frac{\lambda\phi\{\lambda(\theta-m)\}}{\Phi \{\lambda(\theta - m) \}}\frac{(\theta-\mu)}{\sigma^2}\right]\\
&\quad +\E_q\left\{\frac{(\theta-\mu)^2}{\sigma^4}\right\}-2\,\E_q\left[\frac{(\theta-m)}{t^2}\frac{\lambda\phi\{\lambda(\theta-m)\}}{\Phi\{\lambda(\theta-m)\}}\right]\\
&\quad -2\,\E_q\left\{\frac{(\theta-m)(\theta-\mu)}{t^2\sigma^2}\right\} 
+\E_q\left[\frac{\lambda^2\phi^2\{\lambda(\theta-m)\}}{\Phi^2 \{\lambda(\theta - m) \}}\right].
\end{align*}
After computing the 1st, 3rd and 5th terms in the final expression exactly, we obtain 
\begin{align*}
F(q\|p)&= \frac{\sigma^2+(\mu - m)^2}{t^4} + \lambda^2\E_q\left[\frac{\phi^2\{\lambda(\theta-m)\}}{\Phi^2 \{\lambda(\theta - m) \}}\right] \\
&\quad + \frac{1}{\sigma^2} -\frac{2\lambda}{t^2}\,\E_q\left[ (\theta-m)\frac{\phi\{\lambda(\theta-m)\}}{\Phi\{\lambda(\theta-m)\}}\right]-\frac{2}{t^2}\\
&\quad + \frac{2\lambda}{\sigma^2} \E_q\left[ (\theta-\mu) \frac{\phi\{\lambda(\theta-m)\}}{\Phi\{\lambda(\theta-m)\}}\right].
\end{align*}

\section{SGD based on reparametrization trick}
As $\theta=\mu+T^{-\top}z$, we have 
\begin{equation*}
\begin{aligned}
d\theta = d\mu
\end{aligned}
\quad \text{and} \quad 
\begin{aligned}
d\theta = -T^{-\top} (dT^\top) T^{-\top} z.
\end{aligned}
\end{equation*}
Recall that
\begin{align*}
g(\lambda, \theta) &= \nabla_\theta \log h(\theta) + TT^\top (\theta-\mu),\\
f(\lambda, \theta) &= T^{-1}\nabla_\theta \log h(\theta) + T^\top (\theta-\mu).
\end{align*}

Let $\text{vec}(\cdot)$ be the operator that stacks all elements of a matrix into a vector columnwise from left to right. In addition, let $K$ be the commutation matrix such that $K \vec(A) = \vec(A^\top)$, and $L$ be the elimination matrix such that $L \vec(A) = \vech(A)$ for any $d \times d$ matrix $A$, and $L^\top \vech(A) = \vec(A)$ if $A$ is lower triangular.

Differentiating $g(\lambda, \theta) $ w.r.t. $\mu$,
\begin{align*}
dg(\lambda, \theta) &= \{\nabla^2_\theta \log h(\theta)\}^\top d\theta + TT^\top (d\theta - d\mu) \\
&= \{\nabla^2_\theta \log h(\theta)\}^\top d\mu \\
\therefore \nabla_\mu  g(\lambda, \theta) &= \nabla^2_\theta \log h(\theta).
\end{align*}

Differentiating $g(\lambda, \theta) $ w.r.t. $\vech(T)$,
\begin{align*}
dg(\lambda, \theta) 
&= \{\nabla^2_\theta \log h(\theta)\}^\top d\theta + (dT)T^\top (\theta - \mu) \\
&\quad +  T(dT^\top) (\theta-\mu) + TT^\top (d\theta) \\
&= - \{\nabla^2_\theta \log h(\theta)\}^\top T^{-\top} (dT^\top) T^{-\top} z \\
& \quad + (dT)T^\top (\theta - \mu) +  T(dT^\top) (\theta-\mu) \\
&\quad  - T (dT^\top) T^{-\top} z \\
&= - \{\nabla^2_\theta \log h(\theta)\}^\top T^{-\top} (dT^\top) T^{-\top} z + (dT) z \\
&= \{ - ( z^\top T^{-1} \otimes \nabla^2_\theta \log h(\theta) ^\top  T^{-\top} ) K \\
&\quad  +(z^\top \otimes I_d) \}L^\top d\vech(T) \\
&= \{  -  ( T^{-1} \nabla^2_\theta \log h(\theta) \otimes  T^{-\top} z ) \\
&\quad +(z \otimes I_d)\}^\top L^\top d\vech(T).
\end{align*}
\begin{align*}
\therefore \nabla_{\vech(T)} g(\lambda, \theta) &= L \{ (z \otimes I_d) \\
&\quad - ( T^{-1} \nabla^2_\theta \log h(\theta)  \otimes T^{-\top} z) \}.
\end{align*}

Differentiating $f(\lambda, \theta) $ w.r.t. $\mu$,
\begin{align*}
df(\lambda, \theta) &= T^{-1} \{\nabla^2_\theta \log h(\theta)\}^\top d\theta + T^\top (d\theta - d\mu)\\
&= T^{-1} \{\nabla^2_\theta \log h(\theta)\}^\top d\mu. \\
\therefore \nabla_\mu  f(\lambda, \theta) &=  \nabla^2_\theta \log h(\theta) T^{-\top}.
\end{align*}

Differentiating $f(\lambda, \theta) $ w.r.t. $\vech(T)$,
\begin{align*}
df(\lambda, \theta) &= -T^{-1} (dT) T^{-1} \nabla_\theta \log h(\theta) +  T^\top d\theta  \\
&\quad+ (dT^\top) (\theta - \mu)  + T^{-1} \{\nabla^2_\theta \log h(\theta)\}^\top d\theta \\
&= -T^{-1} (dT) T^{-1} \nabla_\theta \log h(\theta) + (dT^\top) (\theta - \mu)  \\
&\quad - T^{-1} \{\nabla^2_\theta \log h(\theta)\}^\top T^{-\top} dT^\top T^{-\top} z \\
& \quad -  (dT^\top) T^{-\top} z \\
&= - \{ (z^\top T^{-1} \otimes T^{-1} \nabla^2_\theta \log h(\theta)^\top T^{-\top} ) K \\
&\quad + (\nabla_\theta \log h(\theta)^\top T^{-\top} \otimes T^{-1} )  \}  L^\top d\vech(T) \\
&= - \{  (T^{-1} \nabla^2_\theta \log h(\theta) T^{-\top}  \otimes T^{-\top} z ) \\
&\quad + (T^{-1} \nabla_\theta \log h(\theta) \otimes T^{-\top} )  \}^\top  L^\top d\vech(T).
\end{align*}
\begin{align*}
    \therefore \nabla_{\vech(T)} f(\lambda, \theta) &= - L\{  (T^{-1} \nabla^2_\theta \log h(\theta) T^{-\top}  \otimes T^{-\top} z ) \\
    &\quad + (T^{-1} \nabla_\theta \log h(\theta) \otimes T^{-\top} ) \}.
\end{align*}

Differentiating 
$$F(\lambda) = \E_\phi \left\{ g(\lambda, \mu+T^{-\top}z)^\top g(\lambda, \mu+T^{-\top}z) \right\}$$ with respect to $\mu$,
\begin{align*}
d F(\lambda) &= \E_\phi \left[ 2 g(\lambda, \theta)^\top dg(\lambda, \theta)  \right] \\
& = \E_\phi \left[ 2 g(\lambda, \theta)^\top  \{\nabla^2_\theta \log h(\theta)\}^\top d\mu \right]. \\ 
\therefore \nabla_\mu F(\lambda) & = 2\E_\phi \left[ \{\nabla^2_\theta \log h(\theta)\} g(\lambda, \theta) \right] .
\end{align*}

Next we differentiate $F(\lambda)$ with respect to $\vech(T)$.  
\begin{align*}
d F(\lambda) &= \E_\phi \left[ 2 g(\lambda, \theta)^\top dg(\lambda, \theta)  \right] \\
&= 2 \E_\phi \left[  g(\lambda, \theta)^\top \{ -  ( T^{-1} \nabla^2_\theta \log h(\theta) \otimes  T^{-\top} z )  \right.\\
& \quad \left. + (z \otimes I_d)  \}^\top L^\top d\vech(T)  \right],
\end{align*}
\begin{align*}
\nabla_{\vech(T)} & F(\lambda) 
= 2 L \E_\phi \left[ \{ -  ( T^{-1} \nabla^2_\theta \log h(\theta) \otimes  T^{-\top} z ) \right.\\ 
&\quad \left. + (z \otimes I_d) \} g(\lambda, \theta) \right] \\
& =  2  \E_\phi \vech\{  - T^{-\top} z g(\lambda, \theta)^\top \nabla_\theta^2 \log h(\theta)  T^{-\top}\\
&\quad + g(\lambda, \theta) z^\top\}.
\end{align*}

Differentiating $S(\lambda)$ with respect to $\mu$,
\begin{align*}
d S(\lambda) &= \E_\phi \left[ \{ 2 f(\lambda, \theta)\}^\top df(\lambda, \theta) \right]
\\ 
&= \E_\phi \left[ \{ 2 f(\lambda, \theta)\}^\top T^{-1} \{\nabla^2_\theta \log h(\theta)\}^\top d\mu \right],
\\ 
\therefore \nabla_\mu S(\lambda)& = 2 \E_\phi  \{\nabla_\theta^2 \log h(\theta) T^{-T} f(\lambda, \theta)\}.
\end{align*}

Differentiating $S(\lambda)$ with respect to $\vech(T)$, 
\begin{align*}
d &S(\lambda) = \E_\phi \left[ 2 f(\lambda, \theta)^\top df(\lambda, \theta) \right]
\\ 
&= - 2 \E_\phi [ f(\lambda, \theta)^\top  \{  (T^{-1} \nabla^2_\theta \log h(\theta) T^{-\top}  \otimes T^{-\top} z ) \\
&\quad + (T^{-1} \nabla_\theta \log h(\theta) \otimes T^{-\top} )  \}^\top L^\top d\vech(T) ],
\end{align*}
\begin{align*}
& \nabla_{\vech(T)} S(\lambda)  \\
&= - 2 L \E_\phi \left[ \{  (T^{-1} \nabla^2_\theta \log h(\theta) T^{-\top}  \otimes T^{-\top} z ) \right.\\
&\quad \left. + (T^{-1} \nabla_\theta \log h(\theta) \otimes T^{-\top} )  \} f(\lambda, \theta) \right] \\
& = - 2  \E_\phi \vech\{ T^{-\top}f(\lambda, \theta) \nabla_\theta \log h(\theta)^{\top}T^{-\top} \\
&\quad + T^{-\top} z f(\lambda, \theta)^\top T^{-1}\nabla_\theta^2 \log h(\theta)  T^{-\top} \}.
\end{align*}

\subsection{Variance of gradient estimates}
We have
\begin{align*}
g_{\mu}^{\KL} &= \nabla_{\theta}\log h(\theta) + Tz = -\Lambda (\theta - \nu ) + Tz \\
&= -\Lambda (T^{-\top} z + \mu - \nu ) + Tz \\
&= (T-\Lambda T^{-\top})z - \Lambda(\mu-\nu),\\
g_{\mu}^{\text{F}} 
&= - 2\nabla_{\theta}^2\log h(\theta)g_{\mu}^{\KL} 
= - 2 (-\Lambda) g_{\mu}^{\KL} 
= 2\Lambda g_{\mu}^{\KL},\\
g_{\mu}^{\text{S}} &= - 2\nabla_{\theta}^2\log h(\theta)T^{-\top}T^{-1}g_{\mu}^{\KL} \\
&= 2 \Lambda T^{-\top}T^{-1} g_{\mu}^{\KL}, \\
g_T^{\KL} &= T^{-\top} z (\mu -\nu)^\top \Lambda T^{-\top} \\
& \quad -  T^{-\top} zz^\top (T^\top - T^{-1} \Lambda) T^{-\top} ,
\\
g_T^{\text{F}} &= 2\{ \Lambda  (\mu -\nu) z^\top + T^{-\top} z (\mu -\nu)^\top \Lambda^2 T^{-\top}   \\
& \quad - (T-\Lambda T^{-\top})zz^\top \\
& \quad - T^{-\top} zz^\top (T^\top - T^{-1} \Lambda)\Lambda T^{-\top} \},
\end{align*}
\begin{align*}
g_T^{\text{S}} 
&= 2[ -\Sigma  (T-\Lambda T^{-\top}) \{z z^\top T^{-1}  + z   (\mu -\nu)^\top\} \\
& \times  \Lambda T^{-\top} + \Sigma \Lambda (\mu -\nu) \{z^\top T^{-1}  +  (\mu -\nu) ^\top\}  \Lambda T^{-\top} \\
& + T^{-\top} \{z  (\mu -\nu)^\top \Lambda - zz^\top (T^\top - T^{-1} \Lambda) \} \Sigma \Lambda T^{-\top} ],
\end{align*}
and 
\begin{align*}
\Var(g_{\mu}^{\KL})&= (T-\Lambda T^{-\top})(T^{\top} - T^{-1}\Lambda) \\
&= \Sigma^{-1} - 2\Lambda + \Lambda \Sigma \Lambda,\\
\Var(g_{\mu}^{\text{F}})&= 4\Lambda \Var(g_{\mu}^{\KL}) \Lambda,\\
\Var(g_{\mu}^{\text{S}})& = 4\Lambda T^{-\top}T^{-1} \Var(g_{\mu}^{\KL}) T^{-\top}T^{-1}\Lambda.
\end{align*}

To simplify the derivation of the variance with respect to $T_{ii}$, we further assume that both $\Lambda$ and $T$ are diagonal matrices. Under this assumption, the gradient terms can be expressed as
\begin{align*}
g_{T_{ii}}^{\KL} &= \frac{\Lambda_{ii}(\mu_i-\nu_i)}{T_{ii}^2}z_i+\left(-\frac{1}{T_{ii}} + \frac{\Lambda_{ii}}{T_{ii}^3}\right)z_i^2,\\
g_{T_{ii}}^{\text{F}} &= 2\left(\Lambda_{ii} + \frac{\Lambda_{ii}^2}{T_{ii}^2}\right)(\mu_i-\nu_i)z_i + 2\left(-T_{ii} + \frac{\Lambda_{ii}^2}{T_{ii}^3}\right)z_i^2\\
&= 2 (T_{ii}^2 + \Lambda_{ii}) g_{T_{ii}}^{\KL},\\
g_{T_{ii}}^{\text{S}} &= 2 \frac{\Lambda_{ii}^2(\mu_i - \nu_i)^2}{T_{ii}^3} + 2 z_i (\mu_i-\nu_i) \left( -\frac{\Lambda_{ii}}{T_{ii}^2} + 3\frac{\Lambda_{ii}^2}{T_{ii}^4} \right) \\
&\quad + 4z_i^2 \left( - \frac{\Lambda_{ii}}{T_{ii}^3} + \frac{\Lambda_{ii}^2}{T_{ii}^5}\right).
\end{align*}

Utilizing the properties $\Var(z_i) = 1$, $\Var(z_i^2) = 2$ and $\cov(z_i,z_i^2)=0$, we obtain 
\begin{align*}
\Var(g_{T_{ii}}^{\KL}) &= \frac{1}{T_{ii}^4} \left\{ \Lambda_{ii}^2 (\mu_i - \nu_i)^2 + 2 \left( T_{ii} - \frac{\Lambda_{ii}}{T_{ii}} \right)^2 \right\},\\
\Var(g_{T_{ii}}^{\text{F}}) &= 4 (T_{ii}^2 + \Lambda_{ii})^2\Var(g_{T_{ii}}^{\KL}),\\
\Var(g_{T_{ii}}^{\text{S}})
&= \frac{4\Lambda_{ii}^2}{T_{ii}^8}  \left\{ \left( 3\Lambda_{ii}  - T_{ii}^2 \right)^2  (\mu_i - \nu_i)^2 \right. \\
& \quad \left. + 8 \left( T_{ii} - \frac{\Lambda_{ii}}{T_{ii}} \right)^2 \right\}.
\end{align*}

\section{SGD based on batch approximation} 
We have
\begin{align*}
& \hat{S}_{q_t}(\lambda) = \frac{1}{B} \sum_{b=1}^B \{g_h(\theta_i)^\top \Sigma g_h(\theta_i) + 2 g_h(\theta_i)^\top (\theta_i - \mu) \\
&\quad + (\theta_i - \mu) ^\top \Sigma^{-1} (\theta_i - \mu) \} \\
&= \frac{1}{B} \sum_{b=1}^B  [ \{g_h(\theta_i) - \overline{g}_h  + \overline{g}_h\}^\top \Sigma \{g_h(\theta_i) - \overline{g}_h  + \overline{g}_h\} \\
&\quad + 2 \{g_h(\theta_i) - \overline{g}_h  + \overline{g}_h\}^\top (\theta_i - \overline{\theta} + \overline{\theta} - \mu) \\
& \quad + (\theta_i - \overline{\theta} + \overline{\theta}- \mu) ^\top \Sigma^{-1} (\theta_i - \overline{\theta} + \overline{\theta}- \mu) ] \\
&= \tr\{ (C_g + \overline{g}_h \overline{g}_h^\top) \Sigma \} + \tr(C_\theta \Sigma^{-1}) \\
&\quad +  (\mu - \overline{\theta})^\top \Sigma^{-1} (\mu - \overline{\theta}) - 2 \overline{g}_h^\top (\mu - \overline{\theta}) \\
&\quad  +  \frac{2}{B} \sum_{b=1}^B \{g_h(\theta_i) - \overline{g}_h \}^\top (\theta_i - \overline{\theta})\\
&= \tr(V \Sigma) + \tr(U \Sigma^{-1}) +  2 \tr(W), 
\end{align*} 
\begin{align*}
&\hat{F}_{q_t}(\lambda) 
= \frac{1}{B} \sum_{b=1}^B \{  2 g_h(\theta_i)^\top \Sigma^{-1}(\theta_i - \mu)\\
&\quad +(\theta_i - \mu) ^\top \Sigma^{-2} (\theta_i - \mu) \} + g_h(\theta_i)^\top g_h(\theta_i) \\
&= \frac{1}{B} \sum_{b=1}^B  [ \{g_h(\theta_i)- \overline{g}_h + \overline{g}_h\} ^\top \{g_h(\theta_i) - \overline{g}_h + \overline{g}_h\}\\
&\quad + 2 \{g_h(\theta_i) - \overline{g}_h  + \overline{g}_h\}^\top \Sigma^{-1} (\theta_i - \overline{\theta} + \overline{\theta}- \mu)\\
&\quad +(\theta_i - \overline{\theta} + \overline{\theta}- \mu) ^\top \Sigma^{-2} (\theta_i - \overline{\theta} + \overline{\theta}- \mu) ] \\
&= \tr(C_g + \overline{g}_h \overline{g}_h^\top) + 2\tr(C_{\theta g}\Sigma^{-1})+\tr(C_\theta\Sigma^{-2}) \\
&\quad + (\mu - \overline{\theta})^\top \Sigma^{-2} (\mu - \overline{\theta}) - 2 \overline{g}_h^\top\Sigma^{-1} (\mu - \overline{\theta}) \\
&= \tr(V) + \tr(U \Sigma^{-2}) + 2\tr(W\Sigma^{-1}),
\end{align*}
where $U = C_\theta + (\mu - \overline{\theta}) (\mu - \overline{\theta})^\top$, $V= C_g + \overline{g}_h \overline{g}_h^\top$ and $W= C_{\theta g}  -(\mu - \overline{\theta})\overline{g}_h^T$. Note that $U$ and $V$ are symmetric but $W$ is not. Differentiating with respect to $\mu$ and $T$, 
\begin{align*}
\nabla_\mu \hat{S}_{q_t}(\lambda) &= 2\Sigma^{-1} (\mu - \overline{\theta}) - 2 \overline{g}_h. 
\\
d\hat{S}_{q_t}(\lambda) &=  d\{ \tr(VT^{-\top} T^{-1}) + \tr(U T T^\top) \} \\
&= - \tr(V T^{-\top} dT^\top \Sigma) - \tr (V \Sigma dT T^{-1}) \\
&\quad + \tr(U dT T^\top) + \tr(U T dT^\top) \\
&= \tr\{ (UT - \Sigma VT^{-\top}) dT^\top \} \\
&\quad + \tr \{ (T^\top U - T^{-1} V \Sigma) dT \} \\
&= 2 \vec(UT - \Sigma VT^{-\top})^\top L^\top d\vech(T). \\
\nabla_{\vech(T)} & \hat{S}_{q_t}(\lambda) = 2\vech( U T  - \Sigma V T^{-\top} ).\\
\nabla_\mu \hat{F}_{q_t}(\lambda) &= \Sigma^{-1} \nabla_\mu \hat{S}_{q_t}(\lambda) . \\
d \hat{F}_{q_t}(\lambda) &=  d\{ \tr(U T T^\top T T^\top) + 2\tr(W T T^\top) \} \\
&= \tr(U dT T^\top \Sigma^{-1}) +  \tr(U T dT^\top \Sigma^{-1}) \\
&\quad +  \tr(U \Sigma^{-1} dT T^\top) +  \tr(U \Sigma^{-1} T dT^\top) \\
& \quad + 2\tr(W dT T^\top) + 2\tr(W T dT^\top) \\
&= \tr\{ (T^\top \Sigma^{-1}U  + T^\top U \Sigma^{-1})dT\} \\
&\quad +  \tr\{ (\Sigma^{-1}U T + U \Sigma^{-1} T)  dT^\top\}  \\
& \quad + 2\tr(T^\top W dT ) + 2\tr(W T dT^\top) \\
&= 2 \vec (\Sigma^{-1}U T + U \Sigma^{-1} T + W^\top T \\
&\quad + WT)^\top  L^\top d\vech(T). \\
\nabla_{\vech(T)} \hat{F}_{q_t}(\lambda) &= 2\vech \{( W + W^\top  + \Sigma^{-1} U  \\
& \quad +U \Sigma^{-1}) T\}.
\end{align*}

\section{Proof of Theorem \ref{thm-convSDb}}
Let $\|x\| = \sqrt{x^\top x}$ for $x \in \mathbb{R}^d$ and $\|A\|$ denote the spectral norm of a matrix $A \in \mathbb{R}^{d \times d}$, which is evaluated as the square root of the largest eigenvalue of $A^\top A$. Let $A\succ0$ and $A \succeq 0$ denote that $A$ is positive definite and positive semidefinite respectively, and $A \succeq B$ denote that the matrix $A-B$ is positive semidefinite. In addition, let $\tau_k(\cdot)$, $\tau_{\min}(\cdot)$ and $\tau_{\max}(\cdot)$ denote the $k$th, minimum and maximum eigenvalue of a given matrix respectively. 

First, differentiating $\hat{S}_{q_t}(\lambda)$ with respect to $\vec(\Sigma)$,
\begin{align*}
& d\hat{S}_{q_t}(\lambda) = \tr(V d\Sigma) - \tr(U \Sigma^{-1} d\Sigma \Sigma^{-1}) 
\\
&\implies \nabla_{\vec(\Sigma)} \hat{S}_{q_t}(\lambda) = \vec(V - \Sigma^{-1} U \Sigma^{-1}).
\end{align*}

Suppose the target is $p(\theta|y) = \N(\nu, \Lambda^{-1})$ and the variational density at iteration $t$ is $q_t (\theta) = \N(\theta \mid \mu_t, \Sigma_t)$. Assuming the batchsize $B \to \infty$, from Lemma \ref{Lem3}, 
\begin{equation*} 
\begin{gathered}
\overline{\theta} \xrightarrow{\text{a.s.}} \mu_t, \quad
C_\theta  \xrightarrow{\text{a.s.}} \Sigma_t, \quad 
\overline{g}_h  \xrightarrow{\text{a.s.}} \Lambda(\nu-\mu_t), 
\\
C_g  \xrightarrow{\text{a.s.}} \Lambda \Sigma_t \Lambda, \quad
C_{\theta g}  \xrightarrow{\text{a.s.}} -\Sigma_t \Lambda,
\end{gathered} 
\end{equation*}
which implies that $U \to \Sigma_t$, $V \to \Lambda \{\Sigma_t + (\nu-\mu_t)(\nu-\mu_t)^\top\} \Lambda$ and $W \to - \Sigma_t \Lambda$.

Consider the updates for $(\mu_t, \Sigma_t)$ at iteration $t$ based on natural gradients as given in Table 1 of \cite{Tan2024a}, and let $B \to \infty$. Note the change in signs below, as the updates in \cite{Tan2021} are for maximizing the lower bound, while we are minimizing $\hat{S}_{q_t} (\lambda_t)$ here. Let $0 < \rho_t < 1/4$ denote the stepsize at iteration $t$. We assume that the stepsize is decreasing, so that $\rho_{t+1} \leq \rho_t$ $\forall t$. We have 
\begin{align*}
& \Sigma_{t+1}^{-1} = \Sigma_t^{-1} + 2 \rho_t \nabla_{\Sigma} \hat{S}_{q_t} (\lambda_t) \\
&= \Sigma_t^{-1} + 2\rho_t (V - \Sigma_t^{-1} U \Sigma_t^{-1}) \\
&\to (1-2 \rho_t) \Sigma_t^{-1} + 2\rho_t \Lambda\{ \Sigma_t + (\nu-\mu_t)(\nu-\mu_t)^\top \} \Lambda,
\\
&\mu_{t+1} = \mu_t - \rho_t \Sigma_{t+1} \nabla_\mu \hat{S}_{q_t} (\lambda_t) \\
&= \mu_t - 2\rho_t  \Sigma_{t+1} \{ \Sigma_t^{-1} (\mu_t - \overline{\theta}) - \overline{g}_h\} \\
& \to \mu_t - 2 \rho_t \Sigma_{t+1} \Lambda(\mu_t - \nu).
\end{align*}
Let $1/2 < \beta_t = 1 - 2\rho_t < 1$ and introduce
\begin{align*}
J_t &= \Lambda^{-1/2} \Sigma_t^{-1} \Lambda^{-1/2},
\\
\epsilon_t &= \Lambda^{1/2}(\mu_t - \nu), 
\\
\Delta_t &= J_t - I_d.
\end{align*}
Note that $\beta_{t+1} \geq \beta_t$ $\forall t$ since $\{\rho_t\}$ is decreasing. Next, we multiply the update of $\Sigma_{t+1}^{-1}$ by $\Lambda^{-1/2}$ on the left and right. As for the update of $\mu_{t+1}$, we first subtract $\nu$ from both sides and then multiply by $\Lambda^{1/2}$ on the left. This gives
\begin{align*}
J_{t+1} &= \beta_t J_t + (1 - \beta_t) (J_t^{-1} + \epsilon_t \epsilon_t ^\top),
\\
\epsilon_{t+1} &= \{I_d - (1-\beta_t) J_{t+1}^{-1}\} \epsilon_t.
\end{align*}
Our goal is to show that $\| \Delta_t\| \to 0$ and $\|\epsilon_t\| \to 0$ as $t \to \infty$ as this will imply that $\mu_t \to \nu$ and $\Sigma_t^{-1} \to \Lambda$.

As the eigenvalues of $J_{t+1}$ are not computable directly, we introduce
\begin{align*}
K_{t+1} &= \beta_t J_t + (1-\beta_t) J_t^{-1}, 
\\
H_{t+1} &= \beta_t J_t + (1-\beta_t) (J_t^{-1} + \| \epsilon_t \|^2 I_d),
\end{align*}
to bound them. Note that $K_{t+1} \preceq J_{t+1} \preceq H_{t+1}$, since  
\begin{align*}
x^\top (J_{t+1} - K_{t+1})x &= (1-\beta_t) (x^\top \epsilon_t)^2 \geq 0,
\\
x^\top (H_{t+1} - J_{t+1})x &= (1-\beta_t) \{ \| \epsilon_t \|^2 \|x \|^2  \\
&\quad  - (x^\top \epsilon_t)^2 \} \geq 0, 
\quad\forall x \in \mathbb{R}^d.
\end{align*}

We assume that the initial $\Sigma_0^{-1}$ and hence $J_0$ to be positive definite. Given that $J_t \succ 0$,
\begin{multline*}
x^\top J_{t+1} x = \beta_t x^\top J_t x 
\\
+ (1 - \beta_t) \{x^\top J_t^{-1} x +  (\epsilon_t ^\top x)^2 \} > 0.
\end{multline*}
Hence $\{J_t\}_{t=0}^\infty$ is positive definite. By a similar reasoning, $\{H_t\}_{t=1}^\infty$ and $\{K_t\}_{t=1}^\infty$ are also positive definite. Let $J_t = Q D_{J_t} Q^\top$ be an eigendecomposition of $J_t$, where $Q$ is an orthogonal matrix containing the normalized eigenvectors of $J_t$, and $D_{J_t}$ is a diagonal matrix containing the eigenvalues of $J_t$ in increasing order. Since
\begin{align} \label{eg of H and K}
Q^\top K_{t+1} Q &= \beta_t D_{J_t} + (1-\beta_t) D_{J_t}^{-1}, 
\\
Q^\top H_{t+1} Q &= \beta_t D_{J_t} + (1-\beta_t) (D_{J_t}^{-1} + \| \epsilon_t \|^2 I_d), \nonumber
\end{align}
it follows that $K_{t+1}$ and $H_{t+1}$ have the same eigenvectors as $J_t$ and their eigenvalues are contained in the diagonal elements of the matrices on the RHS. Specifically,
\begin{align*}
\tau_k(K_{t+1}) &= \beta_t \tau_k(J_t) +  \frac{1-\beta_t}{\tau_k(J_t)} \;\; \forall k, 
\\
\tau_k(H_{t+1}) &= \beta_t \tau_k(J_t) +  (1-\beta_t) \left( \frac{1}{\tau_k(J_t)} + \| \epsilon_t \|^2 \right) \;\; \forall k.
\end{align*}
Next, we study the properties of the eigenvalues of $K_{t+1}$ and $H_{t+1}$ more closely through Lemma \ref{lem-fgcurve} and \ref{lem-SDhat}.

\begin{figure}[tb!]
\centering
\includegraphics[width=\linewidth]{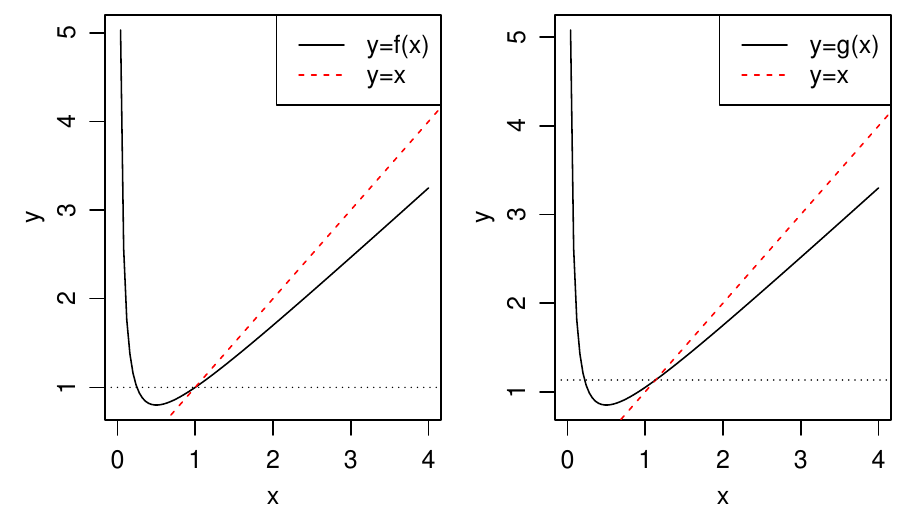}
\caption{Plots of $y=f(x)$ and $y=g(x)$ from which eigenvalues of $K_{t+1}$ and $H_{t+1}$ are derived respectively.}
\label{fig_egHK}
\end{figure}

\begin{lemma} \label{lem-fgcurve}
Let $f(x)=\beta_tx + (1-\beta_t)/x$ and $g(x)=f(x) + (1-\beta_t)\|\epsilon_t\|^2$ for $x>0$ and $1/2 < \beta_t <1$. 
\begin{enumerate}[i.] 
\item Then $y=f(x)$ has a global minimum at $(\sqrt{(1-\beta_t)/\beta_t}, 2\sqrt{\beta_t(1-\beta_t)})$ and $f(x)$ is strictly increasing on $(\sqrt{(1-\beta_t)/\beta_t}, \infty)$. The line $y=1$ cuts this curve at $x=(1 - \beta_t)/\beta_t < 1$ and $x=1$.
\item Then $y=g(x)$ has a global minimum at $(\sqrt{(1-\beta_t)/\beta_t}, 2\sqrt{\beta_t(1-\beta_t)} + (1-\beta_t)\|\epsilon_t\|^2)$ and $g(x)$ is strictly increasing on $(\sqrt{(1-\beta_t)/\beta_t}, \infty)$. The line $y=x$ cuts this curve at $x=\tilde{\epsilon}_t$, where $\tilde{\epsilon}_t = \{ \|\epsilon_t\|^2 + \sqrt{\|\epsilon_t\|^4 + 4} \}/2$.
\end{enumerate}
\end{lemma}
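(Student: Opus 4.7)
The plan is to treat each part as a routine calculus exercise in one variable, since both $f$ and $g$ are smooth on $(0,\infty)$ and the claimed values can be obtained by solving simple quadratics. Throughout I will exploit that $g$ is just $f$ shifted up by the constant $(1-\beta_t)\|\epsilon_t\|^2$, so that the critical-point analysis need only be carried out once.

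For part i, I would first compute $f'(x)=\beta_t-(1-\beta_t)/x^2$ and $f''(x)=2(1-\beta_t)/x^3>0$ for $x>0$. Setting $f'(x)=0$ gives the unique positive root $x^{*}=\sqrt{(1-\beta_t)/\beta_t}$; the positivity of $f''$ confirms this is a global minimum, and direct substitution yields $f(x^{*})=2\sqrt{\beta_t(1-\beta_t)}$. Strict monotonicity on $(x^{*},\infty)$ follows because $f'(x)>0$ there. To locate where $y=1$ meets $y=f(x)$, I clear the fraction to get $\beta_t x^{2}-x+(1-\beta_t)=0$ and apply the quadratic formula, obtaining $x=[1\pm\sqrt{1-4\beta_t(1-\beta_t)}]/(2\beta_t)=[1\pm(2\beta_t-1)]/(2\beta_t)$, where the simplification uses $\beta_t>1/2$ so that $|1-2\beta_t|=2\beta_t-1$. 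This gives the two roots $x=1$ and $x=(1-\beta_t)/\beta_t$, and the inequality $(1-\beta_t)/\beta_t<1$ is again equivalent to $\beta_t>1/2$.

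For part ii, the shape and minimizer of $g$ are inherited from part i: $g'(x)=f'(x)$, so the minimum is still at $x^{*}$ with value $2\sqrt{\beta_t(1-\beta_t)}+(1-\beta_t)\|\epsilon_t\|^{2}$, and strict increase on $(x^{*},\infty)$ is immediate. The only new computation is the intersection with $y=x$. The equation $g(x)=x$ becomes $(\beta_t-1)x+(1-\beta_t)/x+(1-\beta_t)\|\epsilon_t\|^{2}=0$; dividing by $(1-\beta_t)>0$ and multiplying by $x>0$ reduces it to the quadratic $x^{2}-\|\epsilon_t\|^{2}x-1=0$, whose only positive root is $\tilde\epsilon_t=\{\|\epsilon_t\|^{2}+\sqrt{\|\epsilon_t\|^{4}+4}\}/2$.

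The main obstacle, such as it is, is purely clerical: keeping track of the sign when clearing $(1-\beta_t)$ (which is positive because $\beta_t<1$), correctly using $|1-2\beta_t|=2\beta_t-1$ in part i, and discarding the negative root of the quadratic in part ii since the domain is $x>0$. No deeper argument is required.
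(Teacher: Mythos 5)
Your proposal is correct and follows essentially the same route as the paper's proof: differentiate to locate the unique critical point, use positivity of the second derivative for the global minimum, and reduce each intersection claim to a quadratic whose roots are found explicitly. The only cosmetic difference is that you carry out the quadratic formula and discriminant simplification in full, whereas the paper simply factors the quadratics.
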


\begin{proof}
For i, setting 
\[
dy/dx = \beta_t - (1-\beta_t)/x^2 = 0
\]
leads to  $x=\sqrt{(1-\beta_t)/\beta_t}$ and $d^2y/dx^2 = 2(1-\beta_t)/x^3>0$. Hence there is a minimum point at $x=\sqrt{(1-\beta_t)/\beta_t}$ and $y=2\sqrt{\beta_t(1-\beta_t)}$. Solving $\beta_tx + (1-\beta_t)/x=1$ leads to the  equation $\beta_t x^2 - x +1-\beta_t=0$, which has two roots $x=(1-\beta_t)/\beta_t < 1$ and $x=1$.

The result in ii follows directly from i as $y=g(x)$ is just $y=f(x)$ translated vertically upwards by $(1-\beta_t)\|\epsilon_t\|^2$. Solving $\beta_t x + (1-\beta_t)/x  + (1-\beta_t)\|\epsilon_t\|^2 = x$ leads to the equation $x^2 - \|\epsilon_t\|^2 x - 1 = 0$, which has only one positive root at $x=\tilde{\epsilon}_t$.
\end{proof}

\begin{lemma} \label{lem-SDhat}
Suppose $1/2 < \beta_t <1$.
\begin{enumerate}[i.] 
\item $\tau_k(J_t) > \sqrt{(1-\beta_t)/\beta_t}$ $\forall$ k and $t\geq 1$.
\item $\tau_k(H_{t+1}) \leq \max\{ \tilde{\epsilon}_t, \tau_k(J_t) \}$ $\forall k$ and $t \geq 1$.
\item $| \tau_k(K_{t+1}) - 1 | \leq \beta_t| \tau_k(J_t) - 1 |$ $\forall k$ and $t \geq 1$.
\item $\|K_{t+1} - I_d\| \leq \beta_t \|\Delta_t\|$ for $t \geq 1$.
\item $\tau_{\max}(J_{t+1}^{-1}) \leq \frac{1}{2\sqrt{\beta_t(1-\beta_t)}}$ $\forall t \geq 0$.
\end{enumerate}
\end{lemma}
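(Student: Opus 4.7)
The plan is to exploit the joint eigendecomposition in \eqref{eg of H and K}: because $K_{t+1}$, $H_{t+1}$ and $J_t$ share the same orthonormal eigenbasis $Q$, each eigenvalue of $K_{t+1}$ and $H_{t+1}$ is a scalar function of the corresponding eigenvalue of $J_t$, namely $\tau_k(K_{t+1}) = f(\tau_k(J_t))$ and $\tau_k(H_{t+1}) = g(\tau_k(J_t))$ with $f$ and $g$ as in Lemma \ref{lem-fgcurve}. The sandwich $K_{t+1} \preceq J_{t+1} \preceq H_{t+1}$ then transfers spectral bounds back to $J_{t+1}$, and the five parts will be proved largely in order, each feeding the next.

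For part i, I would apply AM-GM to $f$ to obtain $f(x) \ge 2\sqrt{\beta_t(1-\beta_t)}$ for all $x>0$, so that $\tau_k(J_{t+1}) \ge \tau_k(K_{t+1}) \ge 2\sqrt{\beta_t(1-\beta_t)}$. The inequality $2\sqrt{\beta_t(1-\beta_t)} > \sqrt{(1-\beta_t)/\beta_t}$ is equivalent to $4\beta_t^2 > 1$, which holds since $\beta_t > 1/2$; combined with the fact that $\beta_{t-1}(1-\beta_{t-1}) \ge \beta_t(1-\beta_t)$, coming from monotonicity of $\beta(1-\beta)$ on $(1/2,1)$ and $\beta_{t-1} \le \beta_t$, the strict bound $\tau_k(J_t) > \sqrt{(1-\beta_t)/\beta_t}$ carries over to all $t \ge 1$. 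Part v then falls out immediately, since $\tau_{\max}(J_{t+1}^{-1}) = 1/\tau_{\min}(J_{t+1}) \le 1/(2\sqrt{\beta_t(1-\beta_t)})$; positive definiteness of $J_0$ is all that is needed for the first step, so the bound holds for $t \ge 0$.

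For part ii, I would split on whether $\tau_k(J_t) \le \tilde\epsilon_t$ or $\tau_k(J_t) > \tilde\epsilon_t$. In the first case, part i places $\tau_k(J_t)$ in the monotone region of $g$, so $g(\tau_k(J_t)) \le g(\tilde\epsilon_t) = \tilde\epsilon_t$ by Lemma \ref{lem-fgcurve}ii. In the second case, the identity $g(x)-x = (1-\beta_t)(1/x - x + \|\epsilon_t\|^2)$ shows that the sign of $g(x)-x$ agrees with $-(x^2 - \|\epsilon_t\|^2 x - 1)/x$, which is negative on $(\tilde\epsilon_t, \infty)$. Hence $g(\tau_k(J_t)) \le \tau_k(J_t)$, and the two cases combine into $\tau_k(H_{t+1}) \le \max\{\tilde\epsilon_t, \tau_k(J_t)\}$.

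The main obstacle is part iii, where $x \mapsto |f(x) - 1|$ is not obviously Lipschitz with constant $\beta_t$ in $x-1$. My plan is to factor: multiplying through by $x = \tau_k(J_t)$ gives $\beta_t x^2 - x + (1-\beta_t) = (x-1)(\beta_t x - (1-\beta_t))$, so
\[
|\tau_k(K_{t+1}) - 1| = |\tau_k(J_t) - 1|\,\frac{|\beta_t \tau_k(J_t) - (1-\beta_t)|}{\tau_k(J_t)}.
\]
The claim reduces to showing $|\beta_t x - (1-\beta_t)| \le \beta_t x$, which is automatic when $\beta_t x \ge 1-\beta_t$ and otherwise requires $x \ge (1-\beta_t)/(2\beta_t)$. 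Part i supplies $x > \sqrt{(1-\beta_t)/\beta_t}$, and the inequality $\sqrt{(1-\beta_t)/\beta_t} > (1-\beta_t)/(2\beta_t)$ squares to $5\beta_t > 1$, which holds since $\beta_t > 1/2$. Part iv then follows at once from part iii, because $K_{t+1} - I_d$ and $\Delta_t = J_t - I_d$ share the orthonormal eigenbasis $Q$, so their spectral norms equal $\max_k |\tau_k(K_{t+1}) - 1|$ and $\max_k |\tau_k(J_t) - 1|$ respectively, and the inequality is inherited eigenvalue-by-eigenvalue.
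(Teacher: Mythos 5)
Your proposal is correct and follows essentially the same route as the paper's proof: the shared eigenbasis of $J_t$, $K_{t+1}$, $H_{t+1}$, the sandwich $K_{t+1}\preceq J_{t+1}\preceq H_{t+1}$, and the curve analysis of $f$ and $g$ from Lemma \ref{lem-fgcurve}, with the same handling of the stepsize monotonicity in part i. The only cosmetic difference is in part iii, where you factor $\beta_t x^2 - x + (1-\beta_t) = (x-1)\{\beta_t x - (1-\beta_t)\}$ and reduce to $|\beta_t x-(1-\beta_t)|\le \beta_t x$, whereas the paper verifies the same inequality by a direct two-case sign computation; both are valid (and in fact part i already forces $\beta_t x > 1-\beta_t$, so your second sub-case is vacuous).
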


\begin{proof}
For i, since $J_{t+1} \succeq K_{t+1}$, 
\begin{align*}
\tau_k(J_{t+1}) &\geq \tau_{\min}(K_{t+1}) \\
&\geq 2\sqrt{\beta_{t} (1-\beta_{t})} \\
&\geq 2\sqrt{\beta_{t+1} (1-\beta_{t+1})}, 
\quad \forall k, t\geq0.
\end{align*}
The second line follows from Lemma \ref{lem-fgcurve} i, since the minimum value of $y=f(x)$ is $2\sqrt{\beta_{t} (1-\beta_{t})}$. The third line is because $y=2\sqrt{x(1-x)}$ is decreasing on $[1/2,1]$ and $\{\beta_t\}$ is increasing. Thus $\tau_k(J_{t}) \geq 2\sqrt{\beta_t (1-\beta_t)}$ $\forall k$ and $t\geq 1$. It suffices to show that $2\sqrt{\beta_t (1-\beta_t)} > \sqrt{(1-\beta_t)/\beta_t}$, which is just equivalent to $\beta_t > 1/2$.

For ii, we have $\tau_k(J_t) > \sqrt{(1-\beta_t)/\beta_t}$ $\forall$ k and $t\geq 1$ from i. Hence, to obtain the eigenvalues of $H_{t+1}$ for $t\geq 1$, we only need to consider the curve of $y=g(x)$ for $x>\sqrt{(1-\beta_t)/\beta_t}$ in Figure \ref{fig_egHK}, which is strictly increasing. Moreover, from Lemma \ref{lem-fgcurve} ii, the line $y=x$ cuts $y=g(x)$ at $x=\tilde{\epsilon}_t$. This implies that if $\tau_k(J_t)\leq \tilde{\epsilon}_t$, then $\tau_k(H_{t+1}) \leq \tilde{\epsilon}_t$. If $\tau_k(J_t) > \tilde{\epsilon}_t$, then $\tau_k(H_{t+1}) < \tau_k(J_t)$ because $y=g(x)$ lies below $y=x$. Hence $\tau_k(H_{t+1}) \leq \max\{ \tilde{\epsilon}_t, \tau_k(J_t) \}$ $\forall k$ and $t \geq 1$.

For iii, if $\tau_k(J_t) = 1$, then $\tau_k(K_{t+1}) = 1$ and the inequality is trivially satisfied. Hence it suffices to consider $\tau_k(J_t) \neq 1$ and only $\tau_k(J_t) > \sqrt{(1-\beta_t)/\beta_t}$ from i. First, suppose $\sqrt{(1-\beta_t)/\beta_t} < \tau_k(J_t) < 1$. Then $\tau_k(K_{t+1}) < 1$ from Lemma \ref{lem-fgcurve} i and
\begin{align*}
&\beta_t| \tau_k(J_t) - 1 | - | \tau_k(K_{t+1}) - 1 | \\
& = \beta_t(1 - \tau_k(J_t)) - (1 - \tau_k(K_{t+1})) \\
&= \beta_t - \beta_t\tau_k(J_t) - 1 + \beta_t \tau_k(J_t) + \frac{1-\beta_t}{\tau_k(J_t)} \\
&= (1-\beta_t) \left( \frac{1}{\tau_k(J_t)}-1 \right) > 0.
\end{align*}
Next, suppose $\tau_k(J_t) > 1$. Then $\tau_k(K_{t+1}) >1$ and
\begin{align*}
&\beta_t| \tau_k(J_t) - 1 | - | \tau_k(K_{t+1}) - 1 | \\
& = \beta_t(\tau_k(J_t) - 1) - (\tau_k(K_{t+1}) - 1) \\
&= \beta_t\tau_k(J_t)  - \beta_t - \beta_t \tau_k(J_t) -\frac{1-\beta_t}{\tau_k(J_t)} +1\\
&= (1- \beta_t)\left( 1 -\frac{1}{\tau_k(J_t)}\right) > 0.
\end{align*}

For iv, we have from iii,
\begin{align*}
\|K_{t+1} - I_d\| 
& = \max_k | \tau_k(K_{t+1}) - 1 | \\
& \leq \beta_t \max_k  | \tau_k(J_t) - 1 | \\
& = \beta_t \|J_t - I_d\| = \beta_t \|\Delta_t\|.
\end{align*}

For v, from Lemma \ref{lem-fgcurve} i, $\forall k$ and $t \geq0$,
\begin{equation*}
\begin{aligned}
\tau_k(J_{t+1}) \geq \tau_{\min} (K_{t+1}) &\geq 2\sqrt{\beta_t(1-\beta_t)}
\\
\implies \tau_k(J_{t+1}^{-1}) &\leq \frac{1}{2\sqrt{\beta_t(1-\beta_t)}}
\\
\implies \tau_{\max}(J_{t+1}^{-1}) &\leq \frac{1}{2\sqrt{\beta_t(1-\beta_t)}}.
\end{aligned}
\end{equation*}
\end{proof}

From Lemma \ref{lem-SDhat} i, $\tau_k(J_t) > \sqrt{(1-\beta_t)/\beta_t}$ $\forall k$, $t\geq 1$. Hence we only need to consider the curves of $y=f(x)$ and $y=g(x)$ for $x> \sqrt{(1-\beta_t)/\beta_t}$ to obtain the eigenvalues of $K_{t+1}$ and $H_{t+1}$ for $t \geq 1$, which are strictly increasing from Lemma \ref{lem-fgcurve} i and ii. Hence the eigenvalues of $K_{t+1}$ and $H_{t+1}$ are also arranged in increasing order in \eqref{eg of H and K}. Let the eigenvalues of $J_{t+1}$ be arranged in increasing order as well. Since $K_{t+1} \preceq J_{t+1} \preceq H_{t+1}$, we have 
\begin{align} \label{ineq for all k}
\tau_k (K_{t+1}) \leq \tau_k(J_{t+1})  \leq \tau_k (H_{t+1})
\quad \forall k, t\geq 1.
\end{align}

Next, we will establish upper bounds for $\|\epsilon_t\|$ and $\|\Delta_t\|$. As $\epsilon_{t+1} = \{I_d - (1-\beta_t) J_{t+1}^{-1}\} \epsilon_t$, from the sub-multiplicative property of the spectral norm, we have
\begin{align*}
\| \epsilon_{t+1} \| \leq \|I_d - (1-\beta_t)  J_{t+1}^{-1} \| \|\epsilon_t\|.
\end{align*}
From Lemma \ref{lem-SDhat} v, $J_{t+1}^{-1} \preceq I_d/\{2\sqrt{\beta_t(1-\beta_t)}\}$. Hence 
\begin{align*}
I_d - (1-\beta_t)  J_{t+1}^{-1} \succeq \left(1 - \frac{(1-\beta_t) }{2\sqrt{\beta_t(1-\beta_t)}} \right) I_d \succ 0.
\end{align*}
Thus $\|I_d - (1-\beta_t)  J_{t+1}^{-1}\| = 1 - (1-\beta_t) \tau_{\min}(J_{t+1}^{-1})$ and 
\begin{align} \label{epsilon bd}
\| \epsilon_{t+1} \| &\leq \{ 1 - (1-\beta_t) \tau_{\min}(J_{t+1}^{-1}) \}  \|\epsilon_t\| \quad \forall t \geq 0.
\end{align}
As for $\|\Delta_t\|$, applying the triangle inequalities and Lemma \ref{lem-SDhat} iv,
\begin{align} \label{Delta bd}
\|\Delta_{t+1}\| &= \| J_{t+1} - I_d \| \\
& \leq \| J_{t+1} - K_{t+1} \| + \| K_{t+1} - I_d \|  \nonumber \\
& \leq \| H_{t+1} - K_{t+1} \| + \| K_{t+1} - I_d \| 
\nonumber \\
& = (1-\beta_t) \|\epsilon_t\|^2 + \| K_{t+1} - I_d \|
\quad \forall t \geq 0 
\nonumber \\
& \leq (1-\beta_t) \|\epsilon_t\|^2 + \beta_t \|\Delta_t\| \quad \forall t \geq 1. 
\nonumber
\end{align}

Next, we present Lemma \ref{lem-taumin}, which is useful in bounding $\|\epsilon_t\|$ and proving the convergence of $\| \epsilon_t\|$ and $\|\Delta_t\|$.

\begin{lemma} \label{lem-taumin}
\begin{enumerate}[i.]
\item[]
\item $\tau_{\min}(J_{t+1}^{-1}) \geq \min\{ \tilde{\epsilon_0}^{-1}, \tau_{\min}(J_t^{-1}) \}$ $\forall t \geq 1$.
\item $\tau_{\min}(J_t^{-1}) \geq \xi$, where $\xi = \min\{\tau_{\min}(J_1^{-1}), \tilde{\epsilon_0}^{-1}\}$ $\forall t \geq 1$.
\end{enumerate}
\end{lemma}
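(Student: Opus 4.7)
The plan is to prove part i first by relating $\tau_{\min}(J_{t+1}^{-1})$ to $\tau_{\max}(J_{t+1})$ and then applying Lemma S4 ii, and then to deduce part ii by a straightforward induction on $t$.

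For part i, I would first translate the claim into a statement about maximum eigenvalues. Since $\tau_{\min}(A^{-1}) = 1/\tau_{\max}(A)$ for positive definite $A$, the target inequality is equivalent to $\tau_{\max}(J_{t+1}) \le \max\{\tilde{\epsilon}_0, \tau_{\max}(J_t)\}$ for $t\ge 1$. From Lemma S4 ii together with $J_{t+1} \preceq H_{t+1}$ and the fact that the eigenvalues of both matrices are arranged in increasing order (cf.\ equation (S6) and the discussion that follows), I obtain
\begin{equation*}
\tau_{\max}(J_{t+1}) \le \tau_{\max}(H_{t+1}) \le \max\{\tilde{\epsilon}_t, \tau_{\max}(J_t)\}.
\end{equation*}
To convert the $\tilde{\epsilon}_t$ here into $\tilde{\epsilon}_0$, I note that $\tilde{\epsilon}_t = \{\|\epsilon_t\|^2 + \sqrt{\|\epsilon_t\|^4 + 4}\}/2$ is a strictly increasing function of $\|\epsilon_t\|$, so it suffices to show $\|\epsilon_t\| \le \|\epsilon_0\|$. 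This follows immediately from inequality (S7): since $\tau_{\min}(J_{t+1}^{-1}) \ge 0$ and $0 < 1-\beta_t < 1/2$, the multiplier $1 - (1-\beta_t)\tau_{\min}(J_{t+1}^{-1})$ is at most $1$, so $\|\epsilon_{t+1}\| \le \|\epsilon_t\|$ for all $t\ge 0$, and the sequence $\{\|\epsilon_t\|\}$ is non-increasing. Hence $\tilde{\epsilon}_t \le \tilde{\epsilon}_0$, giving part i after inverting.

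For part ii, I would proceed by induction on $t$. The base case $t=1$ is trivial: $\tau_{\min}(J_1^{-1}) \ge \min\{\tau_{\min}(J_1^{-1}), \tilde{\epsilon}_0^{-1}\} = \xi$ by definition of $\xi$. For the inductive step, assuming $\tau_{\min}(J_t^{-1}) \ge \xi$ for some $t \ge 1$, part i yields
\begin{equation*}
\tau_{\min}(J_{t+1}^{-1}) \ge \min\{\tilde{\epsilon}_0^{-1}, \tau_{\min}(J_t^{-1})\} \ge \min\{\tilde{\epsilon}_0^{-1}, \xi\} = \xi,
\end{equation*}
where the last equality holds because $\xi \le \tilde{\epsilon}_0^{-1}$ by its definition.

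The only mildly delicate point is the derivation of $\|\epsilon_t\| \le \|\epsilon_0\|$: this requires that the multiplier in (S7) remains in $[0,1]$, which in turn needs $\tau_{\min}(J_{t+1}^{-1}) \ge 0$ (automatic from positive definiteness of $J_{t+1}$) and $(1-\beta_t)\tau_{\min}(J_{t+1}^{-1}) \le 1$. The latter is guaranteed by Lemma S4 v, which bounds $\tau_{\max}(J_{t+1}^{-1}) \le 1/(2\sqrt{\beta_t(1-\beta_t)})$, together with the stepsize restriction $0 < \rho < 1/4$, i.e.\ $\beta_t > 1/2$, so $(1-\beta_t)/(2\sqrt{\beta_t(1-\beta_t)}) = \sqrt{(1-\beta_t)/(4\beta_t)} < 1$. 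No further obstacle is expected; the entire argument reuses bounds already established in the preceding lemmas.
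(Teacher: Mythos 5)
Your proposal is correct and follows essentially the same route as the paper: part i combines $J_{t+1}\preceq H_{t+1}$ with Lemma \ref{lem-SDhat} ii and the monotone decrease of $\|\epsilon_t\|$ (hence $\tilde{\epsilon}_t\leq\tilde{\epsilon}_0$) obtained from the contraction bound on $\epsilon_{t+1}$, and part ii is the same induction. The only cosmetic difference is that you phrase the eigenvalue comparison through $\tau_{\max}(J_{t+1})$ and invert once at the end, whereas the paper runs the ``either/or'' comparison for each $\tau_k$ and inverts along the way; the content is identical.
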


\begin{proof}
For i, note that $\| \epsilon_{t+1} \| < \|\epsilon_t\|$ $\forall t \geq 0$ from \eqref{epsilon bd}, which implies $\tilde{\epsilon}_t \leq \tilde{\epsilon}_{t-1} \leq \dots \leq  \tilde{\epsilon}_0$ and $\tilde{\epsilon}_t^{-1} \geq \tilde{\epsilon}_0^{-1}$. From \eqref{ineq for all k} and Lemma \ref{lem-SDhat} ii, for $t \geq 1$,
\begin{gather*}
\tau_k(J_{t+1}) \leq \tau_k(H_{t+1}) \leq \max(\tilde{\epsilon_t}, \tau_k(J_t)) \quad \forall k 
\\
\implies \tau_k(J_{t+1}) \leq \tilde{\epsilon_t} 
\quad \text{or} \quad 
\tau_k(J_{t+1})  \leq \tau_k(J_t) \quad \forall k
\\
\implies \tau_k(J_{t+1}^{-1}) \geq \tilde{\epsilon_t}^{-1}
\quad \text{or} \quad 
\tau_k(J_{t+1}^{-1})  \geq \tau_k(J_t^{-1}) \quad \forall k
\\
\implies \tau_k(J_{t+1}^{-1}) \geq \tilde{\epsilon_0}^{-1}
\quad \text{or} \quad 
\tau_k(J_{t+1}^{-1})  \geq \tau_k(J_t^{-1}) \quad \forall k.
\end{gather*}
Hence $\tau_{\min}(J_{t+1}^{-1}) \geq \min\{ \tilde{\epsilon_0}^{-1}, \tau_{\min}(J_t^{-1}) \}$ $\forall t \geq 1$.

For ii, consider a proof by induction. If $t=1$, then the statement holds trivially. Now, assume $\tau_{\min}(J_t^{-1}) \geq \xi$ for some $t \geq 1$. Then from i,
\begin{align*}
\tau_{\min}(J_{t+1}^{-1}) &\geq \min\{ \tilde{\epsilon_0}^{-1}, \tau_{\min}(J_t^{-1}) \} \\
& \geq \min\{ \tilde{\epsilon_0}^{-1}, \min\{\tau_{\min}(J_1^{-1}), \tilde{\epsilon_0}^{-1}\} \} \\
& \geq \min\{\tau_{\min}(J_1^{-1}), \tilde{\epsilon_0}^{-1}\} = \xi.
\end{align*}
\end{proof}

Now, we will prove the convergence to zero of $\| \epsilon_t\|$ and $\|\Delta_t\|$ by assuming a constant stepsize $\beta_t=\beta$ $\forall t$. Let $\delta = 1-(1-\beta)\xi \in (0,1)$. From \eqref{epsilon bd} and Lemma \ref{lem-taumin} ii, 
\begin{align*}
\| \epsilon_{t+1} \| &\leq \{ 1 - (1-\beta) \xi \}  \|
\epsilon_t\| \quad \forall t \geq 0 \\
& = \delta  \|\epsilon_t\| \\
& \leq ... \\
& \leq \delta^{t+1}  \|\epsilon_0\|.
\end{align*}
Thus $\| \epsilon_{t+1}\| \to 0$ as $t \to \infty$. From the above result and \eqref{Delta bd}, for $t \geq 1$,
\begin{align*}
|\Delta_{t+1}\| &\leq (1-\beta) \|\epsilon_t\|^2 + \beta\|\Delta_t\|
\\
&\leq \beta\|\Delta_t\| + (1-\beta) \delta^{2t} \|\epsilon_0\|^2 \\
& \leq \beta \{ \beta\|\Delta_{t-1}\| + (1-\beta) \delta^{2(t-1)} \|\epsilon_0\|^2  \}\\
& \quad + (1-\beta) \delta^{2t} \|\epsilon_0\|^2 \\
& \leq ... \\
& \leq \beta^t \| \Delta_1 \| + (1-\beta) \|\epsilon_0\|^2 \sum_{j=0}^{t-1} \beta^j \delta^{2(t-j)} 
\\ 
&= \beta^t \| \Delta_1 \| + \frac{\delta^2(1-\beta) \|\epsilon_0\|^2}{(\delta^2-\beta)} (\delta^{2t} -\beta^t).
\end{align*}
Thus $\|\Delta_{t+1}\| \to 0$ as $t \to \infty$.

\section{Batch approximated objective under mean-field}
In this section, we provide the proofs of Lemma \ref{Lem2} and \ref{Lem3} and Theorem \ref{thm2}.

\subsection{Proof of Lemma \ref{Lem2}}
Differentiating $\hat{S}_{q}(\lambda)$ and $\hat{F}_{q}(\lambda)$ with respect to $\mu$ and $\Sigma_{ii}$, we obtain 
\begin{align*}
\nabla_\mu \hat{S}_{q}(\lambda) &= 2\Sigma^{-1} (\mu - \overline{\theta}) - 2 \overline{g}_h, \\
\nabla_\mu \hat{F}_{q}(\lambda) &= \Sigma^{-1} \nabla_\mu \hat{S}_{q}(\lambda), \\
\nabla_{\Sigma_{ii}} \hat{S}_{q}(\lambda) &= V_{ii} -U_{ii}\Sigma_{ii}^{-2},\\
\nabla_{\Sigma_{ii}} \hat{F}_{q}(\lambda) &= - 2\Sigma_{ii}^{-2} (U_{ii}\Sigma_{ii}^{-1} + W_{ii}).
\end{align*}
Setting these derivatives to zero yields
\begin{equation}\label{mini-mu}
\begin{aligned}
\mu_i^{\hat{S}} &= \overline{\theta}_i + \Sigma_{ii}^{\hat{S}} \overline{g}_{h,i}, 
&V_{ii}(\Sigma^{\hat{S}}_{ii})^2 =  C_{\theta, ii} + (\mu_i^{\hat{S}} -  \overline{\theta}_i )^2, \\
\mu_i^{\hat{F}} &= \overline{\theta}_i + \Sigma_{ii}^{\hat{F}} \overline{g}_{h,i}, 
&\Sigma^{\hat{F}}_{ii} = - \frac{C_{\theta, ii} + (\mu_i^{\hat{F}} -  \overline{\theta}_i )^2}{ C_{\theta g, ii} - \overline{g}_{h, i}(\mu_i^{\hat{F}} -  \overline{\theta}_i )}.
\end{aligned}
\end{equation}
Solving these equations simultaneously, we obtain
\begin{equation*}
\begin{aligned}
V_{ii} (\Sigma^{\hat{S}}_{ii})^2 &= C_{\theta,ii} + (\Sigma^{\hat{S}}_{ii})^2 \overline{g}_{h,i}^2 \implies \Sigma^{\hat{S}}_{ii} = \sqrt{C_{\theta,ii}/C_{g, ii}}, \\
\Sigma^{\hat{F}}_{ii} &= - \{C_{\theta, ii} + (\Sigma_{ii}^{\hat{F}})^2 \overline{g}_{h,i}^2 \} / \{ C_{\theta g, ii} - \Sigma_{ii}^{\hat{F}} \overline{g}_{h, i}^2 \}\\
&\quad \implies \Sigma^{\hat{F}}_{ii} =- C_{\theta,ii} / C_{\theta g,ii}.
\end{aligned}
\end{equation*}
Plugging these values into \eqref{mini-mu} yields corresponding values for $\mu_i^{\hat{S}}$ and $\mu_i^{\hat{F}}$. 

\subsection{Proof of Lemma \ref{Lem3}}
The first two results follow directly from the law of large numbers. For the target, $g_h(\theta_i) = -\Lambda(\theta_i - \nu)$. Thus
$\overline{g}_h = - \Lambda(\overline{\theta} - \nu)$ and $g_h(\theta_i)-\overline{g}_h = -\Lambda(\theta_i-\overline{\theta})$. 
\begin{align*}
\therefore \quad C_g &= \frac{1}{B} \sum_{i=1}^B \Lambda(\theta_i - \overline{\theta})(\theta_i - \overline{\theta})^\top\Lambda=\Lambda C_\theta\Lambda, \\
C_{\theta g} &= -\frac{1}{B} \sum_{i=1}^B(\theta_i-\overline{\theta})(\theta_i-\overline{\theta})^\top\Lambda=-C_\theta\Lambda.
\end{align*}
By the continuous mapping theorem \citep{Durrett2019}, $\overline{g}_h  \xrightarrow{\text{a.s.}} \Lambda(\nu-\hat{\mu})$, $C_g  \xrightarrow{\text{a.s.}} \Lambda\hat{\Sigma}\Lambda$ and $C_{\theta g}  \xrightarrow{\text{a.s.}} -\hat{\Sigma}\Lambda$.

\subsection{Proof of Theorem \ref{thm2}}
Results can be obtained by applying the continuous mapping theorem on Lemma \ref{Lem2} and using the results in Lemma \ref{Lem3}. Note that  
$(\Lambda\hat{\Sigma}\Lambda)_{ii} = \sum_{j=1}^d \hat{\Sigma}_{jj} \Lambda_{ij}^2$ and $(\hat{\Sigma}\Lambda)_{ii} = \hat{\Sigma}_{ii} \Lambda_{ii}$.

\section{Gradients for logistic regression}
The log joint density of the model, gradient and Hessian are given by
\begin{align*}
\log h(\theta) &= y^\top X \theta - \sum_{i=1}^n \log\{ 1 + \exp(X_i^\top \theta)\}  \\
&\quad  -\tfrac{d}{2} \log (2\pi \sigma_0^2) - \theta^\top\theta/(2\sigma_0^2), \\
\nabla_\theta \log h(\theta) &= X^\top (y - w)  -\theta/\sigma_0^2, \\ \nabla_\theta^2 \log h(\theta) &= -X^\top WX -I_d/\sigma_0^2,
\end{align*}
where $w = (w_1, \dots, w_n)^\top$, $w_i = \{1 + \exp(-X_i^\top \theta)\}^{-1}$ for $i=1, \dots, n$, $W$ is an $n \times n$ diagonal matrix with diagonal entries $w_i(1-w_i)$ and $X = (X_1, \dots, X_n)^\top$.

\section{Gradients for GLMMs}
The log joint density of the model can be written as 
\begin{align*}
\log h(\theta) & = \sum_{i=1}^n \sum_{j=1}^{n_i} \log p(y_{ij}|\beta, b_i) + \sum_{i=1}^n \log p(b_i|\zeta) \\
&\quad + \log p(\beta) + \log p(\zeta)  \\
& = \sum_{i,j} \{y_{ij}\eta_{ij}- A(\eta_{ij})\} + n \log|W| \\
&\quad -\frac{1}{2} \sum_{i=1}^n b_i^\top  WW^{\top}b_i - \frac{\beta^\top \beta}{2\sigma_\beta^2} -\frac{\zeta^\top \zeta}{2\sigma_\zeta^2} +C,
\end{align*}
where $A(\cdot)$ is the log-partition function and $C$ is a constant independent of $\theta$. For instance, $A(x)=\log(1+e^x)$ for Bernoulli-distributed binary responses and $A(x)=\exp(x)$ for Poisson-distributed count responses.

Let $X_i=(X_{i1},\dots,X_{in_i})^\top$ and $Z_i=(Z_{i1},\dots,Z_{in_i})^\top$ be design matrices for the $i$th subject. Recall that $b_i\sim \N(0,G^{-1})$, $G=WW^\top$, $W^*$ is such that $W_{ii}^* = \log(W_{ii})$ and $W_{ij}^* = W_{ij}$ if $i \neq j$, and $\zeta = \text{vech}(W^*)$. Let $J^W$ be an $r \times r$ matrix with diagonal given by $\diag(W)$ and all off-diagonal entries being 1, and $D^W = \diag\{ \vech(J^W) \}$. Then $d \vech(W) = D^W d\vech(W^*)$. We have 
\begin{multline*}
\nabla_\theta \log h(\theta) = [\nabla_{b_1} \log h(\theta)^\top, \dots, \nabla_{b_n} \log h(\theta)^\top, \\
\nabla_{\beta} \log h(\theta)^\top, \nabla_{\zeta} \log h(\theta)^\top]^\top,
\end{multline*}
where
\begin{align*}\label{grad_GLMM}
\nabla_{b_i} \log h(\theta) &= \sum_{j=1}^{n_i} \{y_{ij} - A'(\eta_{ij})  \}Z_{ij} - G b_i,\\
\nabla_{\beta} \log h(\theta) &= \sum_{i=1}^n \sum_{j=1}^{n_i} \{y_{ij} -A'(\eta_{ij}) \}X_{ij} -\frac{\beta}{\sigma_{\beta}^2}, \\
\nabla_{\zeta} \log h(\theta) &=  -D^W \vech(\widetilde{W}) +n\vech(I_r) -\frac{\zeta}{\sigma_\zeta^2}, 
\end{align*}
and $\widetilde{W}=\sum_{i=1}^{n} b_i b_i^{\top}W$.

Let $H_{\theta_i,\theta_j} = \nabla^2_{\theta_i,\theta_j} \log h(\theta)$. The Hessian takes the block form
\begin{equation*}\label{Has_mat_glmm}
\begin{aligned}
H = \begin{bmatrix}
H_{b_1, b_1}  & \ldots & 0 & H_{b_1,\theta_G} \\
\vdots & \ddots & \vdots & \vdots \\
0 & \ldots & H_{b_n, b_n} & H_{b_n,\theta_G}\\
H_{\theta_G,b_1} & \ldots & H_{\theta_G,b_n} & H_{\theta_G} 
\end{bmatrix}.
\end{aligned}
\end{equation*}
We have
\begin{align*}
\nabla_{b_i,\eta}^2 \log h(\theta) &=\begin{bmatrix}
\nabla_{b_i,\beta}^2 \log h(\theta)\\
\nabla_{b_i,\zeta}^2 \log h(\theta)\\
\end{bmatrix}, \\
\nabla_{\eta}^2 \log h(\theta) &=\begin{bmatrix}
\nabla_{\beta}^2 \log h(\theta) & 0\\
0 & \nabla_{\zeta}^2 \log h(\theta)\\
\end{bmatrix}.
\end{align*}

Let $B_i = \diag([A''(\eta_{i1}),\dots,A''(\eta_{in_i})]^\top)$. The second order derivatives of $\log h(\theta)$ are
\begin{align*}%\label{2grad_GLMM}
\nabla_{b_i}^2 \log h(\theta) &=-(Z_i^{\top}B_iZ_i+G),\text{ for },\\
\nabla_{\beta}^2 \log h(\theta) &= - \left( \sum_{i=1}^{n} X_i^{\top} B_i X_i + \frac{1}{\sigma^2_{\beta}} I_p \right),\\
\nabla_{\zeta}^2 \log h(\theta) &= -S -D^WL \sum_{i=1}^{n} (I_r\otimes b_ib_i^{\top}) L^\top D^W \\
&\quad - \frac{1}{\sigma^2_{\zeta}}I_{r(r+1)/2}, \\
\nabla_{\beta, b_i}^2 \log h(\theta) &=- X_i^\top B_i Z_i, \\
\nabla_{\zeta, b_i}^2 \log h(\theta) &= -D^WL(W^{\top}b_i\otimes I_r + W^{\top}\otimes b_i),
\end{align*}
where $S = \diag[ \vech\{ \dg(W) \dg(\widetilde{W}) \} ]$ and $\dg(A)$ is a copy of $A$ with all off-diagonal entries set to 0. 

The derivations for $\nabla_{b_i}^2 \log h(\theta)$, $\nabla_{\beta}^2 \log h(\theta)$ and $\nabla_{b_i,\beta}^2 \log h(\theta)$ are straightforward. More 
details for $\nabla_{\zeta^2}^2 \log h(\theta)$ and $\nabla_{b_i,\zeta}^2 \log h(\theta)$ are given below.
Differentiating $\nabla_{\zeta} \log h(\theta)$ w.r.t. $b_i$, we have
\begin{align*}
d \nabla_{\zeta} & \log h(\theta) =-D^W\sum_{i=1}^{n} \vech \{ (d b_i) b_i^{\top}W+b_i (db_i^{\top}) W \} \\
&=-D^WL\sum_{i=1}^{n}[(W^{\top}b_i\otimes I_r)+(W^{\top}\otimes b_i)]d  b_i.
\end{align*}
Differentiating $\nabla_{\zeta} \log h(\theta)$ w.r.t. $\zeta$, we have
\begin{align*}
& d \nabla_{\zeta} \log h(\theta)= - (d D^W) \sum_{i=1}^{n}\vech(b_ib_i^{\top}W) \\
&\quad - D^W\sum_{i=1}^{n}\vech \{b_ib_i^{\top} (d W) \} - \frac{1}{\sigma^2_{\zeta}} d \zeta \\
&= -D^WL\sum_{i=1}^{n} (I_r \otimes b_i b_i^{\top}) d \vec(W) \\
&\quad -Sd \zeta - \frac{1}{\sigma^2_{\zeta}} d \zeta \\
&=-D^WL\sum_{i=1}^{n} (I_r \otimes b_i b_i^{\top}) L^T D^W d \zeta \\
&\quad -Sd \zeta - \frac{1}{\sigma^2_{\zeta}} d \zeta.
\end{align*}

\section{Gradients for stochastic volatility model}
For this model, the log joint density is
\begin{align*}
\log h(\theta) &= -\frac{n\lambda}{2} - \frac{\sigma}{2}\sum_{t=1}^{n}b_t 
- \frac{1}{2}\sum_{t=1}^{n}y_t^2 \exp\{-\lambda - \sigma b_t \}\\
&\quad- \frac{1}{2}\sum_{t=2}^{n}(b_t-\phi b_{t-1})^2 +\frac{1}{2}\log (1-\phi^2) \\
&\quad  -\frac{1}{2}b_1^2(1-\phi^2) - \frac{\alpha^2}{2\sigma_0^2} - \frac{\lambda^2}{2\sigma_0^2} - \frac{\psi^2}{2\sigma_0^2} + C,
\end{align*}
where $C$ is a constant independent of $\theta$. The gradients of $ \log h(\theta)$ are,
\begin{align*}
\nabla_{b_1} \log h(\theta) &= - (1-\phi^2) b_1 + \phi (b_2 - \phi b_1)  -\frac{e^\alpha}{2} \\
&\quad +  \frac{ e^\alpha y_1^2}{2} \exp(-\lambda - e^\alpha b_1), \\
\nabla_{b_t} \log h(\theta) &= \phi (b_{t+1} - \phi b_t) - (b_t - \phi b_{t-1}) -\frac{e^\alpha}{2}  \\
&\quad+  \frac{ e^\alpha}{2} y_t^2\exp(-\lambda - e^\alpha b_t) \;\; \text{for} \; 1 < t < n,  \\
\nabla_{b_n} \log h(\theta) &= - (b_n - \phi b_{n-1}) -\frac{e^\alpha}{2}  \\
&\quad +  \frac{ e^\alpha}{2} y_n^2\exp(-\lambda - e^\alpha b_n), \\
\nabla_{\alpha} \log h(\theta) &=  \frac{1}{2} \sum_{t=1}^n y_t^ 2b_t \exp(\alpha -\lambda - e^\alpha b_t)  \\
&\quad - \frac{e^\alpha}{2} \sum_{t=1}^n b_t  - \frac{\alpha}{\sigma_0^2} , \\
\nabla_{\lambda} \log h(\theta) &=  -\frac{n}{2}  +  \frac{1}{2} \sum_{t=1}^n y_t^2\exp(-\lambda - e^\alpha b_t)  - \frac{\lambda}{\sigma_0^2}, \\
\nabla_{\psi} \log h(\theta) &= \Big\{ \phi b_1^2- \frac{\phi}{(1-\phi^2)} +  \sum_{t=1}^{n-1} (b_{t+1} - \phi b_t)b_t  \Big\}\\
&\quad \times\frac{e^\psi}{(e^\psi + 1)^2}  - \frac{\psi}{\sigma_0^2}. \\
\end{align*}

The Hessian has a sparse block structure,
\begin{equation*}
\nabla_\theta^2 \log h(\theta) =
\begin{bmatrix}
H_{b_1, b_1} & H_{b_1,b_2} & \ldots & 0  & H_{b_1,\theta_G} \\
H_{b_2,b_1} & H_{b_2, b_2} & \ldots & 0 & H_{b_2,\theta_G} \\
\vdots & \vdots & \ddots & \vdots & \vdots \\
0 &  0 &  \dots & H_{b_n, b_n} & H_{b_n,\theta_G} \\
H_{\theta_G,b_1} & H_{\theta_G,b_2} &\ldots & H_{\theta_G,b_n} & H_{\theta_G, \theta_G} \\
\end{bmatrix}.
\end{equation*}
The second order derivatives of $\log h(\theta)$ are,
\begin{align*}
\nabla_{b_1}^2\log h(\theta) &= -1-\frac{y_1^2}{2} \exp\{2\alpha -\lambda-e^{\alpha}b_1\},\\
\nabla_{b_t}^2\log h(\theta) &= -\phi^2 - 1 - y_t^2 \exp\{2\alpha-\lambda-e^{\alpha}b_t\}/2,\\
\nabla_{b_n}^2\log h(\theta) &= -1 - y_n^2 \exp\{2\alpha -\lambda-e^{\alpha}b_n\}/2,\\
\nabla_{b_i,b_j}^2\log h(\theta) &= \phi \mathbbm{1}_{|i-j|=1}, \\
\nabla_{b_t,\alpha}^2\log h(\theta) &= \frac{y_t^2}{2} \exp\{\alpha-\lambda-e^{\alpha}b_t\}(1-b_te^{\alpha})-\frac{e^{\alpha}}{2},
\\
\nabla_{b_t,\lambda}^2 \log h(\theta)&= - y_t^2 \exp\{\alpha -\lambda-e^{\alpha}b_t\}/2 
\\
\nabla_{b_1,\psi}^2\log h(\theta) &= \frac{b_2 e^\psi}{(e^\psi + 1)^2},
\\
\nabla_{b_t,\psi}^2\log h(\theta) &= \frac{e^\psi (b_{t+1} - 2\phi b_t + b_{t-1})}{(e^\psi + 1)^2},
\\
\nabla_{b_n,\psi}^2\log h(\theta)&= \frac{e^\psi b_{n-1}}{(e^\psi + 1)^2},
\\
\nabla_{\alpha}^2 \log h(\theta) &= \frac{1}{2} \sum_{t=1}^{n} y_t^2 b_t \exp\{\alpha-\lambda-e^{\alpha}b_t\}(1-e^{\alpha}b_t)\\
&\quad -\frac{e^{\alpha}}{2}\sum_{t=1}^{n}b_t-\frac{1}{\sigma_0^2},
\\
\nabla_{\lambda}^2\log h(\theta) &= -\frac{1}{2}\sum_{t=1}^{n}y_t^2\exp(-\lambda-e^{\alpha}b_t)-\frac{1}{\sigma^2_0},
\\
\nabla_{\psi}^2\log h(\theta) &= \bigg\{ b_1^2-\sum_{t=1}^{n-1}b_t^2-\frac{1+\phi^2}{(1-\phi^2)^2} \bigg\} \frac{e^{2\psi}}{(e^\psi + 1)^4} 
\\
& + \bigg\{ \phi b_1^2- \frac{\phi}{(1-\phi^2)} +  \sum_{t=1}^{n-1} (b_{t+1} - \phi b_t)b_t  \bigg\}\\
& \times\frac{e^{\psi}(1 - e^{\psi})}{(e^{\psi}+1)^3}- \frac{1}{\sigma_0^2},
\\
\nabla_{\alpha,\lambda}^2\log h(\theta) &= -\frac{1}{2}\sum_{t=1}^{n} y_t^2 b_t \exp\{\alpha - \lambda-e^{\alpha}b_t\},
\\
\nabla_{\psi, \lambda}^2 \log h(\theta) &= \nabla_{\psi, \alpha}^2 \log h(\theta) = 0.
\end{align*}

\end{document}